%% file: beyond_kappa.tex
\UseRawInputEncoding
\documentclass[11pt]{article}

\usepackage{amssymb,amsmath,amsthm,amsfonts,dsfont}
\usepackage[numbers,comma,sort&compress]{natbib}
\usepackage[colorlinks,hypertexnames=false]{hyperref}
\hypersetup{
	pdfstartview={FitH},
	pdfnewwindow=true,
	colorlinks=true,
	linkcolor=blue,
	citecolor=blue,
	filecolor=blue,
	urlcolor=blue}
\usepackage[capitalise]{cleveref}
\usepackage[all]{hypcap}
\usepackage{url}
\usepackage{graphicx}
\usepackage[font=small]{caption}
\usepackage[subrefformat=parens,labelformat=parens]{subcaption}
\usepackage{float}
\usepackage{footnote}
\usepackage{enumitem}
\usepackage[margin=1in]{geometry}

\newcommand{\abs}[1]{\left\lvert#1\right\rvert}
\newcommand{\norm}[1]{\left\lVert#1\right\rVert}
\DeclareMathOperator{\poly}{poly}
\DeclareMathOperator{\polylog}{polylog}

\newtheorem{theorem}{Theorem}
\newtheorem{definition}{Definition}
\newtheorem{lemma}{Lemma}
\newtheorem{proposition}[lemma]{Proposition}
\newtheorem{corollary}[lemma]{Corollary}
\theoremstyle{remark}

\theoremstyle{plain}
\newtheorem*{definition*}{Definition} 

\newcommand{\eq}[1]{\cref{eq:#1}}
\newcommand{\thm}[1]{\hyperref[thm:#1]{Theorem~\ref*{thm:#1}}}
\newcommand{\defn}[1]{\hyperref[defn:#1]{Definition~\ref*{defn:#1}}}
\newcommand{\lem}[1]{\hyperref[lem:#1]{Lemma~\ref*{lem:#1}}}
\newcommand{\prop}[1]{\hyperref[prop:#1]{Proposition~\ref*{prop:#1}}}
\newcommand{\fig}[1]{\hyperref[fig:#1]{Figure~\ref*{fig:#1}}}
\newcommand{\tab}[1]{\hyperref[tab:#1]{Table~\ref*{tab:#1}}}
\renewcommand{\sec}[1]{\hyperref[sec:#1]{Section~\ref*{sec:#1}}}
\newcommand{\append}[1]{\hyperref[append:#1]{Appendix~\ref*{append:#1}}}
\newcommand{\cor}[1]{\hyperref[cor:#1]{Corollary~\ref*{cor:#1}}}
\newcommand{\obs}[1]{\hyperref[obs:#1]{Observation~\ref*{obs:#1}}}

\newcommand{\ket}[1]{|#1\rangle}
\newcommand{\bra}[1]{\langle#1|}
\newcommand{\ketbra}[2]{\ket{#1}\!\bra{#2}}
\newcommand{\braket}[2]{\langle #1|#2\rangle}

\usepackage{mathtools}

\DeclareFontFamily{U}{matha}{\hyphenchar\font45}
\DeclareFontShape{U}{matha}{m}{n}{
	<5> <6> <7> <8> <9> <10> gen * matha
	<10.95> matha10 <12> <14.4> <17.28> <20.74> <24.88> matha12
}{}
\DeclareSymbolFont{matha}{U}{matha}{m}{n}
\DeclareFontFamily{U}{mathx}{\hyphenchar\font45}
\DeclareFontShape{U}{mathx}{m}{n}{
	<5> <6> <7> <8> <9> <10>
	<10.95> <12> <14.4> <17.28> <20.74> <24.88>
	mathx10
}{}
\DeclareSymbolFont{mathx}{U}{mathx}{m}{n}
\DeclareMathSymbol{\obot}         {2}{matha}{"6B}
\DeclareMathSymbol{\bigobot}       {1}{mathx}{"CB}

\usepackage{nicematrix}
\usepackage{tikz}

\makeatletter
\def\newmaketag{%
  \def\maketag@@@##1{\hbox{\m@th\normalfont\normalsize##1}}%
  }
\makeatother

\usepackage[hyperpageref]{backref}
\renewcommand*{\backrefalt}[4]{%
\ifcase #1 %
No citations.%
\or
(Cited on page #2).%
\else
(Cited on pages #2).%
\fi
}

\usepackage{etoolbox}
\makeatletter
\patchcmd\NAT@citexnum{\let\NAT@last@num\NAT@num}{\MakeLinkTarget[cite]{}\Hy@backout{\@citeb\@extra@b@citeb}\let\NAT@last@num\NAT@num}{}{\fail}
\makeatother

\usepackage{etoolbox}
\apptocmd{\sloppy}{\hbadness 10000\relax}{}{}

\usepackage{authblk}

\newcommand{\keff}{\kappa_{\text{eff}}}

\usepackage{multirow} 
\usepackage{makecell}

\makeatletter
\newcommand{\thickhline}{%
    \noalign {\ifnum 0=`}\fi \hrule height 1.5pt
    \futurelet \reserved@a \@xhline
}
\newcolumntype{"}{@{\hskip\tabcolsep\vrule width 1pt\hskip\tabcolsep}}
\makeatother

\title{Faster quantum linear system solver beyond the condition number}
\author[1]{Alexander M.\ Dalzell}
\author[2]{Jianqiang Li}
\author[1]{Yuan Su}
\affil[1]{AWS Center for Quantum Computing, Pasadena, CA 91106, USA}
\affil[2]{Department of Computer Science, Rice University, Houston, TX 77005, USA}

\date{}

\begin{document}
\maketitle

\begin{abstract}
The spectral condition number is a widely adopted measure of worst-case cost for quantum linear system solvers. Yet it can significantly overestimate the actual runtime for a typical problem instance.

We present two quantum algorithms that produce the normalized solution of linear system $Ax=\ket{b}$ to accuracy $\epsilon$ with complexity independent of the condition number $\kappa=\norm{A}\norm{A^{-1}}$. We focus on the standard input model where the coefficient matrix $A$ is accessed through a block encoding oracle and the initial vector $\ket{b}$ is prepared by a unitary. But we also introduce an affine dilation model that encodes $A$ and $\ket{b}$ jointly, allowing further refinements of the query complexity.

Our truncation-based solver makes an optimal number of queries to the preparation of $\ket{b}$ and $\operatorname{\mathbf{O}}\left(\keff\polylog\left(\frac{\keff}{\epsilon}\right)\right)$ queries to the block encoding of $A$. We prove a family of upper bounds on the effective condition number $\keff$, including $\keff\leq\frac{\norm{(A^\dagger A)^{-t/2}x}^{1/t}}
{\norm{x}^{1/t}\epsilon^{1/t}}$ for positive even integer $t$ and $\keff\leq\frac{\norm{A^{-1\dagger}(A^\dagger A)^{-(t-1)/2}x}^{1/t}}
{\norm{x}^{1/t}\epsilon^{1/t}}$ for positive odd $t$, going beyond the naive estimate in terms of $\kappa$.

Our filtering-based solver is extremely simple with a favorable runtime prefactor. In particular, the solver has query complexity $6\frac{\norm{A^{-1\dagger}x}}{\norm{x}\epsilon}\ln\left(\frac{1}{\epsilon}\right)$ to leading order when the solution norm $\norm{x}$ is known. We then present a similarly simple solution norm estimator with the same asymptotic cost up to logarithmic factors.

Our quantum linear system solvers thus substantially improve a recent algorithm of Li, which has runtime $\operatorname{\mathbf{O}}\left(\frac{\norm{A^{-1\dagger}x}}{\norm{x}\epsilon^2}\right)$ if $\norm{x}$ is known and $\operatorname{\mathbf{O}}\left(\frac{\norm{A^{-1\dagger}x}\norm{x}^2}{\epsilon^2}\right)$ otherwise.
This enables faster quantum linear system solving beyond the condition number, bringing prohibitively ill-conditioned systems closer to feasibility.
\end{abstract}

\newpage
{
	\thispagestyle{empty}
	\clearpage\tableofcontents
	\thispagestyle{empty}
}
\newpage

\section{Introduction}
\label{sec:intro}
\input{intro.tex}

\section{Quantum linear system problem and input models}
\label{sec:input}
\input{input.tex}

\section{Beyond-\texorpdfstring{$\kappa$}{k} solver based on effective truncation of linear system}
\label{sec:trunc}
\input{trunc.tex}

\section{Beyond-\texorpdfstring{$\kappa$}{k} solver based on filtering with effective gap}
\label{sec:gap}
\input{gap.tex}

\section{Discussion}
\label{sec:discuss}
\input{discuss.tex}

\section*{Acknowledgements}

A.M.D.~and Y.S.~thank Fernando Brand\~ao, Oskar Painter, James Hamilton, Nafea Bshara, Peter DeSantis, and Andy Jassy for their involvement and support of the research activities at the AWS Center for Quantum Computing. J.L.~was supported by NSF Award FET-2243659 and NSF Career Award FET-2339116, Welch Foundation Grant no. A22-0307, a Microsoft Research Award, an Amazon Research Award, and a Ken Kennedy Research Cluster award (Rice K2I), in part through seed funding from the Ken Kennedy Institute at Rice University, the Rice CS Department, and the George R. Brown School of Engineering and Computing, by Rice University and the Department of Computer Science at Rice University, by the Ken Kennedy Institute and Rice Quantum Initiative, which is part of the Smalley-Curl Institute.

\newpage
\appendix
\section{Analysis of Li's beyond-\texorpdfstring{$\kappa$}{k} solver in the standard input model}
\label{append:li}
\input{li.tex}

\section{Analysis of beyond-\texorpdfstring{$\kappa$}{k} solvers in the affine dilation model}
\label{append:affine}
\input{affine.tex}

\section{Generalized truncation property}
\label{append:trunc_general}
\input{trunc_general.tex}

\section{Constant-prefactor optimization for filtering with effective gap}
\label{append:lambertw}
\input{lambertw.tex}

\clearpage
\bibliographystyle{myhamsplain2}
\bibliography{beyond_kappa.bib}

\end{document}

%% file: intro.tex
\subsection{Quantum linear system solvers}
\label{sec:intro_solver}
Solving systems of linear equations constitutes a foundational task across scientific and engineering disciplines. Classical solvers quickly become computationally prohibitive as the number of constraints and unknown variables increases. This intractability led Harrow, Hassidim, and Lloyd to develop the first quantum linear system algorithm in the seminal work~\cite{Harrow2009}, where the BQP-completeness of matrix inversion was established. Remarkably, the query complexity of quantum linear system solvers does not depend explicitly on the problem dimensionality, enabling a potential exponential speedup over classical solvers. It remains challenging to identify end-to-end applications that fully realize this exponential quantum speedup; however, recent studies point to physics-inspired problems such as estimating many-body Green's functions~\cite{2021Yupreconditioned} as promising avenues.

Beyond direct applications, quantum linear system solvers also serve as a key primitive for constructing more advanced quantum algorithms. This was demonstrated in previous work on solving differential equations~\cite{Berry2017Differential,Krovi2023improvedquantum,BerryCosta22}, where a system of linear equations was introduced whose inverse encodes the matrix exponential via a truncated Taylor series. More generally, quantum linear system solvers have found applications in transforming eigenvalues of non-normal operators, achieved through rational generating functions for Faber polynomials~\cite{QEVP}, contour integration of resolvents~\cite{AlaseKaruvade24,Takahira2020QuantumCauchy,takahira2021contour}, and rational approximation~\cite{wang2026sign}.

A critical parameter determining the runtime of quantum linear system solvers is the \emph{spectral condition number}. Specifically, consider the linear system $Ax=\ket{b}$, where $A$ is the \emph{coefficient matrix} with spectral norm $\norm{A}\leq1$ accessed through a \emph{block encoding} oracle $O_A$, and $\ket{b}$ is the \emph{initial vector/state} with unit Euclidean norm $\norm{\ket{b}}=1$ prepared by a unitary $O_b$ from a standard reference state. If $A$ is invertible, then the solution to the linear system is given by
\begin{equation}
    x=A^{-1}\ket{b},
\end{equation}
and we let the condition number be the largest norm of the solution vector maximized over all possible initial states
\begin{equation}
    \kappa=\max_{\norm{\ket{b}}=1}\norm{A^{-1}\ket{b}}=\norm{A^{-1}}.
\end{equation}
In practice, the condition number is rarely known exactly; we instead work with an upper bound $\kappa \geq \norm{A^{-1}}$.
To produce the normalized state $\ket{x}=\frac{x}{\norm{x}}$ with accuracy $\epsilon$, the Harrow--Hassidim--Lloyd solver has query complexity $\operatorname{\mathbf{O}}\left(\poly\left(\kappa,\frac{1}{\epsilon}\right)\right)$. This was improved through a series of works, ultimately yielding solvers with query complexity $\operatorname{\mathbf{O}}\left(\kappa\log\left(\frac{1}{\epsilon}\right)\right)$~\cite{Costa2021linearsystems,Dalzell2024shortcut,cunningham_et_al,Low2026quantumlinearsystem}. This tradeoff is optimal, which is established by an unpublished lower bound of Harrow and Kothari and confirmed recently in~\cite[Appendix A]{Costa2023constant} and \cite{mori2026sparsitydependentcomplexitylowerbound}.

All known solvers attaining the optimal scaling treat the coefficient matrix and the initial vector on equal footing, and thus make the same asymptotic number of queries to $O_A$ and $O_b$:
\begin{equation}
    \operatorname{\mathbf{O}}\left(\kappa\log\left(\frac{1}{\epsilon}\right)\operatorname{\mathbf{Cost}}\left(O_A\right)
    +\kappa\log\left(\frac{1}{\epsilon}\right)\operatorname{\mathbf{Cost}}\left(O_b\right)\right).
\end{equation}
Alternatively, one may solve the quantum linear system problem using \emph{Variable Time Amplitude Amplification} (VTAA)~\cite{Ambainis2012VTAA} incurring a cost of~\cite{Low2026quantumlinearsystem}
\begin{equation}
    \operatorname{\mathbf{O}}\left(\kappa\log\left(\frac{\kappa}{\norm{x}}\right)
    \left(\log\log\left(\frac{\kappa}{\norm{x}}\right)+\log\left(\frac{1}{\epsilon}\right)\right)\operatorname{\mathbf{Cost}}\left(O_A\right)
    +\frac{\kappa}{\norm{x}}\operatorname{\mathbf{Cost}}\left(O_b\right)\right).
\end{equation}
Here, the solution norm satisfies $1\leq\norm{x}\leq\kappa$, so the scaling for the state preparation cost $\frac{\kappa}{\norm{x}}$ can be much better than the worst-case scaling $\kappa\log\left(\frac{1}{\epsilon}\right)$ in terms of the condition number. Moreover, this cost of $O_b$ can be shown to be optimal through a reduction from the quantum search problem. That said, the number of queries to $O_A$ is still constrained by the condition number $\kappa$---the block encoding cost is actually logarithmic factors away from the optimal tradeoff and, even with the deterministic amplification schedule~\cite{Low2026quantumlinearsystem}, VTAA-based solvers are still considerably more complex to implement than those from~\cite{Costa2021linearsystems,Dalzell2024shortcut,cunningham_et_al,Low2026quantumlinearsystem}.

It is therefore natural to ask whether the condition number barrier can be lifted, especially with respect to the block encoding cost. This question was raised by Harrow, Hassidim, and Lloyd in~\cite[Section IV]{Harrow2009} and answered affirmatively in the recent work of Li~\cite{li2025new}. Li's original solver handles only sparse linear systems, though it is fairly straightforward to generalize to the block encoding model---we sketch this generalization in~\append{li}. The generalized solver then has the query cost
\begin{equation}
    \operatorname{\mathbf{O}}\left(\frac{\norm{A^{-1\dagger}x}}{\norm{x}\epsilon^2}\operatorname{\mathbf{Cost}}\left(O_A\right)
    +\frac{\norm{A^{-1\dagger}x}}{\norm{x}\epsilon^2}\operatorname{\mathbf{Cost}}\left(O_b\right)\right).
\end{equation}
Achieving the above scaling requires the nontrivial assumption that the solution norm $\norm{x}$ is known up to a constant multiplicative factor. If $\norm{x}$ is unknown, then the complexity worsens to
\begin{equation}
    \operatorname{\mathbf{O}}\left(\frac{\norm{A^{-1\dagger}x}\norm{x}^2}{\epsilon^2}\operatorname{\mathbf{Cost}}\left(O_A\right)
    +\frac{\norm{A^{-1\dagger}x}\norm{x}^2}{\epsilon^2}\operatorname{\mathbf{Cost}}\left(O_b\right)\right).
\end{equation}
Regardless, the cost now depends on vector norms as opposed to the condition number $\kappa$, satisfying the inequality $\norm{x}^2\leq\norm{A^{-1\dagger}x}\leq\kappa\norm{x}$. We shall henceforth call these the \emph{beyond-$\kappa$ solvers}.

Li demonstrated that for applications such as solving multivariate polynomial systems, these vector norm inequalities are far from saturated, and beyond-$\kappa$ solvers remain efficient even in the presence of exponentially large condition numbers. This opens the door to solving ill-conditioned linear systems on quantum computers, a task out of reach for all prior approaches. However, at the opposite extreme where $\norm{A^{-1\dagger}x}\approx\kappa\norm{x}\approx\kappa^2$, the query complexity of Li's solver is $\operatorname{\mathbf{O}}\left(\frac{\kappa}{\epsilon^2}\right)$ with a known $\norm{x}$ and degrades to $\operatorname{\mathbf{O}}\left(\frac{\kappa^4}{\epsilon^2}\right)$ otherwise, falling far short of optimality. Even in the intermediate regime, these unfavorable exponents could offset any practical benefit of the beyond-$\kappa$ solvers, undermining their intended advantage over conventional methods.

\subsection{Main result}
\label{sec:intro_result}
We present two quantum linear system solvers whose complexity goes beyond the condition number barrier: one based on an effective truncation of the linear system, and one based on eigenstate filtering over the unit circle with an effective gap.
We summarize our result and compare it against previous results in~\tab{compare}.

\begin{table}[t]
    \centering
\resizebox{\textwidth}{!}{
\renewcommand{\arraystretch}{1.2}
    \begin{tabular}{cccccc}
\thickhline 
\multirow{1}{*}{Year} & \multirow{1}{*}{Algorithm} & \multirow{1}{*}{Primary innovation} & \multicolumn{1}{c}{Queries to $O_{A}$} & \multirow{1}{*}{Queries to $O_{b}$} \tabularnewline
\thickhline 
2008 & \cite{Harrow2009} & \makecell{Phase estimation\\+ Hamiltonian simulation} & $\operatorname{\mathbf{O}}\left(\frac{\kappa^2}{\norm{x}\epsilon^{2}}\right)$  & $\operatorname{\mathbf{O}}(\frac{\kappa}{\norm{x}})$ \tabularnewline
\hline 
2012 & \cite{Ambainis2012VTAA} & VTAA & $\operatorname{\mathbf{O}}\left(\frac{\kappa}{\epsilon^{3}}\log^{3}\left(\frac{\kappa}{\epsilon}\right)\log^{2}\left(\frac{1}{\epsilon}\right)\right)$ & $\operatorname{\mathbf{O}}\left(\kappa\polylog\left(\frac{\kappa}{\epsilon}\right)\right)$  \tabularnewline
\hline 
2017 & \cite{Childs2015LinearSystems} & Linear combination of quantum walks & $\operatorname{\mathbf{O}}\left(\frac{\kappa^2}{\norm{x}}\operatorname{polylog}\left(\frac{\kappa}{\epsilon}\right)\right)$ & $\operatorname{\mathbf{O}}(\frac{\kappa}{\norm{x}}\log\left(\frac{\kappa}{\epsilon}\right))$ \tabularnewline
\hline
2017 & \cite{Childs2015LinearSystems} & \makecell{Gapped phase estimation\\+ Quantum walk + VTAA} & $\operatorname{\mathbf{O}}\left(\kappa\polylog\left(\frac{\kappa}{\epsilon}\right)\right)$ & $\operatorname{\mathbf{O}}(\kappa\polylog\left(\frac{\kappa}{\epsilon}\right))$  \tabularnewline
\hline 
2018 & \cite{Subasi2019QLSPadiabatic} & Randomized abiabatic evolution & $\operatorname{\mathbf{O}}\left(\frac{\kappa}{\epsilon}\log(\kappa)\right)$ & $\operatorname{\mathbf{O}}\left(\frac{\kappa}{\epsilon}\log(\kappa)\right)$  \tabularnewline
\hline 
2018 & \cite{Chakraborty2018BlockEncoding} & Detailed analysis of VTAA & $\operatorname{\mathbf{O}}\left(\kappa\log(\kappa)\log^{2}\left(\frac{\kappa}{\epsilon}\right)\right)$ & \multicolumn{1}{c}{$\operatorname{\mathbf{O}}\left(\frac{\kappa}{\norm{x}}\log(\kappa)\right)$} \tabularnewline
\hline 
2019 & \cite{An2022QSLP} & Continuous time adiabatic evolution & $\operatorname{\mathbf{O}}\left(\kappa\operatorname{polylog}\left(\frac{\kappa}{\epsilon}\right)\right)$ & $\operatorname{\mathbf{O}}\left(\kappa\operatorname{polylog}\left(\frac{\kappa}{\epsilon}\right)\right)$ \tabularnewline
\hline 
2019 & \cite{Lin2020QLSPfiltering} & Eigenstate filtering & $\operatorname{\mathbf{O}}\left(\kappa\log\left(\frac{\kappa}{\epsilon}\right)\right)$ & $\operatorname{\mathbf{O}}\left(\kappa\log\left(\frac{\kappa}{\epsilon}\right)\right)$ \tabularnewline
\hline 
2022 & \cite{Costa2021linearsystems} & Discrete time adiabatic evolution & $\operatorname{\mathbf{O}}\left(\kappa\log\left(\frac{1}{\epsilon}\right)\right)$ & $\operatorname{\mathbf{O}}\left(\kappa\log\left(\frac{1}{\epsilon}\right)\right)$ \tabularnewline
\hline 
2023 & \cite{Chakraborty2023VTAAQLSP} & QSVT-based GPE + VTAA & $\operatorname{\mathbf{O}}\left(\kappa\log(\kappa)\log\left(\frac{\kappa}{\epsilon}\right)\right)$  & $\operatorname{\mathbf{O}}\left(\frac{\kappa}{\norm{x}}\log(\kappa)\right)$ \tabularnewline
\hline 
2024 & \cite{Dalzell2024shortcut} & Kernel reflection  & $\operatorname{\mathbf{O}}\left(\kappa\log\left(\frac{1}{\epsilon}\right)\right)$ &  $\operatorname{\mathbf{O}}\left(\kappa\log\left(\frac{1}{\epsilon}\right)\right)$ \tabularnewline
\hline
2024 & \cite{cunningham_et_al} & Poisson-distributed phase randomization & $\operatorname{\mathbf{O}}\left(\kappa\log\left(\frac{1}{\epsilon}\right)\right)$ &  $\operatorname{\mathbf{O}}\left(\kappa\log\left(\frac{1}{\epsilon}\right)\right)$ \tabularnewline
\hline
2024 & \cite{Low2026quantumlinearsystem} & \makecell{Tunable VTAA\\+ Discretized inverse state} & $\operatorname{\mathbf{O}}\left(\kappa\log\left(\frac{\kappa}{\norm{x}}\right)\log\left(\frac{\log(\kappa/\norm{x})}{\epsilon}\right)\right)$ & $\operatorname{\mathbf{O}}\left(\frac{\kappa}{\norm{x}}\right)$ \tabularnewline
\hline 
2024 & \cite{Low2026quantumlinearsystem} & Block preconditioning  & $\operatorname{\mathbf{O}}\left(\kappa\log\left(\frac{1}{\epsilon}\right)\right)$ & $\operatorname{\mathbf{O}}\left(\kappa\log\left(\frac{1}{\epsilon}\right)\right)$ \tabularnewline
\hline 
2025 & \cite{li2025new} & Filtering with effective gap  & $\operatorname{\mathbf{O}}\left(\frac{\norm{A^{-1\dagger}x}}{\epsilon^2}\right)$ & $\operatorname{\mathbf{O}}\left(\frac{\norm{A^{-1\dagger}x}}{\epsilon^2}\right)$ \tabularnewline
\thickhline 
2026 & This work & Effective truncation of linear system  & $\operatorname{\mathbf{O}}\left(\keff\log\left(\frac{\keff}{\norm{x}}\right)\log\left(\frac{\log(\keff/\norm{x})}{\epsilon}\right)\right)$ & $\operatorname{\mathbf{O}}\left(\frac{\keff}{\norm{x}}\right)$
\tabularnewline
\hline 
2026 & This work & Improved filtering with effective gap  & $\operatorname{\mathbf{O}}\left(\frac{\norm{A^{-1\dagger}x}}{\norm{x}\epsilon}\log\left(\frac{1}{\epsilon}\right)\right)$ & $\operatorname{\mathbf{O}}\left(\frac{\norm{A^{-1\dagger}x}}{\norm{x}\epsilon}\log\left(\frac{1}{\epsilon}\right)\right)$
\tabularnewline
\thickhline 
    \end{tabular}
    \renewcommand{\arraystretch}{1}
    }
    \caption{Complexity comparison of the new and previous solvers for the quantum linear system problem.
    Our effective condition number $\keff$ admits a family of upper bounds indexed by a positive integer $t$, including $\keff\leq\frac{\norm{(A^\dagger A)^{-t/2}x}^{1/t}}
{\norm{x}^{1/t}\epsilon^{1/t}}$ for even $t$ and $\keff\leq\frac{\norm{A^{-1\dagger}(A^\dagger A)^{-(t-1)/2}x}^{1/t}}
{\norm{x}^{1/t}\epsilon^{1/t}}$ for odd $t$.
    Our filtering-based solver achieves a constant prefactor of $6$ in its leading-order query complexity.
    The comparison assumes $\norm{x}$ is known. Without this assumption, the result of~\cite{li2025new} degrades to $\operatorname{\mathbf{O}}\left(\frac{\norm{A^{-1\dagger}x}\norm{x}^2}{\epsilon^2}\right)$, whereas ours remains the same up to logarithmic factors.
    }
    \label{tab:compare}
\end{table}

Our truncation-based solver has query complexity
\begin{equation}
\label{eq:vtaa_trunc_intro}
    \operatorname{\mathbf{O}}\left(\keff\log\left(\frac{\keff}{\norm{x}}\right)
    \left(\log\log\left(\frac{\keff}{\norm{x}}\right)+\log\left(\frac{1}{\epsilon}\right)\right)\operatorname{\mathbf{Cost}}\left(O_A\right)
    +\frac{\keff}{\norm{x}}\operatorname{\mathbf{Cost}}\left(O_b\right)\right).
\end{equation}
Here, the \emph{effective condition number} $1\leq\keff\leq\kappa$ is chosen such that the truncated solvers still produce the normalized solution state with the target accuracy $\epsilon$. We show that the following family of bounds hold for the effective condition number
\begin{equation}
\label{eq:keff_t_intro}
    \keff\leq\left(\frac{\norm{(A)_{\mathbf{sv}}^{-t}x}}
    {\norm{x}\epsilon}\right)^{\frac{1}{t}},\qquad
    0<t<\infty,
\end{equation}
where $(A)_{\mathbf{sv}}^{-t}$ denotes the singular value transformation~\cite{Gilyen2018singular} with scalar function $\sigma\mapsto\sigma^{-t}$. For a positive integer $t$, the singular value transformation is given by the standard matrix inversion as
\begin{equation}
\label{eq:keff_integer_intro}
    \keff\leq\begin{cases}
        \left(\frac{\norm{A^{-1\dagger}(A^\dagger A)^{-(t-1)/2}x}}
{\norm{x}\epsilon}\right)^{1/t},\quad&t\text{ is odd},\\
        \left(\frac{\norm{(A^\dagger A)^{-t/2}x}}
{\norm{x}\epsilon}\right)^{1/t},\quad&t\text{ is even}.
    \end{cases}
\end{equation}
This reduces further to the succinct bound $\keff\leq\frac{\norm{A^{-1\dagger}x}}{\norm{x}\epsilon}$
when $t=1$, but it also gives alternative bounds with logarithmic scaling in the inverse accuracy when $t=\operatorname{\mathbf{\Theta}}\left(\frac{\log\left(\frac{1}{\epsilon}\right)}{\log\log\left(\frac{1}{\epsilon}\right)}\right)$. As $t$ increases, the vector norm $(\norm{(A)_{\mathbf{sv}}^{-t}x}/\norm{x})^{1/t}\nearrow\kappa$ increases monotonically to the condition number while the inverse error $(1/\epsilon)^{1/t}\searrow1$ decreases to unity, recovering the scaling of conventional VTAA.

Our filtering-based solver is remarkably simple and has a favorable runtime prefactor. In particular, we show that the solver has query complexity
\begin{equation}
\label{eq:filter_intro}
    6\frac{\norm{A^{-1\dagger}x}}{\norm{x}\epsilon}\ln\left(\frac{1}{\epsilon}\right)\operatorname{\mathbf{Cost}}\left(O_A\right)
    +6\frac{\norm{A^{-1\dagger}x}}{\norm{x}\epsilon}\ln\left(\frac{1}{\epsilon}\right)\operatorname{\mathbf{Cost}}\left(O_b\right)
\end{equation}
to leading order. 
This is achieved assuming that relative approximations of the solution norm $\norm{x}$ are available with an approximation ratio approaching $1$.
If $\norm{x}$ is unknown, then we present a comparably simple solution norm estimator achieving a constant relative precision with query cost
\begin{equation}
    \operatorname{\mathbf{O}}\left(\frac{\norm{A^{-1\dagger}x}}{\norm{x}}\log^2\left(\frac{\norm{A^{-1\dagger}x}}{\norm{x}}\right)\operatorname{\mathbf{Cost}}\left(O_A\right)
    +\frac{\norm{A^{-1\dagger}x}}{\norm{x}}\log^2\left(\frac{\norm{A^{-1\dagger}x}}{\norm{x}}\right)\operatorname{\mathbf{Cost}}\left(O_b\right)\right).
\end{equation}
The approximation ratio can then be refined using standard methods such as amplitude estimation.

Our beyond-$\kappa$ solvers thus substantially improve upon the prior art, while our vector norm scalings degrade to almost linear in $\kappa$ in the worst case, reproducing the performance of conventional solvers. This constitutes a concrete step toward efficiently solving ill-conditioned linear systems on a quantum computer.

\subsection{Block encoding and affine dilation}
\label{sec:intro_input}
Quantum linear system solvers can be formulated in an abstract setting where the coefficient matrix $A$ has norm $\norm{A}\leq1$ and is accessed through a block encoding as
\begin{equation}
    A=G_1^\dagger O_AG_0,
\end{equation}
with unitary $O_A$ and isometries $G_0,G_1$,
while the initial state $\ket{b}$ is prepared by a unitary $O_b$ as
\begin{equation}
    \ket{b}=O_b\ket{0}.
\end{equation}
Assuming the implementation cost of $G_0$ and $G_1$ is negligible without loss of generality, 
the complexity of these solvers is then measured in terms of the number of queries to $O_A$ and $O_b$ respectively. This applies to a range of concrete settings, including sparse linear systems, and extends to cases where $A$ and $\ket{b}$ are constructed indirectly via common linear algebraic operations. We term this the \emph{standard input model} for the remainder of the paper.

The beyond-$\kappa$ solver of~\cite{li2025new} operates on an augmented matrix of the form $\begin{bmatrix}
    A & -\ket{b}
\end{bmatrix}$. Using oracles $O_A$ and $O_b$, one can construct a block encoding of this augmented matrix with normalization factor $\sqrt{2}$ and thereby reformulate the solver in the standard input model; see~\append{li} for details. However, as shown in~\cite{li2025new}, the query complexity can be further refined through a direct block encoding of $\begin{bmatrix}
    A & -\ket{b}
\end{bmatrix}$, bypassing the standard input model.

We introduce the \emph{affine dilation model} to facilitate comparison of Li's solver~\cite{li2025new} with other quantum linear system solvers while also exploiting this refined block encoding. Specifically, we consider $\left[\begin{smallmatrix}
        A & -\ket{b}\\
        0 & c
    \end{smallmatrix}\right]$,
where the augmented matrix is recovered by setting the scalar $c=0$. However, if $c\neq0$, then the affine dilated matrix is invertible and
\begin{equation}
    \begin{bmatrix}
        A & -\ket{b}\\
        0 & c
    \end{bmatrix}^{-1}
    =\begin{bmatrix}
        A^{-1} & \frac{1}{c}A^{-1}\ket{b}\\
        0 & \frac{1}{c}
    \end{bmatrix}.
\end{equation}
Consequently, the solution $x = A^{-1}\ket{b}$ can be extracted from the top-right entry of the inverse matrix, while $A$ and $\ket{b}$ are handled jointly in a single block encoding analogous to~\cite{li2025new}. We formally introduce the quantum linear system problem and its input models in~\sec{input}. While we primarily focus on the standard input model for simplicity, we also analyze the query complexity of beyond-$\kappa$ solvers in the affine dilation model in~\append{affine}.

\subsection{Solver based on effective truncation of linear system}
\label{sec:intro_trunc}
We now present our first beyond-$\kappa$ solver based on an effective truncation of the linear system. To describe this truncation, we define the projections $\Pi_{\text{left},\mathcal{S}}=\sum_{\sigma_j\in\mathcal{S}}\ketbra{u_j}{u_j}$ and $\Pi_{\text{right},\mathcal{S}}=\sum_{\sigma_j\in\mathcal{S}}\ketbra{v_j}{v_j}$, for $A=\sum_j\sigma_j\ketbra{u_j}{v_j}$ the singular value decomposition with singular values $\kappa^{-1}\leq\sigma_j\leq1$ and $\mathcal{S}\subseteq\mathbb{R}$ a subset of real numbers. Then the goal of our truncation-based solver is to produce a quantum state proportional to the truncated solution $\Pi_{\text{right},\left[\keff^{-1},1\right]}A^{-1}\ket{b}$. Here, we aim to choose an effective condition number $1\leq\keff=\keff(\epsilon)\leq\kappa$ as the smallest value satisfying
\begin{equation}
    \norm{\frac{\Pi_{\text{right},\left[\keff^{-1},1\right]}A^{-1}\ket{b}}{\norm{\Pi_{\text{right},\left[\keff^{-1},1\right]}A^{-1}\ket{b}}}
    -\frac{A^{-1}\ket{b}}{\norm{A^{-1}\ket{b}}}}
    =\operatorname{\mathbf{O}}(\epsilon),
\end{equation}
so that the solver still outputs the desired solution state up to error $\operatorname{\mathbf{O}}(\epsilon)$.

We show how to fulfill this truncation requirement by characterizing $\keff$ through the generalized inverse (quantile function) of the weighted singular value distribution, where the weights are determined by the squared overlaps of $A^{-1}\ket{b}$ with the singular vector subspaces:
\begin{equation}
    \frac{\norm{\Pi_{\text{right},\left[0,\keff^{-1}\right)}A^{-1}\ket{b}}^2}{\norm{A^{-1}\ket{b}}^2}\leq\epsilon^2,\qquad
    \frac{\norm{\Pi_{\text{right},\left[0,\keff^{-1}\right]}A^{-1}\ket{b}}^2}{\norm{A^{-1}\ket{b}}^2}>\epsilon^2.
\end{equation}
This definition is illustrated in~\fig{quantile_keff}.
In practice, determining the exact value of the effective condition number $\keff$ can be computationally prohibitive. We address this by developing a family of upper bounds on $\keff$ such as~\eq{keff_integer_intro}, which can be analyzed using standard linear algebraic techniques.
See~\thm{keff} for details.

\begin{figure}[t]
\centering
\includegraphics{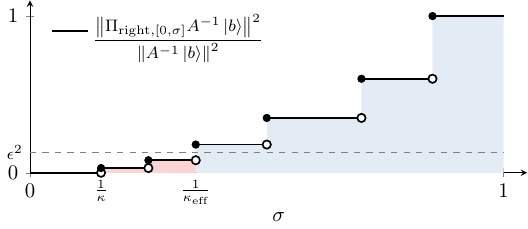}
\caption{Illustration of the definition of effective condition number. The step function represents the cumulative distribution of the squared inverse singular values of $A$ weighted by the initial state $\ket{b}$. The effective condition number $\kappa_{\text{eff}}$ is defined so that the total weight below $1/\kappa_{\text{eff}}$ (red region) is at most $\epsilon^2$.}
\label{fig:quantile_keff}
\end{figure}

Note that although the singular vector projection $\Pi_{\text{right},\left[\keff^{-1},1\right]}$ may be realized to precision $\epsilon$ using Quantum Singular Value Transformation (QSVT)~\cite{Gilyen2018singular} with cost $\operatorname{\mathbf{O}}\left(\keff\log\left(\frac{1}{\epsilon}\right)\right)$, we do \emph{not} pursue this approach, as this would incur query overhead and significantly complicate the algorithm. Instead, we invoke a quantum linear system solver with the condition number set to $\keff$, without modifying the input oracles $O_A$ and $O_b$. Since our $\kappa_{\text{eff}}$ is no longer an upper bound on the condition number, this constitutes a somewhat ``abnormal'' use of the quantum linear system solver, and the resulting behavior is generally unpredictable. To address this, we demand the solver to effectively truncate the coefficient matrix in the following sense.

\begin{definition*}[Strong truncation property]
Given a linear system $Ax=\ket{b}$ with $\norm{A}\leq1$ and $\norm{\ket{b}}=1$, let $O_A$ and $O_b$ be the oracles for block encoding and initial state preparation respectively. A quantum linear system solver is said to satisfy the \emph{strong truncation property} if, when configured with precision $\epsilon$ and condition number $\kappa_{\text{eff}}=\kappa_{\text{eff}}(\epsilon) \geq 1$, it produces the truncated solution state $\Pi_{\text{right},\left[\keff^{-1},1\right]}A^{-1}\ket{b}/\norm{\Pi_{\text{right},\left[\keff^{-1},1\right]}A^{-1}\ket{b}}$ to accuracy 
$\operatorname{\mathbf{O}}(\epsilon)$.
\end{definition*}

Under this definition, any quantum linear system solver with the strong truncation property can produce a truncated solution state, with a query complexity determined by the effective condition number $\kappa_{\text{eff}}$. Recall from~\eq{keff_t_intro} and~\eq{keff_integer_intro} that $\kappa_{\text{eff}}$ can be bounded by vector norms that are independent of the condition number; consequently, we have obtained a beyond-$\kappa$ solver.
We formally state this result in~\thm{beyondk_strong_trunc}.

It remains to identify solvers that fulfill the strong truncation requirement. One simple solver can be constructed by block encoding $A^{-1}$ using QSVT with condition number set to $\keff$, followed by a (fixed-point) amplitude amplification. This QSVT-based solver makes $\operatorname{\mathbf{O}}\left(\frac{\keff^2}{\norm{x}}\log\left(\frac{\keff}{\norm{x}\epsilon}\right)\right)$ queries to $O_A$ and $\operatorname{\mathbf{O}}\left(\frac{\keff}{\norm{x}}\right)$ queries to $O_b$. Running VTAA with $\kappa_{\text{eff}}$ yields an improved query complexity, as shown in~\eq{vtaa_trunc_intro}.
However, rigorously establishing the strong truncation property for VTAA is highly nontrivial. 
The difficulty arises because VTAA recursively amplifies all components of the state that potentially lead to success. When configuring with $\kappa_{\mathrm{eff}}$, we intend to zero out components corresponding to singular values $\sigma_j < \kappa_{\mathrm{eff}}^{-1}$. However, VTAA still treats these components as potentially successful and amplifies them, requiring an additional postprocessing step. In effect, VTAA amplifies first and then truncates, rather than truncating first and then amplifying. This nuance leads to a nonstandard amplification schedule that is difficult to analyze directly.
To circumvent this technical difficulty, we introduce the weak truncation property. Combined with an effective truncation of the initial state, this admits a significantly cleaner proof.

Specifically, with the same effective condition number $\keff=\keff(\epsilon)$ as above, we consider the truncated initial state $\ket{b_{\text{eff}}}$ proportional to $\Pi_{\text{left},\left[\keff^{-1},1\right]}\ket{b}$. We then construct an oracle $O_{b_{\text{eff}}}$ that prepares $\ket{b_{\text{eff}}}$ with one query to $O_b$ and approximates $O_b$, i.e.,
\begin{equation}
    O_{b_{\text{eff}}}\ket{0}=\frac{\Pi_{\text{left},\left[\keff^{-1},1\right]}\ket{b}}{\norm{\Pi_{\text{left},\left[\keff^{-1},1\right]}\ket{b}}}=\ket{b_{\text{eff}}},\qquad
    \norm{O_{b_{\text{eff}}}-O_b}=\operatorname{\mathbf{O}}\left(\frac{\norm{x}\epsilon}{\keff}\right).
\end{equation}
Once again, the singular vector projection is \emph{not} performed in the actual implementation of the algorithm. Instead, we run a quantum linear system solver configured with condition number $\kappa_{\text{eff}}$, while retaining the original input oracles $O_A$ and $O_b$.
Suppose that the solver we use has an \emph{optimal query complexity of initial state preparation}, i.e., it makes $\operatorname{\mathbf{O}}\left(\frac{\keff}{\norm{x}}\right)$ queries to $O_b$. Then replacing every occurrence of $O_b$ by $O_{b_{\text{eff}}}$ introduces an error at most $\operatorname{\mathbf{O}}(\epsilon)$. Up to this error tolerance, our solver is conceptually an algorithm with condition number $\keff$ and oracles $O_A$ and $O_{b_{\text{eff}}}$; in this sense the initial state is effectively truncated.

As before, this is a non-standard use of the quantum linear system solver, and its correct execution requires the following additional assumption on the underlying solver.

\begin{definition*}[Weak truncation property]
Given a linear system $Ax=\ket{b}$ with $\norm{A}\leq1$ and $\norm{\ket{b}}=1$, let $O_A$ and $O_b$ be the oracles for block encoding and initial state preparation respectively.
A quantum linear system solver is said to satisfy the \emph{weak truncation property} if, when configured with precision $\epsilon$ and condition number $\kappa_{\text{eff}}=\kappa_{\text{eff}}(\epsilon) \geq 1$, it produces the truncated solution state $A^{-1}\ket{b}/\norm{A^{-1}\ket{b}}$ to accuracy 
$\operatorname{\mathbf{O}}(\epsilon)$,
as long as the initial vector is supported in $\ket{b}\in\operatorname{\mathbf{Im}}\left(\Pi_{\text{left},\left[\keff^{-1},1\right]}\right)$.
\end{definition*}

Note that the strong truncation property naturally implies the weak truncation property. This follows from the fact that $A^{-1}\ket{b} = A^{-1}\Pi_{\text{left},[\kappa_{\text{eff}}^{-1},1]}\ket{b} = \Pi_{\text{right},[\kappa_{\text{eff}}^{-1},1]}A^{-1}\ket{b}$ whenever $\ket{b} \in \operatorname{\mathbf{Im}}\left(\Pi_{\text{left},[\kappa_{\text{eff}}^{-1},1]}\right)$ holds. 
Conversely, we show in~\thm{beyondk_weak_trunc} that a solver satisfying the weak truncation property with optimal query complexity of initial state preparation necessarily fulfills the strong truncation requirement.
Given the additional promise that $\ket{b} \in \operatorname{\mathbf{Im}}(\Pi_{\text{left},[\kappa_{\mathrm{eff}}^{-1},1]})$, it is routine to check that VTAA satisfies the weak truncation property.
As VTAA also has an optimal query complexity of initial state preparation, it can be used to construct beyond-$\kappa$ solvers based on effective state truncation (\cor{beyondk_vtaa}). We illustrate this algorithmic template in~\fig{truncation} and present a detailed analysis of the truncation-based solver in~\sec{trunc}.

\begin{figure}[t]
	\centering
\includegraphics[width=0.65\textwidth]{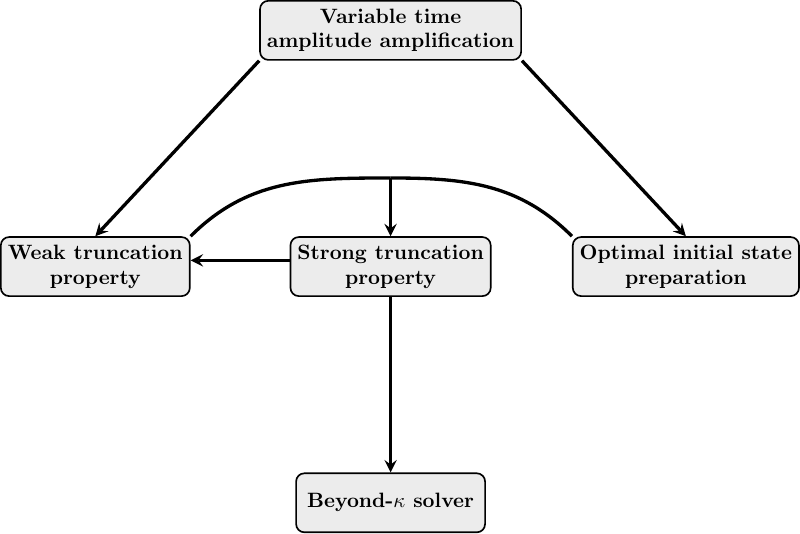}
\caption{Illustration of the algorithmic template for constructing beyond-$\kappa$ solvers based on effective truncation of the linear system.}
\label{fig:truncation}
\end{figure}

To obtain the beyond-$\kappa$ solvers, we require that VTAA produces the truncated solution state when configured with the effective condition number $\keff$. Interestingly, VTAA can be modified to produce the truncated solution state for \emph{any} $\alpha$ supplied as the condition number parameter. This is a more general truncation property of VTAA and is discussed further in~\append{trunc_general}.

\subsection{Solver based on filtering with effective gap}
\label{sec:intro_gap}
Our second beyond-$\kappa$ solver is based on an improved eigenstate filtering on the unit circle with an effective gap. Specifically, suppose that the augmented matrix $\begin{bmatrix}
    \beta A & -\ket{b}
\end{bmatrix}$ is block encoded by the overlap of isometries $G_1^\dagger G_0$ with some scalar $\beta>0$. We then start with the initial state $G_0\begin{bmatrix}
    0\\
    1
\end{bmatrix}$ partitioned conformally to the augmented matrix, which decomposes as
\begin{equation}
\begin{aligned}
    G_0\begin{bmatrix}
        0\\1
    \end{bmatrix}
    &=\frac{\beta}{\norm{x}^2+\beta^2}
    G_0\begin{bmatrix}
        x\\\beta
    \end{bmatrix}
    -\frac{\beta}{\norm{x}^2+\beta^2}G_0\begin{bmatrix}
        x\\
        -\frac{\norm{x}^2}{\beta}
    \end{bmatrix}\\
    &=\frac{\beta}{\norm{x}^2+\beta^2}G_0\begin{bmatrix}
        x\\
        \beta
    \end{bmatrix}
    -\frac{\sqrt{\beta^2+1}}{\norm{x}^2+\beta^2}
    G_0G_0^\dagger G_1A^{-1\dagger}x.
\end{aligned}
\end{equation}
Our goal is to retain the first term while discarding the second, from which a state proportional to the solution $x=A^{-1}\ket{b}$ can be extracted.

A direct calculation shows that the first term belongs to $\operatorname{\mathbf{Ker}}\left(\Pi_1 \Pi_0\right)$ while the second term is supported in $\operatorname{\mathbf{Im}}\left(\Pi_0 \Pi_1\right)$, where $\Pi_0=G_0G_0^\dagger$ and $\Pi_1=G_1G_1^\dagger$. As the smallest nonzero singular value of $\Pi_1\Pi_0$ is at least $\operatorname{\mathbf{\Omega}}(\kappa^{-1})$~\cite[Appendix C]{li2025new}, one can perform a right singular vector filtering targeting at the singular value $0$ with $\Pi_1\Pi_0$ as the underlying operator. This preserves the first term while arbitrarily suppressing the second to magnitude below $\operatorname{\mathbf{O}}(\delta)$, although the query cost would be $\operatorname{\mathbf{O}}\left(\kappa\log\left(\frac{1}{\delta}\right)\right)$ scaling linearly with the condition number.

In contrast, the beyond-$\kappa$ solver uses the operator $W=\left(2\Pi_0-I\right)\left(I-2\Pi_1\right)$.
A key observation is that the first term from the decomposition is an eigenvector of $W$ with eigenvalue $e^{i0}=+1$, while the second term corresponds to eigenvalues $e^{i\theta}$ with nonzero phase angles $\theta\neq0$.
The effective gap lemma~\cite[Lemma 4.2]{Lee_effectivegap} then allows us to reduce the magnitude of the second term below $\operatorname{\mathbf{O}}(\delta)$ with a query cost $\operatorname{\mathbf{O}}\left(\frac{1}{\delta}\right)$ independent of the condition number.
This suggests performing an eigenstate filtering over the unit circle that preserves the $\theta=0$ component while suppressing the remaining ones with the help of effective gap lemma.

This eigenstate filtering can be realized in a variety of ways. For instance, one can use generalized quantum signal processing~\cite{WangZhangYuWang23,MotlaghWiebe24} with the target polynomial chosen as the Dolph-Chebyshev polynomial over the unit circle~\cite{li2025discrimination}. Prior work~\cite[Page 30]{li2025discrimination} also proposed a filtering method based on the conventional quantum signal processing~\cite{LC17,LYC16} and QSVT. However, their method relies on $\sin(\theta)$ and appears incapable of distinguishing between eigenphases $\theta=0$ and $\theta=\pm\pi$, a critical issue left unresolved. We describe a corrected QSVT-based filtering algorithm with a matching asymptotic complexity and a reduced constant prefactor (\append{lambertw}). Regardless, this improves upon the prior result of~\cite{li2025new} based on quantum phase estimation and repeated sampling.

If $\norm{x}$ is known a priori, we can run the beyond-$\kappa$ solver with $\beta=\operatorname{\mathbf{\Theta}}(\norm{x})$, successfully producing the solution state with a constant probability. This filter-then-measure strategy is exceptionally simple. In fact, we show that its query complexity has a constant prefactor $6$ to leading order; see~\eq{filter_intro}. If $\norm{x}$ is unknown, then we develop a similarly simple solution norm estimation algorithm by applying an exponential search starting from $\beta=1$ and terminating at $\beta=\operatorname{\mathbf{\Theta}}(\norm{x})$ with a high probability. This yields a constant multiplicative approximation of the solution norm, and the approximation ratio can be further refined using amplitude estimation as in~\cite{Dalzell2024shortcut}. We present the filtering-based solver in detail in~\sec{gap} with the result formally stated in~\thm{beyondk_filtering}.

We remark that although both solvers approximate the solution state $|x\rangle$ to an accuracy of $\operatorname{\mathbf{O}}(\epsilon)$, their error profiles are structurally different. The truncation-based solver produces a state approximately supported on the right singular vectors of $A$ with singular values of at least $\kappa_{\mathrm{eff}}^{-1}$, discarding all components corresponding to smaller singular values. The filtering-based solver preserves the full solution but incurs errors from incomplete suppression of an auxiliary component in a constrained orthogonal decomposition induced by the operator $W$. Despite this structural difference, the two solvers provide equivalent 
approximation guarantees: both produce a quantum state within Euclidean 
distance $\operatorname{\mathbf{O}}(\epsilon)$ of the normalized~$|x\rangle$.
Which solver to use then depends on the regime of interest: the truncation-based solver (via VTAA) achieves nearly linear scaling in the effective condition number with an optimal number of queries to the initial state preparation, while the filtering-based solver (with the effective spectral gap) provides a much simpler approach with a more favorable runtime prefactor.

We conclude in \sec{discuss} with a summary of our main contributions and a number of open questions for future investigation.

%% file: input.tex
In this section, we formally introduce the quantum linear system problem and discuss its input models. We first define the standard input model in \sec{input_standard} based on block encoding, and then introduce the affine dilation model in \sec{input_affine} which allows further fine-tuning of the query complexity. Finally, we formulate the quantum linear system problem in \sec{input_problem} and discuss its solvers.

\subsection{Standard input model}
\label{sec:input_standard}
Block encoding provides a natural input model for quantum linear algebra algorithms~\cite{Chakraborty2018BlockEncoding,Low17}.
It can be defined in full generality using unitaries and isometries. We say an operator $G:\mathcal{G}\rightarrow\mathcal{H}$ is an \emph{isometry} between spaces $\mathcal{G}$ and $\mathcal{H}$ if $G^\dagger G=I$. By definition, $G$ is injective with image $\operatorname{\mathbf{Im}}(G)=\operatorname{\mathbf{Im}}(GG^\dagger)$, whereas $G^\dagger$ is surjective with kernel $\operatorname{\mathbf{Ker}}(G^\dagger)=\operatorname{\mathbf{Ker}}(GG^\dagger)$. We thus obtain the isometric embedding $\mathcal{G}
    \xrightleftharpoons[G^\dagger]{G}
    \operatorname{\mathbf{Im}}(GG^\dagger)\subseteq\mathcal{H}$. 
Fixing any bases with respect to this embedding, the operator $G$ can be represented in matrix form as
\begin{equation}
    G=\begin{bmatrix}
        I\\
        0
    \end{bmatrix}.
\end{equation}
Examples of isometries include: (i) quantum state $\ket{\psi}$, where $\norm{\ket{\psi}}=1$; (ii) unitary operator $U$, where $UU^\dagger=U^\dagger U=I$; (iii) tensor product $G_0\otimes G_1$, where $G_0$ and $G_1$ are both isometries; (iv) operator composition $G_1G_0$, where $G_0$ and $G_1$ are both isometries and the composition is well defined; and (v) Stinespring dilation $\sum_j\ket{j}\otimes E_j$, where $\sum_jE_j^\dagger E_j=I$ and $\{\ket{j}\}$ are orthonormal.

We say that $A$ is block encoded by isometries $G_0$, $G_1$ and unitary $O_A$, if
\begin{equation}
    A=G_1^\dagger O_AG_0.
\end{equation}
In bases compatible with the isometric embeddings, we can write $O_A$ in the matrix form
\begin{equation}
    O_A=\begin{bmatrix}
        A & \cdot\\
        \cdot & \cdot
    \end{bmatrix},
\end{equation}
where $A$ appears as the top-left block; hence the name \emph{block encoding}. Note that such a block encoding is feasible only when $\norm{A}\leq\norm{G_1^\dagger}\norm{O_A}\norm{G_0}=1$. Conversely, for any $\norm{A}\leq1$, one can mathematically define a dilation $O_A=\left[\begin{smallmatrix}
        A & -\sqrt{I-AA^\dagger}\\
        \sqrt{I-A^\dagger A} & A^\dagger
    \end{smallmatrix}\right]$ as in~\cite[2.7.P2]{horn2012matrix} and \cite{halmos1950normal}, although $A$ may need to be further rescaled by some normalization factor for $O_A$ to be efficiently realizable as a quantum circuit.

For a given linear system $Ax=\ket{b}$, we assume without loss of generality that the initial vector $\ket{b}$ is a normalized quantum state, prepared by oracle $O_b$ from a standard reference state as
\begin{equation}
    \ket{b}=O_b\ket{0}.
\end{equation}
We further assume that the coefficient matrix $A$ is accessed through a block encoding, with any normalization factor absorbed into the matrix so that $\norm{A} \leq 1$.
Depending on the quantum linear system solver under consideration, we adopt one of two simplified forms of block encoding.
In the first form, we neglect the implementation cost of the isometries $G_0$, $G_1$ and simply say that $A$ is block encoded by the unitary $O_A$. This can happen for a block encoding like $A=\left(\bra{0}\otimes I\right)O_A\left(\ket{0}\otimes I\right)$ where isometries are given by trivial state preparations.
In the second form, we absorb $O_A$ into the definition of either $G_0$ or $G_1$ and say that $A$ is block encoded by the overlap of isometries $G_1^\dagger G_0$. The query complexity is then measured by the number of calls to either $G_0$ or $G_1$.

\subsection{Affine dilation model}
\label{sec:input_affine}
In the standard input model, we assumed that the coefficient matrix $A$ has spectral norm $\norm{A} \leq 1$ and is accessed through a block encoding $A = G_1^\dagger O_A G_0$. However, this often cannot be efficiently realized by a quantum circuit unless $A$ is rescaled by an additional factor $\alpha > 1$, giving $\frac{A}{\alpha} = G_1^\dagger O_A G_0$.
The normalization factor $\alpha>1$ pushes the singular values of $A$ toward $0$, making the linear system harder to solve quantumly; minimizing $\alpha$ is thus critical for reducing the overall query complexity.

To this end, we introduce the affine dilation model, where the input is given by the block encoding of
\begin{equation}
    \widetilde{A}=\frac{1}{\widetilde{\alpha}}
    \begin{bmatrix}
        A & -\ket{b}\\
        0 & c
    \end{bmatrix}
\end{equation}
for some scalar $c$ and normalization factor $\widetilde{\alpha}\geq1$. If $c=0$, then the affine dilation reduces to the augmented matrix $\frac{1}{\alpha}\begin{bmatrix}
    A & -\ket{b}
\end{bmatrix}$ used as the input model by the beyond-$\kappa$ solver of~\cite{li2025new}. However, we can also choose $c>0$. Then the affine dilated matrix becomes invertible and the solution $x = A^{-1}\ket{b}$ can be extracted from the top-right entry of the inverse matrix
\begin{equation}
    \widetilde{A}^{-1}
    =\left(\frac{1}{\widetilde{\alpha}}\begin{bmatrix}
        A & -\ket{b}\\
        0 & c
    \end{bmatrix}\right)^{-1}
    =\widetilde{\alpha}\begin{bmatrix}
        A^{-1} & \frac{1}{c}A^{-1}\ket{b}\\
        0 & \frac{1}{c}
    \end{bmatrix}.
\end{equation}
This means the affine dilation model can be paired with any quantum linear system solver, allowing the query complexity to be optimized over different choices of $A$, $\ket{b}$, $c$, and $\widetilde{\alpha}$.

Let us show that, even with the worst-case estimate, it never incurs significant overhead converting from the standard input model to the affine dilation model. To this end, we set $c=1$ and decompose the affine dilation into
\begin{equation}
    \begin{bmatrix}
        A & -\ket{b}\\
        0 & 1
    \end{bmatrix}
    =\begin{bmatrix}
        A & 0\\
        0 & 1
    \end{bmatrix}
    +\frac{1}{2}\begin{bmatrix}
        0 & -\ket{b}\\
        -\bra{b} & 0
    \end{bmatrix}
    +\frac{1}{2}\begin{bmatrix}
        0 & -\ket{b}\\
        \bra{b} & 0
    \end{bmatrix},
\end{equation}
where each summand can be block encoded respectively as
\begin{equation}
\begin{aligned}
    \begin{bmatrix}
        A & 0\\
        0 & 1
    \end{bmatrix}
    &=\begin{bmatrix}
        G_1^\dagger & 0\\
        0 & \bra{0}
    \end{bmatrix}
    \begin{bmatrix}
        O_A & 0\\
        0 & I
    \end{bmatrix}
    \begin{bmatrix}
        G_0 & 0\\
        0 & \ket{0}
    \end{bmatrix},\\
    \begin{bmatrix}
        0 & -\ket{b}\\
        -\bra{b} & 0
    \end{bmatrix}
    &=\begin{bmatrix}
        I & 0\\
        0 & \bra{0}
    \end{bmatrix}
    \begin{bmatrix}
        0 & -O_b\\
        -O_b^\dagger & 0
    \end{bmatrix}
    \begin{bmatrix}
        I & 0\\
        0 & \ket{0}
    \end{bmatrix},\\
    \begin{bmatrix}
        0 & -\ket{b}\\
        \bra{b} & 0
    \end{bmatrix}
    &=\begin{bmatrix}
        I & 0\\
        0 & \bra{0}
    \end{bmatrix}
    \begin{bmatrix}
        0 & -O_b\\
        O_b^\dagger & 0
    \end{bmatrix}
    \begin{bmatrix}
        I & 0\\
        0 & \ket{0}
    \end{bmatrix}.
\end{aligned}
\end{equation}
Taking their linear combination~\cite[Lemma 52]{Gilyen2018singular} then yields a block encoding of the affine dilation matrix $\widetilde{A}=\frac{1}{2}\left[\begin{smallmatrix}
        A & -\ket{b}\\
        0 & 1
    \end{smallmatrix}\right]$ with normalization factor 
    $\widetilde{\alpha} =  2$.
This implies that the condition number increases to at most
\begin{equation}
    \norm{\widetilde{A}^{-1}}
    =
    \norm{2\begin{bmatrix}
        A^{-1} & A^{-1}\ket{b}\\
        0 & 1
    \end{bmatrix}}
    \leq\norm{2\begin{bmatrix}
        A^{-1} & 0\\
        0 & 1
    \end{bmatrix}}
    +\norm{2\begin{bmatrix}
        0 & A^{-1}\ket{b}\\
        0 & 0
    \end{bmatrix}}
    \leq4\kappa,
\end{equation}
while the 
norm of the solution encoded into the upper-right block rescales to $\norm{\widetilde{\alpha}A^{-1}\ket{b}}=2\norm{x}$.
Since the cost of conventional quantum linear system solvers is governed by the condition number and the solution norm, switching from the standard input model to the affine dilation model thus increases the runtime by at most a constant factor. 
An exception to this conclusion may occur if $\operatorname{\mathbf{Cost}}\left(O_b\right)$ is asymptotically larger than $ \operatorname{\mathbf{Cost}}\left(O_A\right)$. In this case, quantum linear system solvers  that make asymptotically fewer calls to the initial vector than to the coefficient matrix may have asymptotically better cost in the standard input model than in the affine dilation model, since  block-encoding the dilated coefficient matrix $\widetilde{A}$ via linear combination requires querying not only $O_A$ (cheap) but also $O_b$ (expensive). As shown in \tab{compare}, this category of quantum linear system solver includes the VTAA method, but excludes all methods with strictly optimal query complexity to $O_A$. 

Li~\cite{li2025new} showed that for sparse linear systems, one can adjust the row norms in the affine dilation model, substantially reducing the query complexity. In general, searching over $A$, $\ket{b}$, $c$, and $\widetilde{\alpha}$ to minimize the query complexity is a nontrivial task. Accordingly, we primarily state our results in the standard input model, though we also analyze the runtime in the affine dilation model in~\append{affine}. A complete exploration of the affine dilation model---and the potential improvements it may yield---is an interesting direction for future work.

\subsection{Quantum linear system problem}
\label{sec:input_problem}
We now introduce notation and terminology for the quantum linear system problem. Specifically, consider the linear system $Ax = \ket{b}$ which has the solution
\begin{equation}
    x=A^{-1}\ket{b}
\end{equation}
for an invertible square matrix $A$. Then the goal of the quantum linear system problem is to produce the normalized solution vector 
\begin{equation}
    \ket{x}=\frac{x}{\norm{x}}=\frac{A^{-1}\ket{b}}{\norm{A^{-1}\ket{b}}}
\end{equation}
as a quantum state. Since both $A^{-1}$ and $\ket{b}$ appear linearly in the numerator and denominator of the solution state, we assume without loss of generality that the coefficient matrix satisfies $\norm{A} \leq 1$ and the initial vector has unit Euclidean norm $\norm{\ket{b}}=1$. We shall use the Dirac notation only for unit vectors.

We consider the abstract setting where a quantum linear system solver is described by a unitary $U$, acting jointly on the ancilla and system register, with the behavior
\begin{equation}
    \norm{\left(\bra{0}\otimes I\right)U\ket{00}}^2>\frac{1}{2},\qquad
    \norm{\frac{\left(\bra{0}\otimes I\right)U\ket{00}}{\norm{\left(\bra{0}\otimes I\right)U\ket{00}}}-\frac{A^{-1}\ket{b}}{\norm{A^{-1}\ket{b}}}}= 
    \operatorname{\mathbf{O}}(\epsilon). 
    \label{eq:QLSP_formal_specs}
\end{equation}
Here, the first condition guarantees that the solver succeeds with probability strictly greater than $\frac{1}{2}$, while the second ensures that the normalized output state approximates the desired solution state to 
within a fixed constant factor of
the target accuracy $\epsilon>0$. 
We focus primarily on the standard input model where $A$ is block encoded by an oracle $O_A$ and $\ket{b}$ is prepared by a unitary $O_b$. The complexity of the solver is then measured in terms of the number of queries to $O_A$ and $O_b$ made by $U$. 
When making non-asymptotic comparisons between two solvers that both produce solutions to $\operatorname{\mathbf{O}}(\epsilon)$ error as in \eq{QLSP_formal_specs}---but with different prefactors in front of the $\epsilon$---care must be taken to rescale $\epsilon$ accordingly. 

As discussed in~\sec{intro_solver}, the spectral condition number is a key parameter determining the runtime of quantum linear system solvers, which is mathematically defined by $\norm{A}\norm{A^{-1}}$. However, the precise value of the condition number is often inaccessible in practice; instead, we work with a known upper bound
\begin{equation}
    \kappa\geq\norm{A^{-1}}
    =\max_{\norm{\ket{b}}=1}\norm{A^{-1}\ket{b}}.
\end{equation}
As this involves a maximization over all initial states $\ket{b}$, the condition number $\kappa$ serves as a worst-case cost measure for quantum linear system solvers. The main goal of this paper is to develop beyond-$\kappa$ solvers whose query complexity does not scale with the condition number.

Another parameter of importance is the solution norm $\norm{x} = \norm{A^{-1}\ket{b}}$. Since $1 \leq \norm{x} \leq \norm{A^{-1}}$, this quantity reduces to the condition number in the worst case, but typically provides a more refined cost measure for quantum linear system solvers.
Like the condition number, the exact solution norm is often unavailable in practice. Instead, we assume knowledge of a value $\alpha_x$ that approximates $\norm{x}$ to within a multiplicative factor $\mu \geq 1$, i.e.,
\begin{equation}
    \frac{1}{\mu} \leq \frac{\alpha_x}{\norm{x}} \leq \mu.
\end{equation}
We will often assume that the solution norm is known to constant multiplicative accuracy, i.e., $\mu = \operatorname{\mathbf{O}}(1)$. When such an estimate is not available a priori, we provide solution norm estimation algorithms to obtain one---this estimation only needs to be performed once as a preprocessing step.  Furthermore, we may assume without loss of generality that $\mu \to 1$, as this refinement can be achieved using amplitude estimation 
incurring multiplicative complexity overhead $1/(\mu-1)$~\cite{Dalzell2024shortcut}.

To summarize, a quantum linear system solver is described by a unitary $U(O_A, O_b, \kappa, \alpha_x, \epsilon)$ depending on the input oracles $O_A$ and $O_b$, an upper bound on the condition number $\kappa$, a constant multiplicative estimate of the solution norm $\alpha_x$, and the target accuracy $\epsilon$. With this notation, the query complexity of the solver takes the form
\begin{equation}
    \operatorname{\mathbf{Cost}}(U(O_A, O_b, \kappa, \alpha_x, \epsilon))
    =h_A(\kappa, \alpha_x, \epsilon)\operatorname{\mathbf{Cost}}(O_A)
    +h_b(\kappa, \alpha_x, \epsilon)\operatorname{\mathbf{Cost}}(O_b),
\end{equation}
where $h_A$ and $h_b$ are positive-integer-valued functions denoting the number of calls to the respective oracles.
To represent them in closed form, we introduce $\operatorname{\mathbf{Floor}}(\cdot)$ to denote the largest integer not exceeding a real number and $\operatorname{\mathbf{Ceil}}(\cdot)$ for the smallest integer no less than a number.

The setting above is somewhat restricted for clarity of exposition, but the framework readily accommodates several generalizations. 
First, the solver need not be a unitary---more generally, it can be described by a quantum channel, with the error measured in trace distance rather than Euclidean distance. This generalization is especially relevant when the solver uses classical randomness, in which case it naturally implements a mixed unitary channel.
Second, we have specified the solver's performance through its success probability and output accuracy. It is also possible to characterize the solver instead by its expected behavior and expected query complexity, assuming that the solver is repeated until a successful run is obtained.
Third, we have focused on the standard input model, where $A$ and $\ket{b}$ are accessed through separate oracles. One can also consider the affine dilation model, where $A$ and $\ket{b}$ are block encoded jointly by a single oracle, allowing further refinements of the query complexity.
Fourth, the solver's performance may depend on additional parameters beyond $\kappa$ and $\norm{x}$. In particular, we will introduce a family of parameters $\norm{(A)_{\mathbf{sv}}^{-t}\ket{x}}$ with $0<t<\infty$ based on the singular value transformation of $A$, which upper bounds the query complexity of beyond-$\kappa$ solvers.
Fifth, we have focused on the case where $A$ is an invertible square matrix, but the framework generalizes to non-square and non-invertible matrices. In particular, when the linear system has no solution---i.e., when $\ket{b}\notin\operatorname{\mathbf{Im}}(A)$---a natural output is the least-squares solution, obtained by applying the Moore--Penrose pseudoinverse of $A$. Our truncation-based solvers can be adapted to achieve this.
Finally, we have focused primarily on query complexity. Other resource measures---such as space complexity, time complexity, and query depth---are also of interest and merit further investigation.

%% file: trunc.tex
In this section, we present an algorithmic template to construct beyond-$\kappa$ solvers based on effective truncation of the linear system. We begin with a formal definition of the effective condition number $\keff$ in~\sec{trunc_keff_def} and present a family of upper bounds on $\keff$ in~\sec{trunc_keff_bound}, proving~\thm{keff}. We then describe an algorithm based on effective truncation of the coefficient matrix and analyze its runtime in~\thm{beyondk_strong_trunc} in~\sec{trunc_matrix}.
Next, we construct an oracle in~\sec{trunc_oracle} to realize effective truncation of the initial state. Finally, we propose and analyze a solver based on effective state truncation in~\thm{beyondk_weak_trunc} in~\sec{trunc_state}.

\subsection{Effective condition number}
\label{sec:trunc_keff_def}
Let $A=\sum_j\sigma_j\ketbra{u_j}{v_j}$ be the singular value decomposition of the coefficient matrix with singular values $\kappa^{-1}\leq\sigma_j\leq1$. Define the singular vector projections $\Pi_{\text{left},\mathcal{S}}=\sum_{\sigma_j\in\mathcal{S}}\ketbra{u_j}{u_j}$ and $\Pi_{\text{right},\mathcal{S}}=\sum_{\sigma_j\in\mathcal{S}}\ketbra{v_j}{v_j}$ for $\mathcal{S}\subseteq\mathbb{R}$ a subset of real numbers. 
Our aim is to choose an effective condition number $\keff$ such that
\begin{equation}
    \frac{\norm{\Pi_{\text{right},\left[0,\keff^{-1}\right)}A^{-1}\ket{b}}}{\norm{A^{-1}\ket{b}}}\leq\epsilon,\qquad
    \frac{\norm{\Pi_{\text{right},\left[0,\keff^{-1}\right]}A^{-1}\ket{b}}}{\norm{A^{-1}\ket{b}}}>\epsilon.
\end{equation}
Intuitively, we use $\keff^{-1}$ to truncate an $\epsilon$-fraction of weight from the solution vector $x=A^{-1}\ket{b}$, starting from the smallest singular values. In cases where multiple $\kappa_{\mathrm{eff}}$ qualify, we shall select the smallest to minimize the cost of our beyond-$\kappa$ solvers (corresponding to the largest possible $\keff^{-1}$).
This defines the effective condition number as a function $\kappa_{\mathrm{eff}}(A, \ket{b}, \epsilon)$; we suppress the arguments when clear from context.

Although the effective condition number can be defined and analyzed directly, we find it more natural to work with cumulative distribution functions and their generalized inverses. Specifically, for any non-decreasing function $f:\mathbb{R}\rightarrow\mathbb{R}$, we denote its one-sided limits by
\begin{equation}
f(x-) = \lim_{\varepsilon \searrow 0} f(x - \varepsilon), \qquad f(x+) = \lim_{\varepsilon \searrow 0} f(x + \varepsilon).
\end{equation}
Since $f$ is non-decreasing, both limits exist and satisfy $f(x-) \leq f(x) \leq f(x+)$ for all $x$.
We then say that $x$ is a \emph{$y$-quantile} of $f$ if $f(x-) \leq y \leq f(x+)$.
In general, the $y$-quantiles are not uniquely determined when $f$ is non-invertible. 
We will use the largest possible value to define the generalized inverse $f^+$.

\begin{definition}[Generalized inverse]\label{defn:gen-inv}
For any non-decreasing function $f:\mathbb{R}\rightarrow\mathbb{R}$,
its \emph{generalized inverse} is defined by
\[
f^+(y) = \sup\{x \in \mathbb{R} : f(x-) \leq y \leq f(x+)\},
\]
with the convention $\sup \emptyset = -\infty$.
\end{definition}

\begin{figure}[t]
\centering
\begin{subfigure}[b]{0.9\textwidth}
\centering
\includegraphics{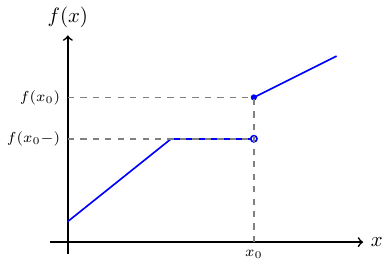}
\caption{A non-decreasing function $f$ with three behaviors: strictly increasing, plateau, and jump (right-continuous at $x_0$).}
\label{fig:F}
\end{subfigure}

\vspace{1em}

\begin{subfigure}[b]{0.48\textwidth}
\centering
\includegraphics{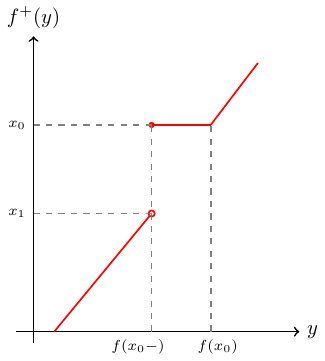}
\caption{The right-continuous generalized inverse $f^+$.}
\label{fig:Fplus}
\end{subfigure}
\hfill
\begin{subfigure}[b]{0.48\textwidth}
\centering
\includegraphics{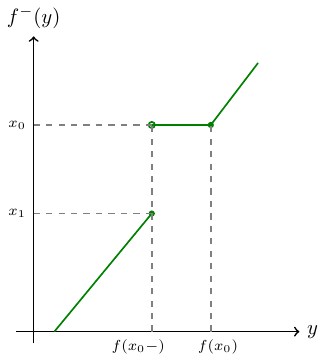}
\caption{The left-continuous generalized inverse $f^-$.}
\label{fig:Fminus}
\end{subfigure}

\caption{A non-decreasing function and its generalized inverses. Plateaus of $f$ become jumps, and jumps of $f$ become plateaus. The only difference between $f^+$ and $f^-$ is the one-sided continuity at discontinuities.}
\label{fig:quantile_all}
\end{figure}

Similarly, one can choose the smallest $y$-quantile and write $f^-(y) = \inf\{x : f(x-) \leq y \leq f(x+)\}$, which is more commonly used in probability. These two definitions are conceptually related though not exactly identical.
For a detailed treatment of this topic, we refer the reader to the excellent note of~\cite{wacker2023please}; see also~\fig{quantile_all}. For our analysis, we need the following properties of the generalized inverse.

\begin{proposition}[Equivalent definitions of $f^+$]\label{prop:equiv-def}
Let $g\colon \mathbb{R} \to \mathbb{R}$ be any function satisfying $f(x-) \leq g(x) \leq f(x+)$ for all $x$. Then the following four quantities are all equal to $f^+(y)$:
\begin{enumerate}
    \item[(a)] $\sup\{x \in \mathbb{R} : f(x) \leq y\}$,
    \item[(b)] $\sup\{x \in \mathbb{R} : g(x) \leq y\}$,
    \item[(c)] $\inf\{x \in \mathbb{R} : f(x) > y\}$,
    \item[(d)] $\inf\{x \in \mathbb{R} : g(x) > y\}$.
\end{enumerate}
In particular, whenever $f^+(y)$ is finite, the supremum in~\defn{gen-inv} is attained, so $f^+(y)$ is itself the largest $y$-quantile:
\begin{equation}
    f(f^+(y)-) \leq y \leq f(f^+(y)).
\end{equation}
\end{proposition}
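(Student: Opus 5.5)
The plan is to show that all five quantities coincide with a single threshold. Set $S_f=\{x: f(x)\le y\}$ and $T_f=\{x: f(x)>y\}$, and similarly $S_g$, $T_g$. Because $f$ is non-decreasing, $S_f$ is a down-set and $T_f$ is its complement, an up-set. Hence $\sup S_f=\inf T_f$, with the conventions $\sup\emptyset=-\infty$ and $\inf\emptyset=+\infty$. This gives (a)$=$(c) immediately.

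The function $g$ need not be monotone, so the next step compares $g$ with $f$ pointwise. Fix $x<x'$. Monotonicity of $f$ gives
\[
g(x)\le f(x+)\le f(x')\qquad\text{and}\qquad f(x)\le f(x'-)\le g(x').
\]
Let $c=\sup S_f=\inf T_f$.

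For $x<c$, there is some $x'\in S_f$ with $x'>x$, so $g(x)\le f(x')\le y$ and hence $x\in S_g$.

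For $x>c$, there is some $x''\in T_f$ with $x''<x$, so $g(x)\ge f(x'')>y$ and hence $x\in T_g$.

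Thus $S_g$ contains $(-\infty,c)$ and is disjoint from $(c,\infty)$, while $T_g$ is its complement. This yields $\sup S_g=\inf T_g=c$, including in the degenerate cases $c=\pm\infty$. So (b)$=$(d)$=c$.

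To connect with $f^+(y)$, take $g$ to be arbitrary in the sandwich. Let $Q$ be the set of $y$-quantiles, so $x\in Q$ means $f(x-)\le y\le f(x+)$.

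First, every $x\in Q$ satisfies $x\le c$. Indeed, if $x>c$, choose $x''\in(c,x)$. Then $x''\in T_f$, so $f(x-)\ge f(x'')>y$, and $x\notin Q$. Hence $f^+(y)\le c$.

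Conversely, suppose $c$ is finite. Points slightly left of $c$ lie in $S_f$ and points slightly right lie in $T_f$. Therefore $f(c-)\le y$ and $f(c+)\ge y$, so $c\in Q$ and $f^+(y)=c$.

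The cases $c=\pm\infty$ need a separate check. If $c=-\infty$, then $T_f=\mathbb{R}$; the argument above shows $Q=\emptyset$, so $f^+(y)=-\infty=c$. If $c=+\infty$, then $S_f=\mathbb{R}$. Here $Q$ may be empty, in which case $f^+(y)=-\infty\ne c$ under the stated convention, for example when $f$ is bounded above by a value strictly less than $y$. This mismatch is the one delicate point. I would handle it either by reading $f^+(y)$ as infinite in that case or by noting that the equality is only needed when the quantities are finite. The final claim already restricts to finite $f^+(y)$, and that case is fully covered by the argument above.

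For the final claim, let $c=f^+(y)$ be finite. Since $c\in Q$, the supremum in \defn{gen-inv} is attained. We already have $f(c-)\le y$. It remains to show $y\le f(c)$, which is slightly stronger than $y\le f(c+)$.

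Apply (b) with the particular choice $g=f$ modified at the single point $c$ so that $g(c)=f(c-)$. This $g$ still satisfies the sandwich. Suppose $f(c)<y$. Then $f(c)\le y$, so $c\in S_f$. Since $c=\inf T_f$ and $c\notin T_f$, every $x>c$ lies in $T_f$. Choosing $g(c)=f(c+)$ keeps the sandwich but does not by itself give the conclusion, so the argument has to go through $f$ directly.

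The direct route seems false in general. If $f$ is left-continuous with a jump at $c$, then $f(c)=f(c-)<y<f(c+)$ is possible, and $f(c)\ge y$ fails. So the inequality $y\le f(f^+(y))$ genuinely requires $f$ to be right-continuous at $c$. I expect the paper intends its cumulative distribution functions, which are right-continuous. I would state that assumption explicitly, or else prove the inequality in the form $y\le f(f^+(y)+)$. Settling this is the main obstacle.
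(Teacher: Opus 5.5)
Your argument is correct, and both points you flag as delicate are genuine defects in the statement rather than gaps in your proof. For the display, take $f(x)=0$ for $x\le 0$, $f(x)=1$ for $x>0$, and $y=\tfrac12$. The only $y$-quantile is $0$, so $f^+(y)=0$, yet $f(f^+(y))=0<y$.

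The paper's own proof (its Step 4) only establishes $f(q_0-)\le y\le f(q_0+)$. The display should therefore read $y\le f(f^+(y)+)$. The paper relies on the $f(f^+(y))$ form only in the subsequent step-function corollary, where it explicitly invokes right-continuity. So state the $f(\cdot+)$ version, or add right-continuity as a hypothesis, and stop treating this as an obstacle.

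Your infinite edge case is also real. If $f(x)<y$ for all $x$, then (a)--(d) all equal $+\infty$, while $f^+(y)=\sup\emptyset=-\infty$. The paper's Step 4 implicitly treats $q_0$ as a real number. In the paper's application (a CDF with $0<y<$ total mass), $c$ is finite and $f$ is right-continuous, so nothing downstream breaks.

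Your route differs from the paper's. The paper proceeds in three stages:
\begin{itemize}
\item It proves (a)$=$(c) and (b)$=$(d) by a left-ray/right-ray partition.
\item It gets (a)$=$(b) by sandwiching $\{x: f(x+)\le y\}\subseteq\{x: g(x)\le y\}\subseteq\{x: f(x-)\le y\}$ and showing the outer suprema agree.
\item It obtains $f^+(y)\le q_0$ by choosing $g=f(\cdot-)$, and $q_0\le f^+(y)$ by citing lemmas of Wacker for the quantile inequalities.
\end{itemize}
You instead fix one threshold $c=\sup S_f=\inf T_f$. You then use the pointwise squeeze $f(x'')\le g(x)\le f(x')$ for $x''<x<x'$ to show that $S_g$ and $T_g$ split at $c$, giving all four quantities at once. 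Finally you verify directly from the down-set/up-set structure that $c$ is a quantile dominating every quantile. This is self-contained, with no external lemmas. Because it tracks $c=\pm\infty$ explicitly, it is exactly what exposes the case the paper's proof skips.

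Two small fixes are needed.
\begin{itemize}
\item Your remark that $g$ need not be monotone is false. For $x'<x$ one has $g(x')\le f(x'+)\le f(x-)\le g(x)$, so every sandwiched $g$ is non-decreasing; this is what justifies the paper's claim that $\{g\le y\}$ is a left ray. Your argument never uses the remark, so nothing breaks.
\item To conclude that a finite $f^+(y)$ falls under your finite-$c$ case, you must exclude $c=+\infty$ with $Q\neq\emptyset$. If $f\le y$ everywhere and $x_0\in Q$, then $f(x_0+)\ge y$ forces $f\equiv y$ on $(x_0,\infty)$. Every point of $(x_0,\infty)$ is then a quantile, so $f^+(y)=+\infty$.
\end{itemize}
With that line added, and your abandoned detour through a modified $g$ deleted, the proof of the corrected statement is complete.
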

\begin{proof}
We will proceed in steps.

\textbf{Step 1: (a) = (c) and (b) = (d).} For any function $h$, the sets $\{x : h(x) \leq y\}$ and $\{x : h(x) > y\}$ partition $\mathbb{R}$. When $h$ is non-decreasing (or sandwiched between non-decreasing limits), the first set is a left ray and the second is a right ray. Therefore $\sup\{x : h(x) \leq y\} = \inf\{x : h(x) > y\}$. This gives (a) = (c) (taking $h = f$) and (b) = (d) (taking $h = g$). The equivalence (a) = (c) is also discussed in \cite[Remark~1]{wacker2023please}.

\textbf{Step 2: (a) = (b).} Since $f$ itself satisfies $f(x-) \leq f(x) \leq f(x+)$, it is a valid choice of $g$.
Now for any $g$ where $f(x-) \leq g(x) \leq f(x+)$, we have the set inclusions
$
\{x : f(x+) \leq y\} \subseteq \{x : g(x) \leq y\} \subseteq \{x : f(x-) \leq y\}$,
and therefore
$
\sup\{x : f(x+) \leq y\} \leq \sup\{x : g(x) \leq y\} \leq \sup\{x : f(x-) \leq y\}$.
It suffices to show that the two extremes are equal. Let $x \in \{x : f(x-) \leq y\}$. For every $\varepsilon > 0$, we have $f((x - \varepsilon)+) \leq f(x-) \leq y$ (since $f$ is non-decreasing), so $x - \varepsilon \in \{x : f(x+) \leq y\}$. Hence $\sup\{x : f(x+) \leq y\} \geq x - \varepsilon$ for all $\varepsilon > 0$, giving $\sup\{x : f(x+) \leq y\} \geq x$. Since this holds for all $x$ with $f(x-) \leq y$, we conclude
$
\sup\{x : f(x+) \leq y\} \geq \sup\{x : f(x-) \leq y\}$.
Combined with the reverse inequality from the set inclusion, the two extremes are equal, and hence (a) = (b). This also recovers \cite[Lemma~1(n)]{wacker2023please} as a special case.

Let $q_0$ denote the common value of (a)--(d).

\textbf{Step 3: $f^+(y) \leq q_0$.} The set in \defn{gen-inv} satisfies
$
\{x : f(x-) \leq y \leq f(x+)\} \subseteq \{x : f(x-) \leq y\}$,
since the former imposes two conditions while the latter imposes only one. Taking $g(x) = f(x-)$ in (b), we have $q_0 = \sup\{x : f(x-) \leq y\}$. By the set inclusion, $f^+(y) \leq q_0$.

\textbf{Step 4: $q_0 \leq f^+(y)$.} We show that $f(q_0-) \leq y \leq f(q_0+)$, i.e., $q_0$ is a $y$-quantile, which gives $q_0 \leq \sup\{q : f(q-) \leq y \leq f(q+)\} = f^+(y)$.
\emph{Left inequality:} $f(q_0 -) \leq y$ is \cite[Lemma~1(g).(8)]{wacker2023please}.
\emph{Right inequality:} Suppose for contradiction that $f(q_0 +) < y$. Then there exists $\delta > 0$ such that $f(q_0 + \delta) < y$ (by monotonicity and the definition of the right limit). In particular $f(q_0 + \delta) \leq y$, so by \cite[Lemma~1(g).(5)]{wacker2023please}, we obtain $q_0 \geq q_0 + \delta$, a contradiction. Hence $y \leq f(q_0 +)$.

In conclusion, Steps~3 and~4 give $f^+(y) = q_0$. When $f^+(y)$ is finite, Step~4 shows $q_0$ is a $y$-quantile that dominates all others (by Step~3), so the supremum in~\defn{gen-inv} is attained.
\end{proof}

The quantile relation $f(f^+(y)-) \leq y \leq f(f^+(y))$ cannot be refined in general as both inequalities are saturated when $f$ is an invertible continuous function. However, it can be further strengthened when $f$ is a step function. Specifically, given $n$ discrete points $x_1<x_2<\cdots<x_n$ with associated weights $w_k>0$, define the step function $f(x)=\sum_{x_k\leq x}w_k$. Then it holds $f(f^+(y)-) \leq y < f(f^+(y))$.

\begin{corollary}\label{cor:discrete}
Given $n$ real numbers $x_1<x_2<\cdots<x_n$ with associated weights $w_k>0$, define the step function $f(x)=\sum_{x_k\leq x}w_k$. Then,
\begin{equation}
    f(x-) = \sum_{x_k < x} w_k, \qquad f(x+) = f(x) = \sum_{x_k \leq x} w_k.
\end{equation}
Moreover, for all $0<y<\sum_{k=1}^nw_k$, $f^+(y)$ is a $y$-quantile with the right inequality always being strict:
\begin{equation}
    f(f^+(y)-) \leq y < f(f^+(y)).
\end{equation}
\end{corollary}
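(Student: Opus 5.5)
The plan is to verify the one-sided limit formulas directly from the definition of $f$, and then combine \prop{equiv-def} with the fact that $f$ is right-continuous and takes only finitely many values.

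\textbf{Step 1: the one-sided limits.} Fix $x$. Since there are only finitely many points $x_k$, there is $\eta>0$ such that no $x_k$ lies in $(x-\eta,x)\cup(x,x+\eta)$.
\begin{itemize}
\item For $0<\varepsilon<\eta$, the set $\{k: x_k\leq x-\varepsilon\}$ equals $\{k:x_k<x\}$. Hence $f(x-\varepsilon)=\sum_{x_k<x}w_k$, and this is constant in $\varepsilon$, so it is also the left limit.
\item Likewise $\{k:x_k\leq x+\varepsilon\}=\{k:x_k\leq x\}$. This gives $f(x+)=f(x)$, i.e. right-continuity.
\end{itemize}

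\textbf{Step 2: $f^+(y)$ is finite.} Fix $0<y<W:=\sum_k w_k$.
\begin{itemize}
\item For $x<x_1$ we have $f(x)=0\leq y$, so the set in \prop{equiv-def}(a) is nonempty. Thus $f^+(y)>-\infty$.
\item For $x\geq x_n$ we have $f(x)=W>y$, so the set in (a) is bounded above by $x_n$. Thus $f^+(y)\leq x_n<\infty$.
\end{itemize}
The last clause of \prop{equiv-def} then gives $f(f^+(y)-)\leq y\leq f(f^+(y))$.

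\textbf{Step 3: the right inequality is strict.} Suppose $f(q_0)=y$, where $q_0=f^+(y)$. By right-continuity and the local constancy from Step 1, $f$ equals $f(q_0)=y\leq y$ on $[q_0,q_0+\eta)$. So points strictly larger than $q_0$ belong to $\{x:f(x)\leq y\}$. This contradicts $q_0=\sup\{x:f(x)\leq y\}$ from \prop{equiv-def}(a).

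Alternatively, by \prop{equiv-def}(c), $q_0=\inf\{x:f(x)>y\}$. Since $f$ is right-continuous and piecewise constant, this infimum is attained, which forces $f(q_0)>y$.

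The only point needing care is Step 3. It uses the positivity of the weights only implicitly; what matters is that $f$ is locally constant to the right of every point. The bound $y<W$ is essential, since for $y=W$ the supremum would be $+\infty$.
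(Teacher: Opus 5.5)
Your proof is correct and follows the paper's route. You get the one-sided limits from the isolation of the finitely many $x_k$, take the quantile sandwich from \prop{equiv-def}, and prove strictness by contradiction with the supremum in (a), using that $f$ is constant just to the right of $f^+(y)$. Your version is slightly tidier than the paper's: you explicitly check that $f^+(y)$ is finite, which the last clause of \prop{equiv-def} requires and the paper leaves implicit. You also apply local constancy at $q_0$ directly, rather than first identifying $q_0$ as a jump point $x_j$ and appealing to $x_{j+1}$.
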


\begin{proof}
Note that when the sum is finite, the limit commutes with the sum. Specifically,
$
f(x+) = \lim_{\varepsilon \searrow 0} \sum_{x_k \leq x+\varepsilon} w_k = \sum_{x_k \leq x} w_k = f(x)$,
$
f(x-) = \lim_{\varepsilon \searrow 0} \sum_{x_k \leq x-\varepsilon} w_k = \sum_{x_k < x} w_k$,
where both equalities hold because $x_k \leq x \pm \varepsilon$ for all sufficiently small $\varepsilon > 0$ if and only if $x_k \leq x$ (resp.\ $x_k < x$), since the points $\{x_k\}$ are finitely many and isolated.

Since $f$ is right-continuous, \prop{equiv-def} gives $f(f^+(y)-) \leq y \leq f(f^+(y))$. It remains to show that the right inequality is strict.

Let $q = f^+(y) = \sup\{x : f(x) \leq y\}$. Since $f$ is a step function with jumps only at $\{x_k\}$, the point $q=x_j$ must be one of the jump locations. Indeed, if $q$ were not a jump location, then $f$ would be constant in a neighborhood of $q$, and the supremum would extend further to the right, contradicting the definition of $q$.

Now suppose for contradiction that $f(x_j) \leq y$. Since $f$ is a step function, it is constant on $[x_j, x_{j+1})$ for the next jump point $x_{j+1} > x_j$. Then $f(x) \leq y$ for all $x \in [x_j, x_{j+1})$, which gives $\sup\{x : f(x) \leq y\} \geq x_{j+1} > x_j$, contradicting $f^+(y) = x_j$. Therefore $f(f^+(y)) = f(x_j) > y$.
\end{proof}

Suppose that the initial vector expands under the left singular vector basis as $\ket{b}=\sum_jw_j\ket{u_j}$.
We now consider the function
\begin{equation}
    f(\sigma)=\frac{\norm{\Pi_{\text{right},\left[0,\sigma\right]}A^{-1}\ket{b}}^2}{\norm{A^{-1}\ket{b}}^2}
    =\frac{\sum_{\sigma_j\leq \sigma}\sigma_j^{-2}\abs{w_j}^2}{\sum_{k}\sigma_k^{-2}\abs{w_k}^2}.
\end{equation}
This function can be seen as the cumulative distribution function of the discrete weights $\frac{\sigma_j^{-2}\abs{w_j}^2}{\sum_{k}\sigma_k^{-2}\abs{w_k}^2}$. Applying the above analysis, we obtain
\begin{equation}
\begin{aligned}
    \frac{\sum_{\sigma_j<f^{+}(\epsilon^2)}\sigma_j^{-2}\abs{w_j}^2}{\sum_{k}\sigma_k^{-2}\abs{w_k}^2}
    =f\left(f^{+}(\epsilon^2)-\right)
    \leq\epsilon^2,\qquad
    \frac{\sum_{\sigma_j\leq f^{+}(\epsilon^2)}\sigma_j^{-2}\abs{w_j}^2}{\sum_{k}\sigma_k^{-2}\abs{w_k}^2}
    =f\left(f^{+}(\epsilon^2)\right)
    >\epsilon^2.
\end{aligned}
\end{equation}
Finally, let us define
\begin{equation}
\begin{aligned}
    \keff=\frac{1}{f^{+}(\epsilon^2)}
    =\sup\left\{\sigma^{-1}\ \bigg|\ \frac{\sum_{\sigma_j\leq \sigma}\sigma_j^{-2}\abs{w_j}^2}{\sum_{k}\sigma_k^{-2}\abs{w_k}^2}>\epsilon^2\right\}
    =\inf\left\{\sigma^{-1}\ \bigg|\ \frac{\sum_{\sigma_j\leq \sigma}\sigma_j^{-2}\abs{w_j}^2}{\sum_{k}\sigma_k^{-2}\abs{w_k}^2}\leq\epsilon^2\right\}.
\end{aligned}
\end{equation}
This then gives the desired inequalities
\begin{equation}
\begin{aligned}
    \frac{\sum_{\sigma_j<\keff^{-1}}\sigma_j^{-2}\abs{w_j}^2}{\sum_{k}\sigma_k^{-2}\abs{w_k}^2}
    \leq\epsilon^2,\qquad
    \frac{\sum_{\sigma_j\leq \keff^{-1}}\sigma_j^{-2}\abs{w_j}^2}{\sum_{k}\sigma_k^{-2}\abs{w_k}^2}
    >\epsilon^2.
\end{aligned}
\end{equation}

\subsection{Bounds on the effective condition number}
\label{sec:trunc_keff_bound}
The effective condition number $\kappa_{\mathrm{eff}}$ from the previous subsection serves as a threshold for truncating the linear system, enabling quantum linear system solving beyond the condition number. However, evaluating $\kappa_{\mathrm{eff}}$ directly can be computationally difficult. We now present a family of upper bounds on $\kappa_{\mathrm{eff}}$ that can be used to bound the runtime of truncation-based beyond-$\kappa$ solvers.

Specifically, for any tunable parameter $t>0$, we have 
\begin{equation}
    \sum_j\sigma_j^{-2-2t}\abs{w_j}^2
    \geq\sum_{\sigma_j\leq\keff^{-1}}\sigma_j^{-2-2t}\abs{w_j}^2
    \geq\keff^{2t}\sum_{\sigma_j\leq\keff^{-1}}\sigma_j^{-2}\abs{w_j}^2
    \geq\keff^{2t}\epsilon^2\sum_{k}\sigma_k^{-2}\abs{w_k}^2,
\end{equation}
which gives a family of upper bounds indexed by $t$:
\begin{equation}
    \keff\leq\left(\frac{\sum_j\sigma_j^{-2-2t}\abs{w_j}^2}{\sum_{k}\sigma_k^{-2}\abs{w_k}^2\epsilon^2}\right)^{\frac{1}{2t}}.
\end{equation}
The hierarchy of these bounds can be understood through the following weighted power mean inequality.

\begin{lemma}[Weighted power mean inequality {\cite[Problem 8.3]{steele2004cauchy}~\cite{AoPS}}]
Let $a_j>0$ and $p_j>0$ be positive real numbers such that $\sum_{j}p_j=1$ is $\ell_1$-normalized. For any $0<s<t<\infty$, the weighted power mean
\begin{equation}
    \left(\sum_j p_ja_j^s\right)^{\frac{1}{s}}
    \leq\left(\sum_j p_ja_j^t\right)^{\frac{1}{t}}
\end{equation}
increases monotonically with the exponent.
Moreover, we have the limiting behavior
\begin{equation}
    \lim_{t\rightarrow\infty}\left(\sum_jp_ja_j^t\right)^{\frac{1}{t}}
    =\max_ja_j.
\end{equation}
\end{lemma}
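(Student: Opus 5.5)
The monotonicity claim follows from Jensen's inequality applied to a convex power function, and the limit follows from a squeeze between $p_{\max}^{1/t}\max_j a_j$ and $\max_j a_j$.

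\emph{Monotonicity.} Fix $0<s<t<\infty$ and set $r=t/s>1$. The map $\phi(u)=u^{r}$ is convex on $(0,\infty)$. Since the weights $p_j$ are positive and sum to one, Jensen's inequality applied to the points $u_j=a_j^s>0$ gives
\begin{equation}
    \left(\sum_j p_j a_j^s\right)^{t/s}\leq\sum_j p_j\left(a_j^s\right)^{t/s}=\sum_j p_j a_j^t.
\end{equation}
The map $v\mapsto v^{1/t}$ is increasing on $(0,\infty)$, so raising both sides to the power $1/t$ yields the stated inequality. If one prefers not to cite Jensen, the same step follows from H\"older's inequality with exponents $r$ and $r/(r-1)$ applied to $\sum_j p_j^{1/r}a_j^s\cdot p_j^{1-1/r}$.

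\emph{Limit.} Let $M=\max_j a_j$, attained at some index $j^\ast$. There are finitely many terms, as is implicit in the paper's setting of finitely many singular values. The upper bound follows from $a_j\leq M$ and $\sum_j p_j=1$:
\begin{equation}
    \left(\sum_j p_j a_j^t\right)^{1/t}\leq\left(M^t\sum_j p_j\right)^{1/t}=M.
\end{equation}
The lower bound follows by keeping only the $j^\ast$ term:
\begin{equation}
    \left(\sum_j p_j a_j^t\right)^{1/t}\geq p_{j^\ast}^{1/t}M.
\end{equation}
Because $p_{j^\ast}>0$, we have $p_{j^\ast}^{1/t}=\exp\left(\ln(p_{j^\ast})/t\right)\to1$ as $t\to\infty$. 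The squeeze theorem then gives the limit $M$.

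The only point needing care is the finiteness and strict positivity of the weight on the maximizing index. Both hold here because $j$ ranges over a finite set and every $p_j>0$.
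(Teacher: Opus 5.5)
Your proof is correct. Jensen's inequality for the convex map $u\mapsto u^{t/s}$ gives the monotonicity, and the squeeze $p_{j^\ast}^{1/t}\max_j a_j\leq(\sum_j p_j a_j^t)^{1/t}\leq\max_j a_j$ gives the limit, using exactly the finiteness and strict positivity of the weights. The paper gives no proof of its own and simply cites Steele's Problem 8.3 and AoPS; your argument is the standard one for this result, so there is nothing to add.
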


To apply this inequality, we rewrite our bounds on $\keff$ as
\begin{equation}
    \keff
    \leq\frac{\left(\sum_jp_ja_j^{2t}\right)^{\frac{1}{2t}}}
    {\epsilon^{\frac{1}{t}}},\qquad
    a_j=\sigma_j^{-1},\qquad
    p_j=\frac{\sigma_j^{-2}\abs{w_j}^2}{\sum_k\sigma_k^{-2}\abs{w_k}^2}.
\end{equation}
Here, we have $a_j>0$ and we can assume without loss of generality that $p_j>0$ for all $j$ as well, for otherwise we simply drop the zero terms from the summand. 
By the power mean inequality, we see that the numerator monotonically increases to
\begin{equation}
    \left(\sum_jp_ja_j^{2t}\right)^{\frac{1}{2t}}\nearrow\max_j\sigma_j^{-1}.
\end{equation}
This is at most the condition number $\kappa$, with equality when the initial state $\ket{b}$ has nonzero support on the left singular vector of $A$ with the smallest singular value.
On the other hand, the denominator monotonically increases to
\begin{equation}
    \epsilon^{\frac{1}{t}}\nearrow 1.
\end{equation}

Recall that $A=\sum_j\sigma_j\ketbra{u_j}{v_j}$, $
    \ket{b}=\sum_jw_j\ket{u_j}$, and $
    x=A^{-1}\ket{b}=\sum_j\sigma_j^{-1}w_j\ket{v_j}$.
Using the singular value transformation with an odd function extension~\cite{Gilyen2018singular},
\begin{equation}
    \left(\sum_j\sigma_j\ketbra{u_j}{v_j}\right)_{\mathbf{sv}}^{-t}
    =\sum_j\sigma_j^{-t}\ketbra{u_j}{v_j},
\end{equation}
we can then represent the bounds compactly as
\begin{equation}
    \keff\leq\left(\frac{\norm{(A)_{\mathbf{sv}}^{-t}x}}
    {\norm{x}\epsilon}\right)^{\frac{1}{t}}.
\end{equation}
For positive integers $t$, the singular value transformation reduces to the standard matrix inversion as:
\begin{equation}
    \keff\leq\begin{cases}
        \left(\frac{\norm{A^{-1\dagger}(A^\dagger A)^{-(t-1)/2}x}}
{\norm{x}\epsilon}\right)^{1/t},\quad&t\text{ is odd},\\
        \left(\frac{\norm{(A^\dagger A)^{-t/2}x}}
{\norm{x}\epsilon}\right)^{1/t},\quad&t\text{ is even}.
    \end{cases}
\end{equation}
These matrix inversions may then be analyzed using standard linear algebraic techniques.

In particular, when $t=1$, we can bound the effective condition number by
\begin{equation}
    \keff\leq\frac{\norm{A^{-1\dagger}x}}{\norm{x}\epsilon}.
\end{equation}
This yields beyond-$\kappa$ solvers whose runtime is comparable to that of the solver from \sec{gap} based on filtering with effective gap.
However, other choices of $t$ may yield solvers with even lower query complexity. For instance, we can choose
\begin{equation}
    t=\frac{\ln\left(\frac{1}{\epsilon}\right)}{\ln\ln\left(\frac{1}{\epsilon}\right)}
\end{equation}
such that
\begin{equation}
    \left(\frac{1}{\epsilon}\right)^{\frac{1}{t}}
    =\ln\left(\frac{1}{\epsilon}\right).
\end{equation}
The solver thus achieves a logarithmic dependence on the inverse accuracy without increasing the vector norm factor beyond the condition number. We collect our findings on the effective condition number in the following theorem.

\begin{theorem}[Effective condition number]
\label{thm:keff}
Let $A=\sum_j\sigma_j\ketbra{u_j}{v_j}$ be the singular value decomposition of matrix $A$ with singular values $0<\kappa^{-1}\leq\sigma_j\leq1$. Define the singular vector projections $\Pi_{\text{left},\mathcal{S}}=\sum_{\sigma_j\in\mathcal{S}}\ketbra{u_j}{u_j}$ and $\Pi_{\text{right},\mathcal{S}}=\sum_{\sigma_j\in\mathcal{S}}\ketbra{v_j}{v_j}$ for $\mathcal{S}\subseteq\mathbb{R}$ a subset of real numbers. Suppose that $\ket{b}$ is a normalized state expanded under the left singular vector basis as $\ket{b}=\sum_jw_j\ket{u_j}$, $x=A^{-1}\ket{b}=\sum_j\sigma_j^{-1}w_j\ket{v_j}$ and $\ket{x}=\frac{x}{\norm{x}}$. 

For any $\epsilon>0$, the effective condition number
\begin{equation}
    \keff
    =\sup\left\{\sigma^{-1}\ \big|\ \norm{\Pi_{\text{right},\left[0,\sigma\right]}\ket{x}}>\epsilon\right\}
    =\inf\left\{\sigma^{-1}\ \big|\ \norm{\Pi_{\text{right},\left[0,\sigma\right]}\ket{x}}\leq\epsilon\right\}
\end{equation}
satisfies
\begin{equation}
    \norm{\Pi_{\text{right},\left[0,\keff^{-1}\right)}\ket{x}}\leq\epsilon,\qquad
    \norm{\Pi_{\text{right},\left[0,\keff^{-1}\right]}\ket{x}}>\epsilon.
\end{equation}

Moreover, for any $t>0$,
\begin{equation}
    \keff\leq\left(\frac{\norm{(A)_{\mathbf{sv}}^{-t}\ket{x}}}
    {\epsilon}\right)^{\frac{1}{t}},
\end{equation}
where $\left(\sum_j\sigma_j\ketbra{u_j}{v_j}\right)_{\mathbf{sv}}^{-t}
    =\sum_j\sigma_j^{-t}\ketbra{u_j}{v_j}$ is the singular value transformation, which reduces to
\begin{equation}
    \keff\leq\begin{cases}
        \left(\frac{\norm{A^{-1\dagger}(A^\dagger A)^{-(t-1)/2}\ket{x}}}
{\epsilon}\right)^{1/t},\quad&t\text{ is odd},\\
        \left(\frac{\norm{(A^\dagger A)^{-t/2}\ket{x}}}
{\epsilon}\right)^{1/t},\quad&t\text{ is even},
    \end{cases}
\end{equation}
when $t$ is a positive integer.
For $t=1$, this gives
\begin{equation}
    \keff\leq\frac{\norm{A^{-1\dagger}\ket{x}}}{\epsilon},
\end{equation}
while $t=\operatorname{\mathbf{\Theta}}\left(\frac{\log\left(\frac{1}{\epsilon}\right)}{\log\log\left(\frac{1}{\epsilon}\right)}\right)$ yields bounds with logarithmic dependence on the inverse accuracy.
As $t$ increases, the vector norm $\norm{(A)_{\mathbf{sv}}^{-t}\ket{x}}^{1/t}$ increases monotonically below $\kappa$ while the inverse error $(1/\epsilon)^{1/t}$ decreases to unity.
\end{theorem}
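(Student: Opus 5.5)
The plan is to split the proof into three parts: the quantile characterization, the moment bounds, and the monotonicity in $t$. All three reduce to the discrete computations already carried out in \sec{trunc_keff_def} and \sec{trunc_keff_bound}.

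\textbf{Quantile characterization.} Define the step function
\[
f(\sigma)=\norm{\Pi_{\text{right},[0,\sigma]}\ket{x}}^2=\sum_{\sigma_j\leq\sigma}p_j,
\qquad p_j=\frac{\sigma_j^{-2}\abs{w_j}^2}{\sum_k\sigma_k^{-2}\abs{w_k}^2}.
\]
After merging repeated singular values and dropping zero weights, this is exactly the setting of \cor{discrete}. Its total mass is $1$.
\begin{itemize}
\item For $0<\epsilon<1$, \cor{discrete} applied at $y=\epsilon^2$ gives $f(f^+(\epsilon^2)-)\leq\epsilon^2<f(f^+(\epsilon^2))$.
\item By \prop{equiv-def}(a),(c), $f^+(\epsilon^2)=\sup\{\sigma: f(\sigma)\leq\epsilon^2\}=\inf\{\sigma: f(\sigma)>\epsilon^2\}$.
\item Set $\keff=1/f^+(\epsilon^2)$. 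Taking reciprocals swaps $\sup$ and $\inf$, which yields the two set descriptions in the theorem.
\item Taking square roots gives the two displayed inequalities, since $f(\sigma-)=\norm{\Pi_{\text{right},[0,\sigma)}\ket{x}}^2$.
\end{itemize}
The edge case $\epsilon\geq1$ needs a separate check. There the set $\{\sigma:\norm{\Pi_{\text{right},[0,\sigma]}\ket{x}}>\epsilon\}$ is empty, so I will adopt the paper's convention: either restrict to $\epsilon<1$ or note that $\keff$ degenerates. The inequalities are only meaningful for $\epsilon<1$.

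\textbf{Moment bounds.} Fix $t>0$ and write $\sigma^*=\keff^{-1}$.
\begin{enumerate}
\item Every term with $\sigma_j\leq\sigma^*$ satisfies $\sigma_j^{-2t}\geq\keff^{2t}$.
\item Restricting the sum $\sum_j p_j\sigma_j^{-2t}$ to $\sigma_j\leq\sigma^*$ and using the strict quantile inequality gives
\[
\sum_j p_j\sigma_j^{-2t}\geq\keff^{2t}\sum_{\sigma_j\leq\sigma^*}p_j>\keff^{2t}\epsilon^2 .
\]
\item Identify the left side as $\norm{(A)_{\mathbf{sv}}^{-t}\ket{x}}^2$. Expand $\ket{x}=\sum_j\sigma_j^{-1}w_j\ket{v_j}/\norm{x}$. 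Orthonormality of $\{\ket{u_j}\}$ then gives $\norm{\sum_j\sigma_j^{-t-1}w_j\ket{u_j}}^2/\norm{x}^2=\sum_jp_j\sigma_j^{-2t}$.
\item Take $2t$-th roots to obtain the general bound.
\end{enumerate}

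For integer $t$, I will verify the matrix identities from the SVD.
\begin{itemize}
\item $(A^\dagger A)^{-1/2}=\sum_j\sigma_j^{-1}\ketbra{v_j}{v_j}$.
\item $A^{-1\dagger}=\sum_j\sigma_j^{-1}\ketbra{u_j}{v_j}$.
\item For odd $t$: $A^{-1\dagger}(A^\dagger A)^{-(t-1)/2}=\sum_j\sigma_j^{-t}\ketbra{u_j}{v_j}$, which is exactly the singular value transform.
\item For even $t$: $(A^\dagger A)^{-t/2}=\sum_j\sigma_j^{-t}\ketbra{v_j}{v_j}$. This differs from the transform only by the unitary map $\ket{v_j}\mapsto\ket{u_j}$, so it has the same norm on $\ket{x}$.
\item The case $t=1$ is the odd case with no $(A^\dagger A)$ factor.
\end{itemize}

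\textbf{Choice of $t$ and monotonicity.} Choose $t=\ln(1/\epsilon)/\ln\ln(1/\epsilon)$, so that $(1/\epsilon)^{1/t}=\ln(1/\epsilon)$. The vector-norm factor is then at most $\max_j\sigma_j^{-1}\leq\kappa$, which gives the claimed logarithmic dependence.

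For monotonicity, apply the weighted power mean inequality with $a_j=\sigma_j^{-1}$ and weights $p_j$, comparing the exponents $2s<2t$. This shows $\norm{(A)_{\mathbf{sv}}^{-t}\ket{x}}^{1/t}=(\sum_jp_ja_j^{2t})^{1/(2t)}$ is non-decreasing in $t$. Its limit is $\max_{p_j>0}\sigma_j^{-1}\leq\kappa$. Meanwhile $\epsilon^{-1/t}\searrow1$ trivially.

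\textbf{Main obstacle.} The only delicate step is the quantile bookkeeping: matching $\sup$ and $\inf$ under the reciprocal map, and keeping strict versus non-strict inequalities straight. \cor{discrete} supplies exactly the strictness needed, so the rest is routine.
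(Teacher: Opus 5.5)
Your proposal is correct and follows essentially the same route as the paper: the weighted step distribution function $f$, \cor{discrete} and \prop{equiv-def} to get $\keff=1/f^+(\epsilon^2)$ with its two quantile inequalities, the lower bound $\keff^{2t}\epsilon^2$ on $\sum_j p_j\sigma_j^{-2t}=\norm{(A)_{\mathbf{sv}}^{-t}\ket{x}}^2$, and the weighted power mean inequality for monotonicity. Your restriction to $0<\epsilon<1$ and your merging of repeated singular values are exactly the hypotheses of \cor{discrete} that the paper uses implicitly; the theorem's ``for any $\epsilon>0$'' cannot give the strict inequality when $\epsilon\geq1$. Also, like the paper, your reciprocal sup/inf swap tacitly restricts to $\sigma>0$.
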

\begin{proof}
The claims follow from the analysis preceding the theorem statement applied to the normalized state $\ket{x}=\frac{x}{\norm{x}}$, together with the observations that
\begin{equation}
\begin{aligned}
    \norm{x}^2&=\norm{A^{-1}\ket{b}}^2=\sum_{j}\sigma_j^{-2}\abs{w_j}^2,\\
    \norm{(A)_{\mathbf{sv}}^{-t}x}^2&=\norm{\left(A^\dagger\right)_{\mathbf{sv}}^{-t-1}\ket{b}}^2=\sum_j\sigma_j^{-2-2t}\abs{w_j}^2,\\
    \norm{\Pi_{\text{right},\left[0,\sigma\right]}x}^2&=\norm{\Pi_{\text{right},\left[0,\sigma\right]}A^{-1}\ket{b}}^2=\sum_{\sigma_j\leq \sigma}\sigma_j^{-2}\abs{w_j}^2,\\
    \norm{\Pi_{\text{right},\left[0,\sigma\right)}x}^2&=\norm{\Pi_{\text{right},\left[0,\sigma\right)}A^{-1}\ket{b}}^2=\sum_{\sigma_j< \sigma}\sigma_j^{-2}\abs{w_j}^2.
\end{aligned}
\end{equation}
\end{proof}

\subsection{Analysis of beyond-\texorpdfstring{$\kappa$}{k} solver based on effective matrix truncation}
\label{sec:trunc_matrix}
We are now ready to present and analyze our beyond-$\kappa$ quantum linear system solver based on an effective matrix truncation. The output of the solver will approximate a quantum state proportional to the truncated solution $\Pi_{\text{right},\left[\keff^{-1},1\right]}x$. The following corollary confirms that this truncated solution, after normalization, is close to the true solution state.

\begin{corollary}
With the same notation and assumptions as~\thm{keff},
\begin{equation}
    \norm{\frac{\Pi_{\text{right},\left[\alpha_{\keff}^{-1},1\right]}x}{\norm{\Pi_{\text{right},\left[\alpha_{\keff}^{-1},1\right]}x}}
    -\ket{x}}
    \leq2\epsilon
\end{equation}
for any upper bound on the effective condition number $\alpha_{\keff}\geq\keff$.
\end{corollary}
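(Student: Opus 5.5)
Write $P=\Pi_{\text{right},\left[\alpha_{\keff}^{-1},1\right]}$ and $Q=I-P=\Pi_{\text{right},\left[0,\alpha_{\keff}^{-1}\right)}$, using that all singular values are positive. The plan is to first bound the discarded weight $\norm{Q\ket{x}}$, and then convert this into a bound on the normalized distance by the standard ``projection then renormalize'' estimate.

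For the first step, note that $\alpha_{\keff}\geq\keff$ implies $\alpha_{\keff}^{-1}\leq\keff^{-1}$. Hence $\left[0,\alpha_{\keff}^{-1}\right)\subseteq\left[0,\keff^{-1}\right)$, and the corresponding projections are nested because they are sums over subsets of the same orthonormal right singular vectors. By \thm{keff},
\begin{equation}
    \norm{Q\ket{x}}\leq\norm{\Pi_{\text{right},\left[0,\keff^{-1}\right)}\ket{x}}\leq\epsilon.
\end{equation}
By orthogonality, $\norm{P\ket{x}}^2=1-\norm{Q\ket{x}}^2\geq1-\epsilon^2$. This quantity is nonzero when $\epsilon<1$, so the normalization is well defined. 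The regime $\epsilon\geq1$ is trivial, since any two unit vectors are within distance $2\leq2\epsilon$; in that case we would interpret the left-hand side with any convention, or note that the bound holds vacuously.

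For the second step, use the triangle inequality:
\begin{equation}
    \norm{\frac{P\ket{x}}{\norm{P\ket{x}}}-\ket{x}}
    \leq\norm{\frac{P\ket{x}}{\norm{P\ket{x}}}-P\ket{x}}+\norm{P\ket{x}-\ket{x}}
    =\left(1-\norm{P\ket{x}}\right)+\norm{Q\ket{x}}.
\end{equation}
Both terms are at most $\epsilon$, because $1-\norm{P\ket{x}}\leq1-\norm{P\ket{x}}^2=\norm{Q\ket{x}}^2\leq\epsilon^2\leq\epsilon$ for $\epsilon\leq1$. This gives the bound $2\epsilon$. Alternatively, an exact computation gives the distance as $\sqrt{2-2\norm{P\ket{x}}}$, which is about $\epsilon$, but the triangle-inequality version suffices.

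There is no real obstacle here. The only points requiring care are the monotonicity of the projection in the threshold, which must use the half-open interval bound from \thm{keff}, and the handling of the degenerate case $\epsilon\geq1$, where $P\ket{x}$ could vanish.
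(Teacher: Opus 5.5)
Your proposal is correct and follows essentially the same route as the paper: a triangle inequality through the unnormalized projected vector $P\ket{x}$ (the paper's $y/\norm{x}$), followed by the nesting $\Pi_{\text{right},[0,\alpha_{\keff}^{-1})}\leq\Pi_{\text{right},[0,\keff^{-1})}$ and the half-open bound from \thm{keff}. Two small refinements go beyond what the paper makes explicit: you bound the renormalization term via $1-\norm{P\ket{x}}\leq\norm{Q\ket{x}}^2$ instead of the reverse triangle inequality, and you handle the case $\epsilon\geq1$, where $P\ket{x}$ may vanish, explicitly.
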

\begin{proof}
Applying the estimate
\begin{equation}
    \norm{\frac{y}{\norm{y}}-\frac{x}{\norm{x}}}
    \leq\norm{\frac{y}{\norm{y}}-\frac{y}{\norm{x}}}
    +\norm{\frac{y}{\norm{x}}-\frac{x}{\norm{x}}}
    \leq\frac{2\norm{y-x}}{\norm{x}}
\end{equation}
with $y=\Pi_{\text{right},\left[\alpha_{\keff}^{-1},1\right]}x$, we obtain
\begin{equation}
    \norm{\frac{\Pi_{\text{right},\left[\alpha_{\keff}^{-1},1\right]}x}{\norm{\Pi_{\text{right},\left[\alpha_{\keff}^{-1},1\right]}x}}
    -\frac{x}{\norm{x}}}
    \leq\frac{2\norm{\Pi_{\text{right},\left[\alpha_{\keff}^{-1},1\right]}x-x}}{\norm{x}}
    =\frac{2\norm{\Pi_{\text{right},\left[0,\alpha_{\keff}^{-1}\right)}x}}{\norm{x}}.
\end{equation}
Since $\alpha_{\keff}^{-1}\leq\keff^{-1}$, the associated projections are partially ordered as $\Pi_{\text{right},\left[0,\alpha_{\keff}^{-1}\right)}\leq\Pi_{\text{right},\left[0,\keff^{-1}\right)}$, which implies
\begin{equation}
    \norm{\Pi_{\text{right},\left[0,\alpha_{\keff}^{-1}\right)}x}
    \leq\norm{\Pi_{\text{right},\left[0,\keff^{-1}\right)}x}.
\end{equation}
The claimed inequality then follows from~\thm{keff} with the normalization $\ket{x}=\frac{x}{\norm{x}}$.
\end{proof}

As noted in \sec{intro_trunc}, we do \emph{not} perform the singular vector projection, but instead invoke a quantum linear system solver with effective condition number $\kappa_{\mathrm{eff}}$. 
Specifically, we denote the solver by the unitary $U\left(O_A,O_b,\kappa,\epsilon\right)$, where $\kappa$ is an upper bound on the condition number and $\epsilon$ is the target accuracy. The solver then acts on the ancilla and system register jointly and has the behavior
\begin{equation}
\begin{aligned}
    \norm{\left(\bra{0}\otimes I\right)U\left(O_A,O_b,\kappa,\epsilon\right)\ket{00}}^2
    &>\frac{1}{2},\\
    \norm{\frac{\left(\bra{0}\otimes I\right)U\left(O_A,O_b,\kappa,\epsilon\right)\ket{00}}{\norm{\left(\bra{0}\otimes I\right)U\left(O_A,O_b,\kappa,\epsilon\right)\ket{00}}}-\ket{x}}
    &=\operatorname{\mathbf{O}}(\epsilon),
\end{aligned}
\end{equation}
guaranteed to succeed with a constant probability and produce a quantum state close to the normalized solution state.
Since $\kappa_{\mathrm{eff}}$ is not an upper bound on the true condition number, the solver's output is no longer well defined in the usual sense. Instead, we require it to correctly output the projected solution state supported on terms whose singular values exceed $\kappa_{\mathrm{eff}}^{-1}$.

\begin{definition}[Strong truncation property]
\label{defn:strong_trunc}
Let $A$ be a matrix with $\norm{A}\leq1$ block encoded by $O_A$, and $\ket{b}$ be a normalized quantum state prepared by $O_b$.
Assume that $A$ is invertible and denote $x=A^{-1}\ket{b}$ and $\ket{x}=\frac{x}{\norm{x}}$.
Suppose that $U\left(O_A,O_b,\kappa,\epsilon\right)$ is a quantum linear system solver such that with success probability $>\frac{1}{2}$,
\begin{equation}
\begin{aligned}
    \norm{\frac{\left(\bra{0}\otimes I\right)U\left(O_A,O_b,\kappa,\epsilon\right)\ket{00}}{\norm{\left(\bra{0}\otimes I\right)U\left(O_A,O_b,\kappa,\epsilon\right)\ket{00}}}-\ket{x}}
    &=\operatorname{\mathbf{O}}(\epsilon),
\end{aligned}
\end{equation}
provided that $\kappa\geq\norm{A^{-1}}$ and $\epsilon>0$.

We say that the solver satisfies the \emph{strong truncation property} if with success probability $>\frac{1}{2}$,
\begin{equation}
\begin{aligned}
    \norm{\frac{\left(\bra{0}\otimes I\right)U\left(O_A,O_b,\alpha_{\keff},\epsilon\right)\ket{00}}
    {\norm{\left(\bra{0}\otimes I\right)U\left(O_A,O_b,\alpha_{\keff},\epsilon\right)\ket{00}}}
    -\frac{\Pi_{\text{right},\left[\alpha_{\keff}^{-1},1\right]}x}{\norm{\Pi_{\text{right},\left[\alpha_{\keff}^{-1},1\right]}x}}}
    &=\operatorname{\mathbf{O}}(\epsilon),
\end{aligned}
\end{equation}
provided that $\alpha_{\keff}\geq\keff$ with the effective condition number $\keff=\keff(\epsilon)$ defined by~\thm{keff}, where $\Pi_{\text{right},\left[\alpha_{\keff}^{-1},1\right]}$ is the right singular vector projection of $A$ associated with singular values from $\left[\alpha_{\keff}^{-1},1\right]$.
\end{definition}

We can now construct a beyond-$\kappa$ solver by invoking a quantum linear system algorithm with the strong truncation property, using a suitable effective condition number.

\begin{theorem}[Beyond-$\kappa$ solver based on effective matrix truncation]
\label{thm:beyondk_strong_trunc}
Let $A$ be a matrix with $\norm{A}\leq1$ block encoded by $O_A$, and $\ket{b}$ be a normalized quantum state prepared by $O_b$.
Assume that $A$ is invertible and denote $x=A^{-1}\ket{b}$ and $\ket{x}=\frac{x}{\norm{x}}$.

Let $U$ be a quantum linear system solver satisfying the strong truncation property from~\defn{strong_trunc}. For any $\epsilon>0$, define the effective condition number $\keff=\keff(\epsilon)$ according to~\thm{keff}.
Then, with success probability $>\frac{1}{2}$,
\begin{equation}
\begin{aligned}
    \norm{\frac{\left(\bra{0}\otimes I\right)U\left(O_A,O_b,\alpha_{\keff},\epsilon\right)\ket{00}}{\norm{\left(\bra{0}\otimes I\right)U\left(O_A,O_b,\alpha_{\keff},\epsilon\right)\ket{00}}}-\ket{x}}
    &=\operatorname{\mathbf{O}}(\epsilon),
\end{aligned}
\end{equation}
provided that $\alpha_{\keff}\geq\keff$.
\end{theorem}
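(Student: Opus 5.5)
The proof is a triangle inequality that combines the strong truncation property (\defn{strong_trunc}) with the unnamed corollary preceding it. Write $\ket{\psi}$ for the normalized output $\left(\bra{0}\otimes I\right)U\left(O_A,O_b,\alpha_{\keff},\epsilon\right)\ket{00}$ divided by its norm, and $\ket{x_{\text{trunc}}}$ for the normalized truncated solution $\Pi_{\text{right},\left[\alpha_{\keff}^{-1},1\right]}x/\norm{\Pi_{\text{right},\left[\alpha_{\keff}^{-1},1\right]}x}$. Then
\begin{equation}
    \norm{\ket{\psi}-\ket{x}}\leq\norm{\ket{\psi}-\ket{x_{\text{trunc}}}}+\norm{\ket{x_{\text{trunc}}}-\ket{x}},
\end{equation}
and each term on the right is bounded separately.

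\textbf{First term.} Since $U$ satisfies the strong truncation property and $\alpha_{\keff}\geq\keff(\epsilon)$ by hypothesis, \defn{strong_trunc} applies directly. It gives success probability $>\frac{1}{2}$, meaning the squared norm of the flagged branch exceeds $\frac12$, and it gives $\norm{\ket{\psi}-\ket{x_{\text{trunc}}}}=\operatorname{\mathbf{O}}(\epsilon)$. The success-probability part of the claim is therefore inherited verbatim, because the unitary being run is literally the same.

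\textbf{Second term.} The corollary following \thm{keff} gives $\norm{\ket{x_{\text{trunc}}}-\ket{x}}\leq2\epsilon$ for any $\alpha_{\keff}\geq\keff$. Its hypotheses are exactly those of this theorem, since $A$ is invertible with $\norm{A}\leq1$, so its singular values lie in $[\kappa^{-1},1]$ for some finite $\kappa$. Adding the two bounds gives the $\operatorname{\mathbf{O}}(\epsilon)$ claim.

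The only step needing care is the well-definedness of $\ket{x_{\text{trunc}}}$, i.e.\ that $\Pi_{\text{right},\left[\alpha_{\keff}^{-1},1\right]}x\neq0$. This follows from \thm{keff} when $\epsilon<1$: the discarded weight satisfies $\norm{\Pi_{\text{right},[0,\alpha_{\keff}^{-1})}\ket{x}}\leq\norm{\Pi_{\text{right},[0,\keff^{-1})}\ket{x}}\leq\epsilon<1$, so the retained part has positive norm. When $\epsilon\geq1$ the statement is trivial, since any two unit vectors are at distance at most $2=\operatorname{\mathbf{O}}(\epsilon)$. With this caveat the proof is a two-line combination of earlier results, and I expect no real obstacle.
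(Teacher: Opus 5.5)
Your proposal is correct and follows essentially the same route as the paper's proof. The paper likewise takes the success probability and the closeness to the normalized truncated solution directly from the strong truncation property, applies the preceding corollary's $2\epsilon$ bound, and concludes with the triangle inequality. Your remark on the well-definedness of the truncated state when $\epsilon<1$ is a harmless refinement that the paper leaves implicit.
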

\begin{proof}
By the strong truncation property, the output of the solver satisfies
\begin{equation}
\begin{aligned}
    \norm{\left(\bra{0}\otimes I\right)
    U\left(O_A,O_b,\alpha_{\keff},\epsilon\right)\ket{00}}^2
    &>\frac{1}{2},\\
    \norm{\frac{\left(\bra{0}\otimes I\right)U\left(O_A,O_b,\alpha_{\keff},\epsilon\right)\ket{00}}
    {\norm{\left(\bra{0}\otimes I\right)U\left(O_A,O_b,\alpha_{\keff},\epsilon\right)\ket{00}}}
    -\frac{\Pi_{\text{right},\left[\alpha_{\keff}^{-1},1\right]}x}{\norm{\Pi_{\text{right},\left[\alpha_{\keff}^{-1},1\right]}x}}}
    &=\operatorname{\mathbf{O}}(\epsilon),
\end{aligned}
\end{equation}
Moreover, the choice of effective condition number guarantees that
\begin{equation}
    \norm{\frac{\Pi_{\text{right},\left[\alpha_{\keff}^{-1},1\right]}x}{\norm{\Pi_{\text{right},\left[\alpha_{\keff}^{-1},1\right]}x}}
    -\ket{x}}
    \leq2\epsilon.
\end{equation}
The claimed error bound then follows from the triangle inequality.
\end{proof}

\thm{beyondk_strong_trunc} applies broadly, converting any conventional quantum linear system solver with the strong truncation property into a beyond-$\kappa$ solver via truncation.
In particular, this applies to VTAA-based solvers, although a direct proof of this claim is somewhat challenging. 
The difficulty is that, when configured with the effective condition number $\kappa_{\mathrm{eff}}$ and the objective of preparing a state proportional to $\Pi_{\text{right}\left[\keff^{-1},1\right]}\ket{x}$, 
the solver aims to
apply matrix inversion on components of the initial vector corresponding to singular values above $\kappa_{\mathrm{eff}}^{-1}$ while suppressing those below $\kappa_{\mathrm{eff}}^{-1}$. 
However, both components are amplified in VTAA, transforming $\ket{b}=\sum_jw_j\ket{u_j}$ into a state proportional to
\begin{equation}
    \frac{1}{\operatorname{\mathbf{\Theta}}(\alpha_{\keff})}\ket{0}\sum_{\sigma_j\geq\alpha_{\keff}^{-1}}\sigma_j^{-1}w_j\ket{v_j}
    +\ket{1}\sum_{\sigma_j<\alpha_{\keff}^{-1}}w_j\ket{v_j},
\end{equation}
where we have omitted all irrelevant ancilla registers used by VTAA, and used singular vectors as opposed to eigenvectors of the walk operator~\cite[Section 4.1]{Low2026quantumlinearsystem} for presentational purpose.
It may be possible to show directly that a sufficiently good approximation of $\ket{x}$ can still be extracted from the output of truncated VTAA; however, this nuance may lead to an unconventional amplification schedule that is cumbersome to analyze.
To circumvent this, we take a different route: we construct a beyond-$\kappa$ solver based on an effective truncation of the initial state, which admits a cleaner, more general analysis that applies to VTAA and implies the above matrix truncation result as a corollary.

\subsection{Oracle for effective state truncation}
\label{sec:trunc_oracle}

We now develop a beyond-$\kappa$ solver based on an effective truncation of the initial state. Effectively, the solver replaces every instance of oracle $O_b$ with $O_{b_{\text{eff}}}$ that prepares a truncated initial state. The construction of this truncated state preparation oracle relies on the following rotation within the 2D subspace $\operatorname{\mathbf{Span}}(\ket{b},\ket{b_{\text{eff}}})$ spanned by $\ket{b}$ and $\ket{b_{\text{eff}}}$.

\begin{lemma}[Palais matrix {\cite[2.1.P16]{horn2012matrix}~\cite{Kudo+2024}}]
Let $\ket{b_0}$ and $\ket{b_1}$ be linearly independent unit vectors with $\braket{b_0}{b_1}\in\mathbb{R}$ and define $\ket{w}=\frac{\ket{b_0}+\ket{b_1}}{\norm{\ket{b_0}+\ket{b_1}}}$. Then, the Palais matrix 
\begin{equation}
    V=I-2\ketbra{w}{w}+2\ketbra{b_1}{b_0}
    =\left(I-2\ketbra{w}{w}\right)\left(I-2\ketbra{b_0}{b_0}\right)
\end{equation}
satisfies:
\begin{enumerate}
    \item $V\ket{b_0}=\ket{b_1}$, $V\big\vert_{\operatorname{\mathbf{Span}}^{\bot}\left(\ket{b_0},\ket{b_1}\right)}=I$; and
    \item $VV^\dagger=V^\dagger V=I$, $\operatorname{\mathbf{Det}}(V)=+1$.
\end{enumerate}
Conversely, the above constraints uniquely characterize the matrix $V$.
\end{lemma}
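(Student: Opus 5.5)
We work entirely inside the two-dimensional subspace $\mathcal{P}=\operatorname{\mathbf{Span}}(\ket{b_0},\ket{b_1})$, which is legitimate because both factors $I-2\ketbra{w}{w}$ and $I-2\ketbra{b_0}{b_0}$ are Householder reflections about vectors in $\mathcal{P}$. Each reflection is Hermitian and unitary, and each fixes every vector orthogonal to its reflecting vector. Hence each preserves $\mathcal{P}$ and acts as the identity on $\mathcal{P}^{\bot}$.

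\textbf{Step 1: the two expressions agree.} Expanding the product gives
\[
I-2\ketbra{w}{w}-2\ketbra{b_0}{b_0}+4\braket{w}{b_0}\ketbra{w}{b_0}.
\]
Set $c=\braket{b_0}{b_1}\in\mathbb{R}$. Then $\norm{\ket{b_0}+\ket{b_1}}^2=2(1+c)$, which is nonzero because the vectors are independent. It follows that $\braket{w}{b_0}=(1+c)/\norm{\ket{b_0}+\ket{b_1}}$, so
\[
4\braket{w}{b_0}\ketbra{w}{b_0}=\frac{4(1+c)}{2(1+c)}\left(\ket{b_0}+\ket{b_1}\right)\bra{b_0}=2\ketbra{b_0}{b_0}+2\ketbra{b_1}{b_0}.
\]
Substituting this back yields the first form $I-2\ketbra{w}{w}+2\ketbra{b_1}{b_0}$. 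The reality of $c$ is exactly what makes $\braket{w}{b_0}$ real and lets the cancellation go through.

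\textbf{Step 2: properties 1 and 2.} The action on $\ket{b_0}$ follows directly from the product form. First, $(I-2\ketbra{b_0}{b_0})\ket{b_0}=-\ket{b_0}$. Then reflecting $-\ket{b_0}$ through $\ket{w}$ gives $-\ket{b_0}+2\braket{w}{b_0}\ket{w}=-\ket{b_0}+(\ket{b_0}+\ket{b_1})=\ket{b_1}$. The identity on $\mathcal{P}^\bot$ was noted above. Unitarity holds because $V$ is a product of unitaries. The determinant is $(-1)(-1)=+1$, since each Householder reflection has determinant $-1$.

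\textbf{Step 3: uniqueness.} This is the only step requiring care. Let $V'$ be any matrix satisfying the constraints. It is unitary and equals $I$ on $\mathcal{P}^\bot$, so it also preserves $\mathcal{P}$, and its determinant equals the determinant of its restriction to $\mathcal{P}$. That restriction is a $2\times2$ unitary. Its determinant is $1$, so in the orthonormal basis $\{\ket{b_0},\ket{e}\}$ of $\mathcal{P}$ it has the form $\left[\begin{smallmatrix}\alpha&-\bar\beta\\ \beta&\bar\alpha\end{smallmatrix}\right]$, which lies in $SU(2)$. Its first column is prescribed by the requirement $V'\ket{b_0}=\ket{b_1}$, and in $SU(2)$ the first column determines the whole matrix. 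Therefore $V'=V$.

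The main obstacle is keeping the phase conventions straight in Steps 1 and 3. We need $c$ real for the cancellation in Step 1, and we need the determinant condition rather than merely unitarity for uniqueness, since without it the orthogonal direction could pick up an arbitrary phase.
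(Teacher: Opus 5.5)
Your proof is correct and follows the paper's route: the same expansion of the product of reflections using $\braket{w}{b_0}=(1+c)/\norm{\ket{b_0}+\ket{b_1}}$ with $c$ real, and uniqueness by restricting to the invariant plane $\operatorname{\mathbf{Span}}(\ket{b_0},\ket{b_1})$, where a determinant-one unitary is determined by its action on $\ket{b_0}$. The differences are minor. You pin down uniqueness with the explicit $SU(2)$ parametrization, whereas the paper argues that $W^\dagger V$ fixes $\ket{b_0}$ and has determinant $+1$, so both its eigenvalues on the plane are $+1$. You also spell out the unitarity, fixed-complement and determinant checks that the paper leaves as straightforward.
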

\begin{proof}
We begin with the uniqueness part. Suppose matrices $V$ and $W$ both satisfy the above constraints. Then, $\operatorname{\mathbf{Span}}\left(\ket{b_0},\ket{b_1}\right)$ and $\operatorname{\mathbf{Span}}^{\bot}\left(\ket{b_0},\ket{b_1}\right)$ are invariant under both $W$ and $V$, whereas $W^\dagger V$ satisfies:
\begin{enumerate}
    \item $(W^\dagger V)\ket{b_0}=\ket{b_0}$, $(W^\dagger V)\big\vert_{\operatorname{\mathbf{Span}}^{\bot}\left(\ket{b_0},\ket{b_1}\right)}=I$; and
    \item $(W^\dagger V)(W^\dagger V)^\dagger=(W^\dagger V)^\dagger (W^\dagger V)=I$, $\operatorname{\mathbf{Det}}(W^\dagger V)=+1$.
\end{enumerate}
This means that $W^\dagger V$ preserves $\operatorname{\mathbf{Span}}\left(\ket{b_0},\ket{b_1}\right)$, restricted to which it is unitary, has determinant $+1$, and one eigenvalue $+1$. Its other eigenvalue must therefore also be $+1$. We conclude that $W^\dagger V=I$ is the identity operator, which yields $W=V$.

Since $\ket{b_0}$ and $\ket{b_1}$ are linearly independent, $\ket{b_0}+\ket{b_1}\neq0$, so $\ket{w}$ is well defined. Then the claimed decomposition of the Palais matrix follows from a direct calculation
\begin{equation}
\begin{aligned}
    \left(I-2\ketbra{w}{w}\right)\left(I-2\ketbra{b_0}{b_0}\right)
    &=I-2\ketbra{w}{w}-2\ketbra{b_0}{b_0}+4\braket{w}{b_0}\ketbra{w}{b_0}\\
    &=I-2\ketbra{w}{w}-2\ketbra{b_0}{b_0}
    +4\frac{1+\braket{b_1}{b_0}}{2+\braket{b_0}{b_1}+\braket{b_1}{b_0}}
    \left(\ket{b_0}+\ket{b_1}\right)\bra{b_0}\\
    &=I-2\ketbra{w}{w}+2\ketbra{b_1}{b_0}.
\end{aligned}
\end{equation}
This implies
\begin{equation}
    V\ket{b_0}=-\left(I-2\ketbra{w}{w}\right)\ket{b_0}
    =-\ket{b_0}+2\frac{1+\braket{b_0}{b_1}}{2+\braket{b_0}{b_1}+\braket{b_1}{b_0}}\left(\ket{b_0}+\ket{b_1}\right)
    =\ket{b_1},
\end{equation}
while the remaining claims are straightforward to verify.
\end{proof}

When $\ket{b_0}\approx\ket{b_1}$, we expect that the Palais matrix rotating $\ket{b_0}$ to $\ket{b_1}$ is close to identity. Explicitly,
\begin{equation}
\begin{aligned}
    \norm{V-I}
    &=\norm{-2\ketbra{w}{w}+2\ketbra{b_1}{b_0}}
    =\norm{2\begin{bmatrix}
        -\ket{w} & \ket{b_1}
    \end{bmatrix}
    \begin{bmatrix}
        \bra{w}\\
        \bra{b_0}
    \end{bmatrix}}\\
    &=\norm{
    2\begin{bmatrix}
        \bra{w}\\
        \bra{b_0}
    \end{bmatrix}
    \begin{bmatrix}
        -\ket{w} & \ket{b_1}
    \end{bmatrix}}
    =\norm{2\begin{bmatrix}
        -1 & \braket{w}{b_1}\\
        -\braket{b_0}{w} & \braket{b_0}{b_1}
    \end{bmatrix}}\\
    &=\abs{\braket{b_0}{b_1}-1\pm i\sqrt{1-\braket{b_0}{b_1}^2}}
    =\sqrt{2-2\braket{b_0}{b_1}},
\end{aligned}
\end{equation}
where the third equality uses the fact that $V-I$ is unitarily diagonalizable with rank $2$~\cite[Example 1.3.23]{horn2012matrix}, and the fifth equality holds as $2\left[\begin{smallmatrix}
        -1 & \braket{w}{b_1}\\
        -\braket{b_0}{w} & \braket{b_0}{b_1}
    \end{smallmatrix}\right]$ has eigenvalues $\braket{b_0}{b_1}-1\pm i\sqrt{1-\braket{b_0}{b_1}^2}$.
When applied to the oracles for state preparation, this gives the following distance bound.

\begin{corollary}
With the same notation and assumptions as~\thm{keff}, define
\begin{equation}
    \ket{b_{\text{eff}}}
    =\frac{\Pi_{\text{left},\left[\alpha_{\keff}^{-1},1\right]}\ket{b}}{\norm{\Pi_{\text{left},\left[\alpha_{\keff}^{-1},1\right]}\ket{b}}}
\end{equation}
and let $O_{b_{\text{eff}}}=VO_b$, where $O_b$ is the unitary preparing $\ket{b}$ and $V$ is the unique unitary of determinant $+1$ 
that rotates $\ket{b}$ to $\ket{b_{\text{eff}}}$ and acts trivially on $\operatorname{\mathbf{Span}}^\bot\left(\ket{b},\ket{b_{\text{eff}}}\right)$. Then,
\begin{equation}
    O_{b_{\text{eff}}}\ket{0}=\ket{b_{\text{eff}}},\qquad
    \norm{O_{b_{\text{eff}}}-O_b}
    \leq\sqrt{2-2\sqrt{1-\frac{\epsilon^2\norm{x}^2}{\alpha_{\keff}^2}}}
    =\operatorname{\mathbf{O}}\left(\frac{\epsilon\norm{x}}{\alpha_{\keff}}\right)
\end{equation}
for any $\alpha_{\keff}\geq\keff$.
\end{corollary}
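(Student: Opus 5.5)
The plan is to apply the Palais matrix lemma with $\ket{b_0}=\ket{b}$ and $\ket{b_1}=\ket{b_{\text{eff}}}$, and then bound the resulting norm $\norm{V-I}=\sqrt{2-2\braket{b}{b_{\text{eff}}}}$ through a lower bound on the overlap.

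\textbf{Degenerate case and preparation property.} First I handle the case where $\ket{b}$ and $\ket{b_{\text{eff}}}$ are linearly dependent. Since $\Pi_{\text{left},[\alpha_{\keff}^{-1},1]}$ is an orthogonal projection, this forces $\ket{b_{\text{eff}}}=\ket{b}$. Then $V=I$ and both claims are trivial.

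Otherwise, I check the lemma's hypothesis. We have $\braket{b}{b_{\text{eff}}}=\norm{\Pi\ket{b}}^2/\norm{\Pi\ket{b}}=\norm{\Pi\ket{b}}$, where $\Pi=\Pi_{\text{left},[\alpha_{\keff}^{-1},1]}$, so the overlap is real and nonnegative. The lemma then gives $V\ket{b}=\ket{b_{\text{eff}}}$, hence $O_{b_{\text{eff}}}\ket{0}=VO_b\ket{0}=\ket{b_{\text{eff}}}$. By unitary invariance, $\norm{O_{b_{\text{eff}}}-O_b}=\norm{(V-I)O_b}=\norm{V-I}=\sqrt{2-2\norm{\Pi\ket{b}}}$.

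\textbf{Lower bound on $\norm{\Pi\ket{b}}$.} This is the main step. I will show
\[
\norm{(I-\Pi)\ket{b}}^2=\sum_{\sigma_j<\alpha_{\keff}^{-1}}\abs{w_j}^2\leq\frac{\epsilon^2\norm{x}^2}{\alpha_{\keff}^2}.
\]
For each $j$ in the sum, $\abs{w_j}^2=\sigma_j^2\cdot\sigma_j^{-2}\abs{w_j}^2\leq\alpha_{\keff}^{-2}\sigma_j^{-2}\abs{w_j}^2$. Summing gives
\[
\norm{(I-\Pi)\ket{b}}^2\leq\alpha_{\keff}^{-2}\sum_{\sigma_j<\alpha_{\keff}^{-1}}\sigma_j^{-2}\abs{w_j}^2=\alpha_{\keff}^{-2}\norm{\Pi_{\text{right},[0,\alpha_{\keff}^{-1})}x}^2.
\]
Since $\alpha_{\keff}^{-1}\leq\keff^{-1}$, the projections are ordered, so this is at most $\alpha_{\keff}^{-2}\norm{\Pi_{\text{right},[0,\keff^{-1})}x}^2$. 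By \thm{keff}, that quantity is at most $\alpha_{\keff}^{-2}\epsilon^2\norm{x}^2$.

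\textbf{Conclusion.} Pythagoras gives $\norm{\Pi\ket{b}}\geq\sqrt{1-\epsilon^2\norm{x}^2/\alpha_{\keff}^2}$, and substituting into $\sqrt{2-2\norm{\Pi\ket{b}}}$ yields the stated bound. This requires $\epsilon\norm{x}\leq\alpha_{\keff}$, which holds because $\epsilon\norm{x}\geq\alpha_{\keff}\norm{(I-\Pi)\ket{b}}$ and the latter is at most $\alpha_{\keff}$. Actually the relevant fact is simply that the bound on $\norm{(I-\Pi)\ket{b}}^2$ is automatically at most $1$, so the square root is real.

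The big-O form follows from $2-2\sqrt{1-u}\leq 2u$ for $u\in[0,1]$, which gives a bound of $\sqrt2\,\epsilon\norm{x}/\alpha_{\keff}$.
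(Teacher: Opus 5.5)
Your main argument follows the paper's proof step for step. It uses $\norm{V-I}=\sqrt{2-2\braket{b}{b_{\text{eff}}}}$ with overlap $\norm{\Pi_{\text{left},[\alpha_{\keff}^{-1},1]}\ket{b}}$, the termwise bound $\abs{w_j}^2\le\alpha_{\keff}^{-2}\sigma_j^{-2}\abs{w_j}^2$ for $\sigma_j<\alpha_{\keff}^{-1}$, monotonicity of the right projections in the cutoff, and \thm{keff}. The explicit constant $\sqrt{2}\,\epsilon\norm{x}/\alpha_{\keff}$ is a nice addition.

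The one thing to fix is your closing remark on why the square root is real. Neither justification works:
\begin{itemize}
\item The first is a non sequitur. An upper bound on $\alpha_{\keff}\norm{(I-\Pi)\ket{b}}$ says nothing about the larger quantity $\epsilon\norm{x}$.
\item The second is false. It is $\norm{(I-\Pi)\ket{b}}^2$ that is automatically at most $1$, not its upper bound $\epsilon^2\norm{x}^2/\alpha_{\keff}^2$.
\end{itemize}

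For a counterexample, take two singular values $1$ and $\delta\to0$, set $\epsilon^2=0.9$, and choose weights with $\abs{w_2}^2/\abs{w_1}^2=9\delta^2$. Then the $\delta$-component carries exactly $90\%$ of $\norm{x}^2$. This gives $\keff=1$ and $\norm{x}^2\to10$, so with $\alpha_{\keff}=1$ the quantity $\epsilon^2\norm{x}^2/\alpha_{\keff}^2$ tends to $9$. Meanwhile $\norm{(I-\Pi)\ket{b}}^2\to0$.

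So the displayed bound implicitly requires $\epsilon\norm{x}\le\alpha_{\keff}$; the paper's proof does not address this either. The condition does hold in the regime of interest, for example whenever $\epsilon\le1/\sqrt{2}$. In that case \thm{keff} gives $(1-\epsilon^2)\norm{x}^2\le\sum_{\sigma_j\ge\keff^{-1}}\sigma_j^{-2}\abs{w_j}^2\le\keff^2$. Hence $\epsilon^2\norm{x}^2/\alpha_{\keff}^2\le\epsilon^2/(1-\epsilon^2)\le1$.
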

\begin{proof}
When $\ket{b_{\text{eff}}}=\ket{b}$, we adopt the convention that $O_{b_{\text{eff}}}=O_b$ so the claim is trivial. Otherwise, we use~\thm{keff} to bound the distance as
\begin{equation}
\begin{aligned}
    \norm{O_{b_{\text{eff}}}-O_b}
    &=\sqrt{2-2\frac{\bra{b}\Pi_{\text{left},\left[\alpha_{\keff}^{-1},1\right]}\ket{b}}{\norm{\Pi_{\text{left},\left[\alpha_{\keff}^{-1},1\right]}\ket{b}}}}
    =\sqrt{2-2\norm{\Pi_{\text{left},\left[\alpha_{\keff}^{-1},1\right]}\ket{b}}}\\
    &=\sqrt{2-2\sqrt{1-\sum_{\sigma_j<\alpha_{\keff}^{-1}}\abs{w_j}^2}}
    \leq\sqrt{2-2\sqrt{1-\frac{1}{\alpha_{\keff}^2}\sum_{\sigma_j<\alpha_{\keff}^{-1}}\sigma_j^{-2}\abs{w_j}^2}}\\
    &\leq \sqrt{2-2\sqrt{1-\frac{1}{\alpha_{\keff}^2}\sum_{\sigma_j<\keff^{-1}}\sigma_j^{-2}\abs{w_j}^2}} = \sqrt{2-2\sqrt{1-\frac{\norm{x}^2}{\alpha_{\keff}^2}\norm{\Pi_{\text{right},[0,\keff^{-1})}\ket{x}}^2}}\\
    &\leq\sqrt{2-2\sqrt{1-\frac{\epsilon^2\norm{x}^2}{\alpha_{\keff}^2}}}
    =\operatorname{\mathbf{O}}\left(\frac{\epsilon\norm{x}}{\alpha_{\keff}}\right).
\end{aligned}
\end{equation}
where \thm{keff} was used for the inequality in the final line.
\end{proof}

\subsection{Analysis of beyond-\texorpdfstring{$\kappa$}{k} solver based on effective state truncation}
\label{sec:trunc_state}
We are now ready to present and analyze our beyond-$\kappa$ quantum linear system solver based on an effective state truncation.
As mentioned in~\sec{intro_trunc}, we do~\emph{not} perform the state truncation in the actual implementation of the algorithm. Instead, we run a quantum linear system solver configured with condition number $\keff$, while retaining the original input oracles $O_A$ and $O_b$ as well as the constant multiplicative approximation of the solution norm $\norm{x}$. Suppose that the solver has an optimal query complexity of initial state preparation, i.e., it makes $\operatorname{\mathbf{O}}\left(\frac{\keff}{\norm{x}}\right)$ queries to $O_b$. Then replacing every occurrence of $O_b$ by $O_{b_{\text{eff}}}$ introduces an error at most $\operatorname{\mathbf{O}}(\epsilon)$.
Up to this error, our solver is thus effectively configured with condition number $\kappa_{\mathrm{eff}}$ and oracles $O_A$ and $O_{b_{\text{eff}}}$.

As $\keff$ is not an upper bound on the true condition number, this is an unconventional application of quantum linear system solvers. We require that the solver outputs the correct solution state, given the promise that the initial vector has been truncated a priori.

\begin{definition}[Weak truncation property with optimal state preparation]
\label{defn:weak_trunc}
Let $A$ be a matrix with $\norm{A}\leq1$ block encoded by $O_A$, and $\ket{b}$ be a normalized quantum state prepared by $O_b$.
Assume that $A$ is invertible and denote $x=A^{-1}\ket{b}$ and $\ket{x}=\frac{x}{\norm{x}}$.
Suppose that $U\left(O_A,O_b,\kappa,\alpha_x,\epsilon\right)$ is a quantum linear system solver such that with success probability $>\frac{1}{2}$,
\begin{equation}
\begin{aligned}
    \norm{\frac{\left(\bra{0}\otimes I\right)U\left(O_A,O_b,\kappa,\alpha_x,\epsilon\right)\ket{00}}{\norm{\left(\bra{0}\otimes I\right)U\left(O_A,O_b,\kappa,\alpha_x,\epsilon\right)\ket{00}}}-\ket{x}}
    &=\operatorname{\mathbf{O}}(\epsilon),
\end{aligned}
\end{equation}
provided that $\kappa\geq\norm{A^{-1}}$, $\alpha_x$ is a constant multiplicative approximation of $\norm{x}$ and $\epsilon>0$.

We say that the solver satisfies the \emph{weak truncation property} if with success probability $>\frac{1}{2}$,
\begin{equation}
\begin{aligned}
    \norm{\frac{\left(\bra{0}\otimes I\right)U\left(O_A,O_b,\alpha_{\keff},\alpha_x,\epsilon\right)\ket{00}}
    {\norm{\left(\bra{0}\otimes I\right)U\left(O_A,O_b,\alpha_{\keff},\alpha_x,\epsilon\right)\ket{00}}}
    -\ket{x}}
    &=\operatorname{\mathbf{O}}(\epsilon),
\end{aligned}
\end{equation}
provided that $\alpha_{\keff}\geq\keff$ with the effective condition number $\keff=\keff(\epsilon)$ defined by~\thm{keff}, and $\ket{b}\in\operatorname{\mathbf{Im}}\left(\Pi_{\text{left},\left[\alpha_{\keff}^{-1},1\right]}\right)$ where $\Pi_{\text{left},\left[\alpha_{\keff}^{-1},1\right]}$ is the left singular vector projection of $A$ associated with singular values from $\left[\alpha_{\keff}^{-1},1\right]$.

Additionally, we say that the solver has an optimal query complexity of initial state preparation if $U\left(O_A,O_b,\alpha_{\keff},\alpha_x,\epsilon\right)$ makes $\operatorname{\mathbf{O}}\left(\frac{\alpha_{\keff}}{\norm{x}}\right)$ queries to $O_b$.
\end{definition}

It follows directly from the definition that the strong truncation property implies the weak truncation property. This is because, when given the promise $\ket{b}\in\operatorname{\mathbf{Im}}\left(\Pi_{\text{left},\left[\alpha_{\keff}^{-1},1\right]}\right)$, we can say that
\begin{equation}
    \frac{\Pi_{\text{right},\left[\alpha_{\keff}^{-1},1\right]}x}{\norm{\Pi_{\text{right},\left[\alpha_{\keff}^{-1},1\right]}x}}
    =\frac{\Pi_{\text{right},\left[\alpha_{\keff}^{-1},1\right]}A^{-1}\ket{b}}{\norm{\Pi_{\text{right},\left[\alpha_{\keff}^{-1},1\right]}A^{-1}\ket{b}}}
    =\frac{A^{-1}\Pi_{\text{left},\left[\alpha_{\keff}^{-1},1\right]}\ket{b}}{\norm{A^{-1}\Pi_{\text{left},\left[\alpha_{\keff}^{-1},1\right]}\ket{b}}}
    =\frac{A^{-1}\ket{b}}{\norm{A^{-1}\ket{b}}}
    =\frac{x}{\norm{x}}
    =\ket{x}.
\end{equation}
Conversely, if the solver has optimal initial state preparation complexity, then the weak truncation property implies the strong truncation property.
This gives rise to the following algorithmic template for constructing beyond-$\kappa$ solvers.

\begin{theorem}[Beyond-$\kappa$ solver based on effective state truncation]
\label{thm:beyondk_weak_trunc}
Let $A$ be a matrix with $\norm{A}\leq1$ block encoded by $O_A$, and $\ket{b}$ be a normalized quantum state prepared by $O_b$.
Assume that $A$ is invertible and denote $x=A^{-1}\ket{b}$ and $\ket{x}=\frac{x}{\norm{x}}$.

Let $U$ be a quantum linear system solver satisfying the weak truncation property from~\defn{weak_trunc} with an optimal query complexity of initial state preparation. For any $\epsilon>0$, define the effective condition number $\keff=\keff(\epsilon)$ according to~\thm{keff}.
Then, with success probability $>\frac{1}{2}$,
\begin{equation}
\begin{aligned}
    \norm{\frac{\left(\bra{0}\otimes I\right)U\left(O_A,O_b,\alpha_{\keff},\alpha_x,\epsilon\right)\ket{00}}
    {\norm{\left(\bra{0}\otimes I\right)U\left(O_A,O_b,\alpha_{\keff},\alpha_x,\epsilon\right)\ket{00}}}
    -\ket{x}}
    &=\operatorname{\mathbf{O}}(\epsilon),
\end{aligned}
\end{equation}
provided that $\alpha_{\keff}\geq\keff$, $\alpha_x$ is a constant multiplicative approximation of $\norm{x}$, and $\epsilon>0$ is sufficiently small.
\end{theorem}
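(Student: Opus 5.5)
We compare the actual run $U(O_A,O_b,\alpha_{\keff},\alpha_x,\epsilon)$ with a hypothetical run $U(O_A,O_{b_{\text{eff}}},\alpha_{\keff},\alpha_x,\epsilon)$, in which every query to $O_b$ (and $O_b^\dagger$, and controlled versions) is replaced by the truncated oracle $O_{b_{\text{eff}}}=VO_b$ from the preceding corollary. The hypothetical run is a legitimate input to the weak truncation property. Its initial vector is $\ket{b_{\text{eff}}}\in\operatorname{\mathbf{Im}}(\Pi_{\text{left},[\alpha_{\keff}^{-1},1]})$, and the matrix is still $A$, so the new solution is $x_{\text{eff}}=A^{-1}\ket{b_{\text{eff}}}$.

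The first step is to check the hypotheses of \defn{weak_trunc} for the truncated instance. We have $A^{-1}\Pi_{\text{left},[\alpha_{\keff}^{-1},1]}\ket{b}=\Pi_{\text{right},[\alpha_{\keff}^{-1},1]}x$. Hence
\[
x_{\text{eff}}=\frac{\Pi_{\text{right},[\alpha_{\keff}^{-1},1]}x}{\norm{\Pi_{\text{left},[\alpha_{\keff}^{-1},1]}\ket{b}}},
\]
whose normalized version is exactly the truncated solution state. That state is within $2\epsilon$ of $\ket{x}$ by the corollary following \thm{keff}.

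Two further points must hold for the truncated instance.
\begin{itemize}
\item The parameter $\alpha_{\keff}$ must dominate $\keff$ for the new instance. The new weighted distribution has no mass below $\alpha_{\keff}^{-1}$, so its effective condition number is at most $\alpha_{\keff}$.
\item $\alpha_x$ must remain a constant multiplicative approximation of $\norm{x_{\text{eff}}}$. We have $\norm{\Pi_{\text{right}}x}\in[\sqrt{1-\epsilon^2}\,\norm{x},\norm{x}]$ and $\norm{\Pi_{\text{left}}\ket{b}}\in[\sqrt{1-\epsilon^2\norm{x}^2/\alpha_{\keff}^2},1]$. Since $\norm{x}\le\kappa$ but not necessarily $\le\alpha_{\keff}$, this second bound needs care. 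Using $\sum_{\sigma_j<\alpha_{\keff}^{-1}}|w_j|^2\le \alpha_{\keff}^{-2}\epsilon^2\norm{x}^2$ together with the trivial bound $\le 1$ is not enough on its own. Instead, I would show $\norm{x}\lesssim \alpha_{\keff}$ up to constants for small $\epsilon$: the mass above $\alpha_{\keff}^{-1}$ is at least $(1-\epsilon^2)\norm{x}^2$ and at most $\alpha_{\keff}^2\sum|w_j|^2\le\alpha_{\keff}^2$. Thus $\norm{x}\le\alpha_{\keff}/\sqrt{1-\epsilon^2}$, and so $\norm{\Pi_{\text{left}}\ket{b}}\ge\sqrt{1-\epsilon^2/(1-\epsilon^2)}$, a constant for small $\epsilon$.
\end{itemize}
This is where "$\epsilon$ sufficiently small" is used. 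Consequently $\norm{x_{\text{eff}}}=\Theta(\norm{x})$. The weak truncation property then gives the hypothetical run success probability $>1/2$ and an output within $\mathbf{O}(\epsilon)$ of $\ket{x_{\text{eff}}}$, hence within $\mathbf{O}(\epsilon)$ of $\ket{x}$.

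The second step is a hybrid argument between the two runs. The solver makes $Q=\mathbf{O}(\alpha_{\keff}/\norm{x_{\text{eff}}})=\mathbf{O}(\alpha_{\keff}/\norm{x})$ queries to the state-preparation oracle, by optimality of initial state preparation and $\norm{x_{\text{eff}}}=\Theta(\norm{x})$. The query count should be taken for the truncated instance, to which the definition applies. Each query replacement perturbs the overall unitary by at most $\norm{O_{b_{\text{eff}}}-O_b}=\mathbf{O}(\epsilon\norm{x}/\alpha_{\keff})$, from the state-truncation corollary; the same bound holds for adjoints and controlled versions. By telescoping, the two unnormalized outputs $(\bra{0}\otimes I)U\ket{00}$ differ by $\mathbf{O}(Q\epsilon\norm{x}/\alpha_{\keff})=\mathbf{O}(\epsilon)$.

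The last step converts this into the claims of the theorem. Because the hypothetical success amplitude squared exceeds $1/2$ (I would carry a constant margin, or tolerate a slightly smaller threshold absorbed by amplification), the actual success probability stays above $1/2$ for small $\epsilon$. Normalized outputs move by at most $2\cdot\mathbf{O}(\epsilon)/\norm{\cdot}=\mathbf{O}(\epsilon)$, by the same estimate $\norm{y/\norm{y}-z/\norm{z}}\le2\norm{y-z}/\norm{z}$ used earlier. The triangle inequality then gives the result. I expect the main obstacle to be the bookkeeping in the first step: ensuring $\alpha_x$ and the query count transfer to the truncated instance, and handling the strict $>1/2$ margin.
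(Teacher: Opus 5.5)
Your proposal is correct and follows essentially the same route as the paper. You replace $O_b$ by $O_{b_{\text{eff}}}$ in a hybrid argument costing $\mathbf{O}(\alpha_{\keff}/\norm{x})\cdot\mathbf{O}(\epsilon\norm{x}/\alpha_{\keff})=\mathbf{O}(\epsilon)$, apply the weak truncation property to the truncated instance, and finish with the triangle inequality using the $2\epsilon$ bound on the truncated solution state; your bookkeeping is in fact slightly more careful than the paper's, which writes $\norm{A^{-1}\ket{b_{\text{eff}}}}=\norm{\Pi_{\text{right},[\alpha_{\keff}^{-1},1]}x}$ and so drops the factor $\norm{\Pi_{\text{left},[\alpha_{\keff}^{-1},1]}\ket{b}}^{-1}$ that you correctly bound by a constant, and which also leaves implicit your check that $\alpha_{\keff}$ dominates the effective condition number of the truncated instance.
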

\begin{proof}
Due to the optimality of initial state preparation, the solver $U\left(O_A,O_b,\alpha_{\keff},\alpha_x,\epsilon\right)$ makes $\operatorname{\mathbf{O}}\left(\frac{\alpha_{\keff}}{\norm{x}}\right)$ queries to $O_b$. Our analysis from the previous subsection then implies that 
the error incurred by replacing $O_b$ with $O_{b_{\text{eff}}}$ is at most $\operatorname{\mathbf{O}}(\epsilon)$, that is
\begin{equation}
    \norm{U\left(O_A,O_{b_{\text{eff}}},\alpha_{\keff},\alpha_x,\epsilon\right)
    -U\left(O_A,O_b,\alpha_{\keff},\alpha_x,\epsilon\right)}
    =\operatorname{\mathbf{O}}\left(\frac{\alpha_{\keff}}{\norm{x}}\right)
    \operatorname{\mathbf{O}}\left(\frac{\epsilon\norm{x}}{\alpha_{\keff}}\right)
    =\operatorname{\mathbf{O}}(\epsilon).
\end{equation}

Note that the effective initial state is supported in $O_{b_{\text{eff}}}\ket{0}
    =\ket{b_{\text{eff}}}\in\operatorname{\mathbf{Im}}\left(\Pi_{\text{left},\left[\alpha_{\keff}^{-1},1\right]}\right)$, and the corresponding solution norm satisfies 
    $\sqrt{1-\epsilon^2}\norm{x}\leq\norm{A^{-1}\ket{b_{\text{eff}}}}
    =\norm{\Pi_{\text{right},\left[\alpha_{\keff}^{-1},1\right]}x}
    \leq\norm{x}$.
Hence, $\alpha_x$ is still a constant multiplicative approximation of  
$\lVert A^{-1}\ket{b_{\text{eff}}}\rVert$ 
and by the weak truncation property,
\begin{equation}
\begin{aligned}
    \norm{\left(\bra{0}\otimes I\right)
    U\left(O_A,O_{b_{\text{eff}}},\alpha_{\keff},\alpha_x,\epsilon\right)\ket{00}}^2
    &>\frac{1}{2},\\
    \norm{\frac{\left(\bra{0}\otimes I\right)U\left(O_A,O_{b_{\text{eff}}},\alpha_{\keff},\alpha_x,\epsilon\right)\ket{00}}
    {\norm{\left(\bra{0}\otimes I\right)U\left(O_A,O_{b_{\text{eff}}},\alpha_{\keff},\alpha_x,\epsilon\right)\ket{00}}}
    -\frac{A^{-1}\ket{b_{\text{eff}}}}{\norm{A^{-1}\ket{b_{\text{eff}}}}}}
    &=\operatorname{\mathbf{O}}(\epsilon).
\end{aligned}
\end{equation}
Moreover, the choice of effective condition number guarantees that
\begin{equation}
    \norm{\frac{A^{-1}\ket{b_{\text{eff}}}}{\norm{A^{-1}\ket{b_{\text{eff}}}}}
    -\ket{x}}
    =\norm{\frac{\Pi_{\text{right},\left[\alpha_{\keff}^{-1},1\right]}x}{\norm{\Pi_{\text{right},\left[\alpha_{\keff}^{-1},1\right]}x}}
    -\frac{x}{\norm{x}}}
    \leq2\epsilon.
\end{equation}

Let us choose $\epsilon>0$ sufficiently small so that
\begin{equation}
    \norm{\left(\bra{0}\otimes I\right)
    U\left(O_A,O_{b},\alpha_{\keff},\alpha_x,\epsilon\right)\ket{00}}^2
    >\frac{1}{2},
\end{equation}
fulfilling the requirement on the success probability.
Meanwhile,
\begin{equation}
    \norm{\frac{\left(\bra{0}\otimes I\right)U\left(O_A,O_{b_{\text{eff}}},\alpha_{\keff},\alpha_x,\epsilon\right)\ket{00}}
    {\norm{\left(\bra{0}\otimes I\right)U\left(O_A,O_{b_{\text{eff}}},\alpha_{\keff},\alpha_x,\epsilon\right)\ket{00}}}
    -\frac{\left(\bra{0}\otimes I\right)U\left(O_A,O_{b},\alpha_{\keff},\alpha_x,\epsilon\right)\ket{00}}
    {\norm{\left(\bra{0}\otimes I\right)U\left(O_A,O_{b},\alpha_{\keff},\alpha_x,\epsilon\right)\ket{00}}}}
    =\operatorname{\mathbf{O}}(\epsilon).
\end{equation}
The claimed error bound now follows from the triangle inequality.
\end{proof}

Like \thm{beyondk_strong_trunc}, \thm{beyondk_weak_trunc} is broadly applicable: it converts any solver with the weak truncation property and optimal query cost for state preparation into a beyond-$\kappa$ solver. 
However, the weak truncation property has an additional promise on the support of the initial state.
Thanks to this promise, verifying the weak truncation property for VTAA is straightforward: the proof from~\cite{Low2026quantumlinearsystem} carries over line by line within $\operatorname{\mathbf{Im}}\left(\Pi_{\text{left},\left[\alpha_{\kappa_{\mathrm{eff}}}^{-1},1\right]}\right)$.
We now obtain:

\begin{corollary}[Beyond-$\kappa$ solver based on variable time amplitude amplification]
\label{cor:beyondk_vtaa}
Let $A$ be a matrix with $\norm{A}\leq1$ block encoded by $O_A$, and $\ket{b}$ be a normalized quantum state prepared by $O_b$.
Assume that $A$ is invertible and denote $x=A^{-1}\ket{b}$.

Then, the normalized solution state $\ket{x}=\frac{x}{\norm{x}}$ can be prepared to accuracy $\epsilon>0$ and success probability $>\frac{1}{2}$ with query complexity
\begin{equation}
    \operatorname{\mathbf{O}}\left(\alpha_{\keff}\log\left(\frac{\alpha_{\keff}}{\alpha_x}\right)
    \left(\log\log\left(\frac{\alpha_{\keff}}{\alpha_x}\right)+\log\left(\frac{1}{\epsilon}\right)\right)\operatorname{\mathbf{Cost}}\left(O_A\right)
    +\frac{\alpha_{\keff}}{\alpha_x}\operatorname{\mathbf{Cost}}\left(O_b\right)\right),
\end{equation}
where $\alpha_{\keff}\geq\keff$ is an upper bound on the effective condition number $\keff=\keff(\epsilon)$ defined by~\thm{keff}, and $\alpha_x$ is a constant multiplicative approximation of the solution norm, i.e., $\frac{1}{\mu} \leq \frac{\alpha_x}{\norm{x}} \leq \mu$ for some constant $\mu \geq 1$.

Otherwise, the solution norm $\norm{x}$ can be estimated to a constant multiplicative accuracy with query complexity
\begin{equation}
    \operatorname{\mathbf{O}}\left(\alpha_{\keff}\log^2\left(\frac{\alpha_{\keff}}{\alpha_x}\right)
    \log\log^2\left(\frac{\alpha_{\keff}}{\alpha_x}\right)\operatorname{\mathbf{Cost}}\left(O_A\right)
    +\frac{\alpha_{\keff}}{\alpha_x}\operatorname{\mathbf{Cost}}\left(O_b\right)\right),
\end{equation}
where $\alpha_x\leq\norm{x}$ is a lower bound on the solution norm, and $\alpha_{\keff}\geq\keff(\operatorname{\pmb{\Theta}}(1))$ is an upper bound on the effective condition number.
\end{corollary}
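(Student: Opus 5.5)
The plan is to instantiate \thm{beyondk_weak_trunc} with the VTAA-based solver of~\cite{Low2026quantumlinearsystem} (tunable VTAA with the discretized inverse state). For that we need two facts: that this solver satisfies the weak truncation property of~\defn{weak_trunc}, and that its number of queries to $O_b$ is optimal. For the first part of the corollary we run the solver with condition-number parameter $\alpha_{\keff}$ and norm estimate $\alpha_x$. The stated query count is then exactly the complexity of that solver from~\cite{Low2026quantumlinearsystem}, with $\kappa$ replaced by $\alpha_{\keff}$ and $\norm{x}$ replaced by $\alpha_x$; the two norms agree up to the constant $\mu$.

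The first step is to verify the weak truncation property under the promise $\ket{b}\in\operatorname{\mathbf{Im}}(\Pi_{\text{left},[\alpha_{\keff}^{-1},1]})$. The key observation is that the singular value transformation implementing the variable-time stages acts block-diagonally in the left/right singular subspaces. Every intermediate state therefore stays in $\operatorname{\mathbf{Im}}(\Pi_{\text{left},[\alpha_{\keff}^{-1},1]})$, or in its right counterpart, together with the ancilla registers. On this invariant subspace $A$ is effectively an operator with all singular values in $[\alpha_{\keff}^{-1},1]$. The correctness analysis of~\cite{Low2026quantumlinearsystem}, which uses only the singular values present in the input state, thus goes through line by line with condition number $\alpha_{\keff}$. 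This gives success probability $>\frac12$ and accuracy $\operatorname{\mathbf{O}}(\epsilon)$.

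The $O_b$ count $\operatorname{\mathbf{O}}(\alpha_{\keff}/\alpha_x)=\operatorname{\mathbf{O}}(\alpha_{\keff}/\norm{x})$ comes directly from the amplification schedule, which is deterministic given the parameters. \thm{beyondk_weak_trunc} then converts the real run with $O_b$ into the truncated run with $O_{b_{\text{eff}}}$ at cost $\operatorname{\mathbf{O}}(\epsilon)$. I expect the main obstacle to be one technical point: the schedule of~\cite{Low2026quantumlinearsystem} is tuned by $\alpha_x$. I would resolve it by checking that $\norm{A^{-1}\ket{b_{\text{eff}}}}\in[\sqrt{1-\epsilon^2}\norm{x},\norm{x}]$, as in the proof of \thm{beyondk_weak_trunc}, so that $\alpha_x$ remains a valid constant-factor estimate for the truncated instance.

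For the norm estimation part, I would run the norm estimator of~\cite{Low2026quantumlinearsystem} with parameter $\alpha_{\keff}\ge\keff(\Theta(1))$. Applying the same truncation argument with constant $\epsilon$, the estimator effectively sees $\ket{b_{\text{eff}}}$, whose solution norm is within a constant factor of $\norm{x}$, so constant multiplicative accuracy is preserved. Its cost is the stated $\log^2$ and $(\log\log)^2$ overhead for the $O_A$ queries. The number of $O_b$ queries is again $\operatorname{\mathbf{O}}(\alpha_{\keff}/\alpha_x)$, using the supplied lower bound $\alpha_x\le\norm{x}$. That $O_b$ bound is exactly what keeps the substitution error of the $O_{b_{\text{eff}}}$ replacement bounded by a constant, which is all the estimator needs.
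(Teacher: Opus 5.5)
\textbf{Solver part.} For the solver part you follow the paper's route: instantiate \thm{beyondk_weak_trunc} with the tunable-VTAA solver of \cite{Low2026quantumlinearsystem}, and verify the weak truncation property by noting that its analysis carries over on the invariant subspace $\operatorname{\mathbf{Im}}(\Pi_{\text{left},[\alpha_{\keff}^{-1},1]})$ and its right counterpart. You then combine its $\operatorname{\mathbf{O}}(\alpha_{\keff}/\norm{x})$ queries to $O_b$ with the fact that $\alpha_x$ remains a constant-factor estimate for the truncated instance. One small correction to the interval you quote from the proof of \thm{beyondk_weak_trunc}: since $A^{-1}\ket{b_{\text{eff}}}=\Pi_{\text{right},[\alpha_{\keff}^{-1},1]}x/\norm{\Pi_{\text{left},[\alpha_{\keff}^{-1},1]}\ket{b}}$, its upper end is $\norm{x}/\norm{\Pi_{\text{left},[\alpha_{\keff}^{-1},1]}\ket{b}}$ rather than $\norm{x}$. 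This is still $\operatorname{\mathbf{O}}(\norm{x})$ for small $\epsilon$, so nothing changes.

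\textbf{Norm-estimation part: the gap.} The paper does not argue this part separately, and your last step fails. Here $\alpha_x$ is only a lower bound and may be far below $\norm{x}$, e.g.\ $\alpha_x=1$. Each replacement of $O_b$ by $O_{b_{\text{eff}}}$ costs $\operatorname{\mathbf{O}}(\epsilon\norm{x}/\alpha_{\keff})$. A run with $\operatorname{\mathbf{O}}(\alpha_{\keff}/\alpha_x)$ queries to $O_b$ therefore has hybrid error $\operatorname{\mathbf{O}}(\epsilon\norm{x}/\alpha_x)$, not $\operatorname{\mathbf{O}}(1)$. For constant $\epsilon$ and $\alpha_x=1$ this is $\Theta(\norm{x})$, which is vacuous. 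The cancellation that worked in the first part relied on $\alpha_x=\Theta(\norm{x})$, which is exactly what is unavailable here. You also cannot shrink $\epsilon$ to $\Theta(\alpha_x/\norm{x})$: that changes the threshold $\keff(\Theta(1))$ in the statement and requires the unknown $\norm{x}$.

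\textbf{How to close it.} You need an argument that only a prefix of the estimator, costing $\operatorname{\mathbf{O}}(\alpha_{\keff}/\norm{x})$ queries to $O_b$, has to be compared with the truncated run.
\begin{itemize}
\item Take the estimator to be a sequential search whose stage $j=0,1,\ldots,\operatorname{\mathbf{Ceil}}(\log_2(\alpha_{\keff}/\alpha_x))$ uses $\operatorname{\mathbf{O}}(2^j)$ queries to $O_b$, with output determined by the first stage that fires.
\item For the analysis only, fix $j_0$ with $2^{j_0}=\Theta(\alpha_{\keff}/\norm{x})$ such that, in the truncated run, the first stage to fire is at most $j_0$ and yields a constant-factor estimate with probability at least $\frac{3}{4}$.
\item Stages $0,\ldots,j_0$ use $\operatorname{\mathbf{O}}(\alpha_{\keff}/\norm{x})$ queries to $O_b$. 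Their measurement transcripts in the real and truncated runs therefore differ in total variation by $\operatorname{\mathbf{O}}(\epsilon)$, which is below $\frac{1}{8}$ once $\epsilon$ is a small enough constant.
\item Hence the real run outputs a constant-factor estimate with probability above $\frac{1}{2}$. The later stages only contribute to the worst-case budget $\operatorname{\mathbf{O}}(\alpha_{\keff}/\alpha_x)$.
\end{itemize}
If the estimator you invoke instead spends a non-adaptive budget fixed by $\alpha_x$, you need either this restructuring or an error analysis finer than counting queries to $O_b$.
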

We re-emphasize that the complexity of preparing the solution state $\ket{x}$ has an implicit $\epsilon$ dependence through $\keff$. On the other hand, for estimating the solution norm $\norm{x}$ to constant multiplicative error, one may take $\epsilon = \operatorname{\pmb{\Theta}}(1)$ when computing $\keff$, and there is no implicit $\epsilon$ dependence. 
Additionally, our estimator has a cost that scales with any known lower bound $\alpha_x$ on the solution norm. This may be improved to the exact value of $\norm{x}$ using techniques from~\cite[Theorem 3]{brassard2002quantum} and~\cite[Section 3.3]{SimpleSearch23}. However, we leave the detailed analysis for future work.

%% file: gap.tex
In this section, we present a beyond-$\kappa$ solver based on eigenstate filtering with effective gap. We begin by discussing the input model in~\sec{gap_decomposition} and a constrained orthogonal decomposition induced by the given linear system. We then introduce the effective gap lemma in~\sec{gap_effective}. Next, we present a QSVT-based algorithm in~\sec{gap_filtering} for eigenstate filtering over the unit circle. In~\sec{gap_solver}, we propose the filtering-based beyond-$\kappa$ solver and analyze the constant prefactor of its query complexity. Finally, we develop a solution norm estimation algorithm in~\sec{gap_est} with a query complexity independent of the condition number, establishing~\thm{beyondk_filtering}.

\subsection{Block encoding and constrained orthogonal decomposition}
\label{sec:gap_decomposition}
Let $Ax=\ket{b}$ be a system of linear equations. 
Recall that the filtering-based solver operates on an augmented matrix of the form $\begin{bmatrix}
    A & -\ket{b}
\end{bmatrix}$.

Starting from the standard input model---where $O_A$ block encodes $A$ with $\norm{A} \leq 1$ and $O_b$ prepares $\ket{b}$---one can set $c = 0$ in the analysis of \sec{input_affine} to obtain a block encoding of $\frac{1}{2}\begin{bmatrix} A & -\ket{b} \end{bmatrix}$ with normalization factor $2$. However, the normalization factor can be improved to $\sqrt{2}$ through the following construction
\begin{equation}
    \frac{1}{\sqrt{2}}\begin{bmatrix}
        A & -\ket{b}
    \end{bmatrix}
    =\frac{1}{\sqrt{2}}\begin{bmatrix}
        \bra{0}\otimes I & -\bra{0}\otimes I
    \end{bmatrix}
    \begin{bmatrix}
        O_A & 0\\
        0 & I\otimes O_b
    \end{bmatrix}
    \begin{bmatrix}
        \ket{0}\otimes I & 0\\
        0 & \ket{0}\otimes\ket{0}
    \end{bmatrix}.
\end{equation}
We then denote this block encoding in terms of an overlap of isometries as
\begin{equation}
    \frac{1}{\sqrt{2}}\begin{bmatrix}
        A & -\ket{b}
    \end{bmatrix}
    =G_1^\dagger G_0,
\end{equation}
where
\begin{equation}
    G_1^\dagger=\frac{1}{\sqrt{2}}\begin{bmatrix}
        \left(\bra{0}\otimes I\right)O_A & -\bra{0}\otimes O_b
    \end{bmatrix},\qquad
    G_0=\begin{bmatrix}
        \ket{0}\otimes I & 0\\
        0 & \ket{0}\otimes \ket{0}
    \end{bmatrix}
\end{equation}
are isometries satisfying $G_0^\dagger G_0=\left[\begin{smallmatrix}
    I & 0\\
    0 & 1
\end{smallmatrix}\right]$ and $G_1^\dagger G_1=I$.
Within this block encoding of the augmented matrix, the isometry $G_1$ makes $1$ controlled query to $O_A$ and $O_b$ in the standard input model, while $G_0$ requires no query to implement.

There are two projections $\Pi_0=G_0G_0^\dagger$ and $\Pi_1=G_1G_1^\dagger$ associated naturally with the block encoding. As $\Pi_0$ and $\Pi_1$ do not commute in general, these operators are not simultaneously diagonalizable and the orthogonal decomposition $\operatorname{\mathbf{Im}}(\Pi_0) = \left(\operatorname{\mathbf{Im}}(\Pi_0) \cap \operatorname{\mathbf{Ker}}(\Pi_1)\right) \obot \left(\operatorname{\mathbf{Im}}(\Pi_0) \cap \operatorname{\mathbf{Im}}(\Pi_1)\right)$ need not hold. Nevertheless, we can utilize the following pair of decompositions.

\begin{lemma}[Constrained orthogonal decomposition]
Any orthogonal projections $\Pi_0$ and $\Pi_1$ on the same space $\mathcal{H}$ induce the following pair of orthogonal decompositions
\begin{equation}
\begin{aligned}
    \mathcal{H}
    &=\operatorname{\mathbf{Ker}}\left(\Pi_1\Pi_0\right)
    \obot\operatorname{\mathbf{Im}}\left(\Pi_0\Pi_1\right),\\
    \operatorname{\mathbf{Im}}\left(\Pi_0\right)
    &=\left(\operatorname{\mathbf{Im}}(\Pi_0)\cap\operatorname{\mathbf{Ker}}(\Pi_1)\right)
    \obot\left(\operatorname{\mathbf{Im}}(\Pi_0)\cap\operatorname{\mathbf{Ker}}(\Pi_1)\right)_{\operatorname{\mathbf{Im}}(\Pi_0)}^{\bot},
\end{aligned}
\end{equation}
where $\obot$ denotes the orthogonal direct sum and $(\cdot)_{\operatorname{\mathbf{Im}}(\Pi_0)}^{\bot}=\operatorname{\mathbf{Im}}(\Pi_0)\cap(\cdot)^{\bot}$ is the orthogonal complement within $\operatorname{\mathbf{Im}}(\Pi_0)$. Moreover, we have the space containment
\begin{equation}
\begin{aligned}
    \operatorname{\mathbf{Im}}(\Pi_0)\cap\operatorname{\mathbf{Ker}}(\Pi_1)
    &\subseteq\operatorname{\mathbf{Ker}}\left(\Pi_1\Pi_0\right),\\
    \operatorname{\mathbf{Im}}\left(\Pi_0\Pi_1\right)
    &\subseteq\left(\operatorname{\mathbf{Im}}(\Pi_0)\cap\operatorname{\mathbf{Ker}}(\Pi_1)\right)_{\operatorname{\mathbf{Im}}(\Pi_0)}^{\bot}.
\end{aligned}
\end{equation}
\end{lemma}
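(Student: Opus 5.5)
The first decomposition is the fundamental-subspace identity $\mathcal{H}=\operatorname{\mathbf{Ker}}(M)\obot\operatorname{\mathbf{Im}}(M^\dagger)$ applied to $M=\Pi_1\Pi_0$. The adjoint is $M^\dagger=\Pi_0\Pi_1$ because both projections are Hermitian. To prove the identity, first show $\operatorname{\mathbf{Im}}(M^\dagger)^\bot=\operatorname{\mathbf{Ker}}(M)$. Indeed, $y\perp M^\dagger z$ for all $z$ is equivalent to $\langle My,z\rangle=0$ for all $z$, which is equivalent to $My=0$. In finite dimensions $\operatorname{\mathbf{Im}}(M^\dagger)$ is closed, so $\mathcal{H}=\operatorname{\mathbf{Im}}(M^\dagger)\obot\operatorname{\mathbf{Im}}(M^\dagger)^\bot$, which is the claim. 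If an infinite-dimensional setting is intended, one replaces the image by its closure. I will work in finite dimensions, matching the matrices of the paper.

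The second decomposition is the usual orthogonal-complement splitting inside the subspace $\operatorname{\mathbf{Im}}(\Pi_0)$. The set $S=\operatorname{\mathbf{Im}}(\Pi_0)\cap\operatorname{\mathbf{Ker}}(\Pi_1)$ is a subspace of $\operatorname{\mathbf{Im}}(\Pi_0)$. Any closed subspace $S$ of a Hilbert space $\mathcal{K}$ satisfies $\mathcal{K}=S\obot(\mathcal{K}\cap S^\bot)$. Taking $\mathcal{K}=\operatorname{\mathbf{Im}}(\Pi_0)$ gives the claim by the definition of $(\cdot)^\bot_{\operatorname{\mathbf{Im}}(\Pi_0)}$.

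There are two containments to check.

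\textbf{First containment.} If $y\in\operatorname{\mathbf{Im}}(\Pi_0)\cap\operatorname{\mathbf{Ker}}(\Pi_1)$, then $\Pi_0y=y$, so $\Pi_1\Pi_0y=\Pi_1y=0$. Hence $y\in\operatorname{\mathbf{Ker}}(\Pi_1\Pi_0)$.

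\textbf{Second containment.} Take $y=\Pi_0\Pi_1z$. Clearly $y\in\operatorname{\mathbf{Im}}(\Pi_0)$. It remains to show $y\perp S$. There are two equivalent routes.
\begin{itemize}
\item For $s\in S$, $\langle s,\Pi_0\Pi_1z\rangle=\langle\Pi_1\Pi_0 s,z\rangle=\langle\Pi_1 s,z\rangle=0$.
\item By the first containment, $S\subseteq\operatorname{\mathbf{Ker}}(\Pi_1\Pi_0)$. By the first decomposition, that kernel is orthogonal to $\operatorname{\mathbf{Im}}(\Pi_0\Pi_1)$.
\end{itemize}

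None of these steps is a real obstacle. The only point needing care is the closedness of the ranges in the first decomposition, and finite dimensionality disposes of it.
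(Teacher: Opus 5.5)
Your proposal is correct and follows essentially the same route as the paper: the first decomposition is the identity $\mathcal{H}=\operatorname{\mathbf{Ker}}(M)\obot\operatorname{\mathbf{Im}}(M^\dagger)$ with $M=\Pi_1\Pi_0$, the second is the standard complement splitting inside $\operatorname{\mathbf{Im}}(\Pi_0)$, and the containments are argued the same way (your second route for the last containment is exactly the paper's). Your direct inner-product check $\langle s,\Pi_0\Pi_1 z\rangle=\langle\Pi_1 s,z\rangle=0$ and your remark that ranges must be closed in infinite dimensions are valid additions the paper leaves implicit.
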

\begin{proof}
The second decomposition is a standard orthogonal decomposition restricted to the subspace $\operatorname{\mathbf{Im}}(\Pi_0)$, whereas the first decomposition follows from
\begin{equation}
    \mathcal{H}=\operatorname{\mathbf{Ker}}\left(\Pi_1\Pi_0\right)\obot\left(\operatorname{\mathbf{Ker}}\left(\Pi_1\Pi_0\right)\right)^\bot
    =\operatorname{\mathbf{Ker}}\left(\Pi_1\Pi_0\right)\obot\operatorname{\mathbf{Im}}\left(\Pi_1\Pi_0\right)^\dagger
    =\operatorname{\mathbf{Ker}}\left(\Pi_1\Pi_0\right)
    \obot\operatorname{\mathbf{Im}}\left(\Pi_0\Pi_1\right).
\end{equation}
The first containment can be verified directly as
\begin{equation}
    \ket{\psi}\in\operatorname{\mathbf{Im}}(\Pi_0)\cap\operatorname{\mathbf{Ker}}(\Pi_1)
    \quad \Rightarrow\quad 
    \Pi_1\Pi_0\ket{\psi}=\Pi_1\ket{\psi}=0
    \quad \Rightarrow\quad 
    \ket{\psi}\in\operatorname{\mathbf{Ker}}\left(\Pi_1\Pi_0\right),
\end{equation}
whereas the second containment follows from
\begin{equation}
    \operatorname{\mathbf{Im}}\left(\Pi_0\Pi_1\right)
    =\left(\operatorname{\mathbf{Ker}}\left(\Pi_1\Pi_0\right)\right)^\bot
    \subseteq\left(\operatorname{\mathbf{Im}}(\Pi_0)\cap\operatorname{\mathbf{Ker}}(\Pi_1)\right)^{\bot}
\end{equation}
together with the trivial fact that $\operatorname{\mathbf{Im}}\left(\Pi_0\Pi_1\right)\subseteq\operatorname{\mathbf{Im}}\left(\Pi_0\right)$.
\end{proof}

Taking the smaller subspace from each containment then yields the following constrained orthogonal decomposition
\begin{equation}
    \left(\operatorname{\mathbf{Im}}(\Pi_0)\cap\operatorname{\mathbf{Ker}}(\Pi_1)\right)
    \obot\operatorname{\mathbf{Im}}\left(\Pi_0\Pi_1\right).
\end{equation}
Here, a generic vector from $\operatorname{\mathbf{Im}}(\Pi_0)$ has the form $G_0\begin{bmatrix}
    y\\
    a
\end{bmatrix}$ for some $y$ and $a$, partitioned conformally to the augmented matrix. For it to also lie in $\operatorname{\mathbf{Ker}}(\Pi_1)$, we require that $G_1^\dagger G_0\begin{bmatrix}
    y\\
    a
\end{bmatrix}=\frac{1}{\sqrt{2}}\begin{bmatrix}
        A & -\ket{b}
    \end{bmatrix}\begin{bmatrix}
    y\\
    a
\end{bmatrix}=0$, which is satisfied by $G_0\begin{bmatrix}
    x\\
    1
\end{bmatrix}$.

On the other hand, a generic vector from $\operatorname{\mathbf{Im}}\left(\Pi_0\Pi_1\right)$ has the form $G_0G_0^\dagger G_1z
=G_0\frac{1}{\sqrt{2}}\begin{bmatrix}
    A^\dagger z\\
    -\bra{b}z
\end{bmatrix}$ for some vector $z$. 
Suppose that we initialize the algorithm with $G_0\begin{bmatrix}
    0\\
    1
\end{bmatrix}$. To find its constrained orthogonal decomposition, we then choose $A^\dagger z=x$ to match the first component which yields $z = A^{-1\dagger} x$. This analysis can be slightly generalized to give:

\begin{corollary}
\label{cor:gap_block}
Let $A$ be a matrix with $\norm{A}\leq1$ block encoded by $O_A$, and $\ket{b}$ be a normalized quantum state prepared by $O_b$.
Assume that $A$ is invertible and denote $x=A^{-1}\ket{b}$.

For any $\beta>0$, the augmented matrix $\frac{1}{\sqrt{\beta^2+1}}\begin{bmatrix}
        \beta A & -\ket{b}
    \end{bmatrix}
    =G_1^\dagger G_0$ can be block encoded as an overlap of isometries with
\begin{equation}
\label{eq:def_g}
    G_1^\dagger=\frac{1}{\sqrt{2}}\begin{bmatrix}
        \left(\bra{0}\otimes I\right)O_A & -\bra{0}\otimes O_b
    \end{bmatrix},\qquad
    G_0=\begin{bmatrix}
        \ket{0}\otimes I & 0\\
        0 & \ket{0}\otimes \ket{0}
    \end{bmatrix}
\end{equation}
    using $1$ controlled query to $O_A$ and $O_b$. Moreover,
\begin{equation}
\begin{aligned}
    G_0\begin{bmatrix}
        0\\
        1
    \end{bmatrix}
    &=\frac{\beta}{\sqrt{\norm{x}^2+\beta^2}}
    \frac{1}{\sqrt{\norm{x}^2+\beta^2}}
    G_0\begin{bmatrix}
        x\\\beta
    \end{bmatrix}
    -\frac{\norm{x}}{\sqrt{\norm{x}^2+\beta^2}}\frac{\beta}{\norm{x}\sqrt{\norm{x}^2+\beta^2}}G_0\begin{bmatrix}
        x\\
        -\frac{\norm{x}^2}{\beta}
    \end{bmatrix}\\
    &=\frac{\beta}{\sqrt{\norm{x}^2+\beta^2}}
    \frac{1}{\sqrt{\norm{x}^2+\beta^2}}G_0\begin{bmatrix}
        x\\
        \beta
    \end{bmatrix}
    -\frac{\norm{x}}{\sqrt{\norm{x}^2+\beta^2}}
    \Pi_0 \Pi_1
    \frac{\sqrt{\beta^2+1}}{\norm{x}\sqrt{\norm{x}^2+\beta^2}}G_1A^{-1\dagger}x\\
    &\in\left(\operatorname{\mathbf{Im}}(\Pi_0)\cap\operatorname{\mathbf{Ker}}(\Pi_1)\right)
    \obot\operatorname{\mathbf{Im}}\left(\Pi_0 \Pi_1\right),
\end{aligned}
\end{equation}
where $\Pi_0=G_0G_0^\dagger$ and $\Pi_1=G_1G_1^\dagger$, and $\obot$ denotes the orthogonal direct sum.
\end{corollary}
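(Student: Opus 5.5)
The plan is to verify the statement by direct computation in three parts: the block encoding, the algebraic identity for $G_0\left[\begin{smallmatrix}0\\1\end{smallmatrix}\right]$, and membership of each term in its subspace. Throughout, write $n=\norm{x}$.

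\textbf{Block encoding.} For general $\beta$, the displayed $G_1^\dagger$ has to be reweighted: I would take
\begin{equation*}
G_1^\dagger=\frac{1}{\sqrt{\beta^2+1}}\begin{bmatrix}\beta\left(\bra{0}\otimes I\right)O_A & -\bra{0}\otimes O_b\end{bmatrix}.
\end{equation*}
This reduces to \eq{def_g} at $\beta=1$. With this choice, $G_1^\dagger G_1=\frac{\beta^2+1}{\beta^2+1}I=I$, using only that $\left(\bra{0}\otimes I\right)O_A$ and $\bra{0}\otimes O_b$ are co-isometries. Multiplying by $G_0$, which is unchanged, gives $G_1^\dagger G_0=\frac{1}{\sqrt{\beta^2+1}}\begin{bmatrix}\beta A & -\ket{b}\end{bmatrix}$. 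This is the same construction as the $\beta=1$ case, and implementing $G_1$ uses one controlled query to each of $O_A$ and $O_b$, preceded by a single-qubit rotation preparing the weights.

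\textbf{The identity.} The first line of the decomposition is pure arithmetic. The two displayed terms combine to
\begin{equation*}
\frac{\beta}{n^2+\beta^2}G_0\begin{bmatrix}x-x\\ \beta+n^2/\beta\end{bmatrix}=G_0\begin{bmatrix}0\\1\end{bmatrix}.
\end{equation*}
For the second line, I would use $\Pi_0\Pi_1G_1=G_0G_0^\dagger G_1$ together with $G_0^\dagger G_1=(G_1^\dagger G_0)^\dagger=\frac{1}{\sqrt{\beta^2+1}}\left[\begin{smallmatrix}\beta A^\dagger\\ -\bra{b}\end{smallmatrix}\right]$. Applying this to $z=A^{-1\dagger}x$ gives $\frac{1}{\sqrt{\beta^2+1}}\left[\begin{smallmatrix}\beta x\\ -\bra{b}A^{-1\dagger}x\end{smallmatrix}\right]$. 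The key observation is
\begin{equation*}
\bra{b}A^{-1\dagger}x=(A^{-1}\ket{b})^\dagger x=x^\dagger x=n^2.
\end{equation*}
Hence $\Pi_0\Pi_1G_1A^{-1\dagger}x=\frac{\beta}{\sqrt{\beta^2+1}}G_0\left[\begin{smallmatrix}x\\ -n^2/\beta\end{smallmatrix}\right]$. Checking that the scalar prefactors $\frac{n}{\sqrt{n^2+\beta^2}}\cdot\frac{\sqrt{\beta^2+1}}{n\sqrt{n^2+\beta^2}}\cdot\frac{\beta}{\sqrt{\beta^2+1}}$ collapse to $\frac{\beta}{n^2+\beta^2}$ then matches the first line.

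\textbf{Membership.} The first term lies in $\operatorname{\mathbf{Im}}(\Pi_0)$ trivially. It also lies in $\operatorname{\mathbf{Ker}}(\Pi_1)$, because
\begin{equation*}
G_1^\dagger G_0\left[\begin{smallmatrix}x\\ \beta\end{smallmatrix}\right]\propto \beta Ax-\beta\ket{b}=0,
\end{equation*}
and $\operatorname{\mathbf{Ker}}(G_1^\dagger)=\operatorname{\mathbf{Ker}}(\Pi_1)$. The second term lies in $\operatorname{\mathbf{Im}}(\Pi_0\Pi_1)$ by its form. Orthogonality of the sum follows from the constrained orthogonal decomposition lemma via its two containments, and can also be seen directly from
\begin{equation*}
\left[\begin{smallmatrix}x\\ \beta\end{smallmatrix}\right]^\dagger\left[\begin{smallmatrix}x\\ -n^2/\beta\end{smallmatrix}\right]=0.
\end{equation*}

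\textbf{Main obstacle.} There is no real difficulty; the only delicate points are the correct $\beta$-weighting of $G_1$ and the identity $\bra{b}A^{-1\dagger}x=\norm{x}^2$.
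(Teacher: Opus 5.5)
Your proof is correct and follows the paper's own route. The paper obtains this corollary by the same direct computation: $G_0\begin{bmatrix} x\\ \beta\end{bmatrix}$ is identified as lying in the kernel of $G_1^\dagger G_0$, $z=A^{-1\dagger}x$ is chosen so that $G_0G_0^\dagger G_1 z$ matches the first block, and $\bra{b}A^{-1\dagger}x=\norm{x}^2$ fixes the last entry. You were also right to reweight $G_1^\dagger$ to $\frac{1}{\sqrt{\beta^2+1}}\begin{bmatrix}\beta\left(\bra{0}\otimes I\right)O_A & -\bra{0}\otimes O_b\end{bmatrix}$, since the printed $\frac{1}{\sqrt{2}}$ version only covers $\beta=1$, and the filtering-based solver later needs your $\beta$-dependent version.
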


\subsection{Effective gap lemma}
\label{sec:gap_effective}
In the above decomposition, the first term encodes the solution $x=A^{-1}\ket{b}$ and is contained in $\operatorname{\mathbf{Ker}}\left(G_1G_1^\dagger G_0G_0^\dagger\right)$, whereas the second term is enclosed by $\operatorname{\mathbf{Im}}\left(G_0G_0^\dagger G_1G_1^\dagger\right)$. Using the fact that the smallest nonzero singular value of $G_1^\dagger G_0$ is gapped from zero by $\operatorname{\mathbf{\Omega}}\left(\frac{1}{\kappa}\right)$, one can apply eigenstate filtering with the underlying operator $G_1^\dagger G_0$ and extract the first term, although doing so incurs a query cost scaling linearly in the coniditon number $\kappa$~\cite[Appendix C]{li2025new}.

In contrast, we now present the following adaption of the effective gap lemma, which can be applied with the constrained orthogonal decomposition to give a filtering with query cost independent of the condition number.

\begin{lemma}[Effective gap lemma {~\cite[Lemma 4.2]{Lee_effectivegap}}]
\label{lem:eff_gap}
Let $\Pi_0$ and $\Pi_1$ be orthogonal projections on the same space, and define the unitary $W=\left(2\Pi_0-I\right)\left(I-2\Pi_1\right)$.
For any phase function $f:\{e^{i\theta}\ |\ \theta\in[-\pi,\pi]\}\rightarrow[-1,1]$, we have:

\begin{enumerate}
    \item for any $y$, $\norm{f(W)y}\leq\max_{\theta\in[-\pi,\pi]}\abs{f\big(e^{i\theta}\big)}\norm{y}$;
    \item for any $y\in\operatorname{\mathbf{Im}}(\Pi_0)\cap\operatorname{\mathbf{Ker}}(\Pi_1)$, $f(W)y=f(1)y$; and
    \item for any $y=\Pi_0\Pi_1 z\in\operatorname{\mathbf{Im}}\left(\Pi_0\Pi_1\right)$, $\norm{f(W)y}=\norm{f(W)\Pi_0\Pi_1 z}
    \leq\max_{\theta\in[-\pi,\pi]}\abs{f\big(e^{i\theta}\big)\frac{\theta}{2}}\norm{\Pi_1 z}$.
\end{enumerate}
\end{lemma}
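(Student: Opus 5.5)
We will use Jordan's lemma for the pair $\Pi_0,\Pi_1$. The space splits into an orthogonal direct sum of subspaces of dimension at most two, each invariant under both projections. On a one-dimensional block both projections act as $0$ or $1$. On a two-dimensional block $\mathcal{H}_k$ we can choose a basis so that $\Pi_0=\ketbra{0}{0}$ and $\Pi_1=\ketbra{\phi_k}{\phi_k}$, where $\ket{\phi_k}=\cos\gamma_k\ket{0}+\sin\gamma_k\ket{1}$ and $\gamma_k\in(0,\pi/2)$. Since $W$ is block diagonal, so is $f(W)$, which is defined by functional calculus on the normal operator $W$. Every claim then reduces to a statement on each block, and the squared norms add across blocks.

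Item~1 is immediate from this. $W$ is unitary, hence normal, so $\norm{f(W)}=\max$ of $\abs{f}$ over the spectrum of $W$, which is contained in the unit circle.

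Item~2 is also direct. If $\Pi_0y=y$ and $\Pi_1y=0$, then $(I-2\Pi_1)y=y$ and $(2\Pi_0-I)y=y$. So $y$ is a $+1$ eigenvector of $W$ and $f(W)y=f(1)y$.

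For item~3 we compute block by block.
\begin{itemize}
\item \emph{One-dimensional blocks.} $\Pi_0\Pi_1z$ is nonzero only if $\Pi_0=\Pi_1=1$ there. Then $W=(1)(-1)=-1$, i.e.\ $\theta=\pm\pi$, and the bound $\abs{f(-1)}\norm{z_k}\le\max\abs{f(e^{i\theta})\theta/2}\norm{\Pi_1z_k}$ holds because $\pi/2\ge1$.
\item \emph{Two-dimensional blocks.} $W_k=Z\cdot(I-2\ketbra{\phi_k}{\phi_k})$, where $Z=2\ketbra00-I$, and $I-2\ketbra{\phi_k}{\phi_k}$ is the reflection across the line orthogonal to $\phi_k$. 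The product of two reflections is a rotation, with eigenvalues $e^{\pm i\theta_k}$, and here $\theta_k=\pi-2\gamma_k$; we will confirm this by computing the trace $2\cos\theta_k=-2\cos 2\gamma_k$. The vector $y_k=\Pi_0\Pi_1z_k=\cos\gamma_k\braket{\phi_k}{z_k}\ket{0}$ has norm $\cos\gamma_k\norm{\Pi_1z_k}$, and $\cos\gamma_k=\sin(\theta_k/2)\le\theta_k/2$. Expanding $\ket0$ in the two eigenvectors of $W_k$ gives $f(W_k)y_k=f(e^{i\theta_k})y_+ + f(e^{-i\theta_k})y_-$ with $\norm{y_+}^2+\norm{y_-}^2=\norm{y_k}^2$. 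Hence $\norm{f(W_k)y_k}\le\max(\abs{f(e^{\pm i\theta_k})})\,(\theta_k/2)\norm{\Pi_1z_k}$, and $\theta_k/2=\abs{\pm\theta_k}/2$.
\end{itemize}
Summing the squared bounds over blocks gives item~3, since $\sum_k\norm{\Pi_1z_k}^2=\norm{\Pi_1z}^2$.

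The main obstacle is the two-dimensional block computation: pinning down the eigenphase relation $\theta_k=\pi-2\gamma_k$, checking that the orientation and sign conventions of $W=(2\Pi_0-I)(I-2\Pi_1)$ really put the gapped vector at phases $\pm\theta_k$ with $\sin(\theta_k/2)=\cos\gamma_k$, and making sure no cross terms spoil the orthogonal expansion.
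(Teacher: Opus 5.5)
Your proof is correct and takes essentially the paper's route. The paper builds the same 1D/2D invariant-subspace decomposition via Hermitian qubitization of $G_0^\dagger(I-2\Pi_1)G_0$, whose eigenvalues are $\lambda_u=\cos\theta_u$; in each block, applying $\Pi_0$ to a vector in $\operatorname{\mathbf{Im}}(\Pi_1)$ contributes the factor $\sqrt{(1-\lambda_u)/2}=\sin(\theta_u/2)$, and the bound $\sin(\theta/2)\le\abs{\theta}/2$ finishes, exactly as your Jordan-lemma computation does with $\cos\gamma_k=\sin(\theta_k/2)$ (including the $\Pi_0=\Pi_1=1$ blocks with eigenvalue $-1$ and $\pi/2\ge 1$). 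Your 2D-block calculation ($\theta_k=\pi-2\gamma_k$ from trace $-2\cos 2\gamma_k$, and $\norm{\Pi_0\Pi_1 z_k}=\cos\gamma_k\norm{\Pi_1 z_k}$) checks out, so the ``main obstacle'' you flag is already resolved.
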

\begin{proof}
The first two claims follow directly from the definition.
We present a proof of the third claim using Hermitian qubitization. Specifically, we define isometries $G_0$ and $G_1$ such that $G_0G_0^\dagger=\Pi_0$ and $G_1G_1^\dagger=\Pi_1$. Then,
\begin{enumerate}
    \item $G_0^\dagger\left(I-2G_1G_1^\dagger\right)G_0$ is Hermitian; and
    \item $G_0^\dagger\left(I-2G_1G_1^\dagger\right)^2G_0=I$.
\end{enumerate}
This satisfies all the assumptions of Hermitian qubitization~\cite[Appendix C]{Low2026quantumlinearsystem}.

Suppose that
\begin{equation}
    G_0^\dagger\left(I-2G_1G_1^\dagger\right)G_0
    =\bigobot_{\lambda_u=\pm 1}\begin{bmatrix}
        \lambda_u
    \end{bmatrix}_{\ket{\phi_u}}
    \bigobot_{-1<\lambda_u<1}\begin{bmatrix}
        \lambda_u
    \end{bmatrix}_{\ket{\phi_u}}.
\end{equation}
Here, $\ket{\phi_u}$ are eigenvectors of $G_0^\dagger\left(I-2G_1G_1^\dagger\right)G_0$ which form an orthonormal basis of the underlying space, and $\begin{bmatrix}
    \cdot
\end{bmatrix}_{\ket{\phi_u}}$ denotes the matrix representation under the basis vector $\ket{\phi_u}$.
Then, we have the matrix representations:
\begin{equation}
\begin{aligned}
    I-2G_1G_1^\dagger&=\bigobot_{\lambda_u=\pm 1}\begin{bmatrix}
        \lambda_u
    \end{bmatrix}_{\ket{\phi_{u,0}}}
    \bigobot_{-1<\lambda_u<1}\begin{bmatrix}
        \lambda_u & \sqrt{1-\lambda_u^2}\\
        \sqrt{1-\lambda_u^2} & -\lambda_u
    \end{bmatrix}_{\ket{\phi_{u,0}},\ket{\phi_{u,1}}}
    \bigobot_{\xi_v=\pm1}\begin{bmatrix}
        \xi_v
    \end{bmatrix}_{\ket{\phi_{v,0}}},\\
    G_0G_0^\dagger&=\bigobot_{\lambda_u=\pm 1}\begin{bmatrix}
        1
    \end{bmatrix}_{\ket{\phi_{u,0}}}
    \bigobot_{-1<\lambda_u<1}\begin{bmatrix}
        1 & 0\\
        0 & 0
    \end{bmatrix}_{\ket{\phi_{u,0}},\ket{\phi_{u,1}}}
    \bigobot_{\xi_v=\pm1}\begin{bmatrix}
        0
    \end{bmatrix}_{\ket{\phi_{v,0}}},\\
    W&=\bigobot_{\lambda_u=\pm 1}\begin{bmatrix}
        \lambda_u
    \end{bmatrix}_{\ket{\phi_{u,0}}}
    \bigobot_{-1<\lambda_u<1}\begin{bmatrix}
        \lambda_u & \sqrt{1-\lambda_u^2}\\
        -\sqrt{1-\lambda_u^2} & \lambda_u
    \end{bmatrix}_{\ket{\phi_{u,0}},\ket{\phi_{u,1}}}
    \bigobot_{\xi_v=\pm1}\begin{bmatrix}
        -\xi_v
    \end{bmatrix}_{\ket{\phi_{v,0}}}.
\end{aligned}
\end{equation}
Here, $     \ket{\phi_{u,0}}=G_0\ket{\phi_u}$ and $
    \ket{\phi_{u,1}}=\frac{\left(I-2G_1G_1^\dagger\right)G_0\ket{\phi_u}-\lambda_uG_0\ket{\phi_u}}{\sqrt{1-\lambda_u^2}}$
provide an orthonormal basis for the 1D/2D subspace, and $\ket{\phi_{v,0}}$ is an orthonormal basis for the (uninteresting) orthogonal complement. Within a 2D subspace, $I-2G_1G_1^\dagger$ has eigenpairs
\begin{equation}
    +1:\ket{\phi_{u,ZX+}}=\frac{1}{\sqrt{2}}\begin{bmatrix}
        \sqrt{1+\lambda_u}\\
        \sqrt{1-\lambda_u}
    \end{bmatrix}_{\ket{\phi_{u,0}},\ket{\phi_{u,1}}},\quad
    -1:\ket{\phi_{u,ZX-}}=\frac{1}{\sqrt{2}}\begin{bmatrix}
        \sqrt{1-\lambda_u}\\
        -\sqrt{1+\lambda_u}
    \end{bmatrix}_{\ket{\phi_{u,0}},\ket{\phi_{u,1}}},
\end{equation}
whereas $W=\left(2G_0G_0^\dagger-I\right)\left(I-2G_1G_1^\dagger\right)$ has eigenpairs
\begin{equation}
    e^{i\arccos(\lambda_u)}:\ket{\phi_{u,Y+}}=\frac{1}{\sqrt{2}}\begin{bmatrix}
        1\\
        i
    \end{bmatrix}_{\ket{\phi_{u,0}},\ket{\phi_{u,1}}},\quad
    e^{-i\arccos(\lambda_u)}:\ket{\phi_{u,Y-}}=\frac{1}{\sqrt{2}}\begin{bmatrix}
        1\\
        -i
    \end{bmatrix}_{\ket{\phi_{u,0}},\ket{\phi_{u,1}}}.
\end{equation}

Now suppose we start with an arbitrary $+1$ eigenvector of $G_1G_1^\dagger$ (or equivalently, $-1$ eigenvector of $I-2G_1G_1^\dagger$):
\begin{equation}
    w=\sum_{\lambda_u=-1}\alpha_u\ket{\phi_{u,0}}
    +\sum_{-1<\lambda_u<1}\alpha_u\frac{\sqrt{1-\lambda_u}\ket{\phi_{u,0}}-\sqrt{1+\lambda_u}\ket{\phi_{u,1}}}{\sqrt{2}}
    +\sum_{\xi_v=-1}\alpha_v\ket{\phi_{v,0}}.
\end{equation}
Then, applying $G_0G_0^\dagger$ mathematically transforms the above vector into
\begin{equation}
    G_0G_0^\dagger w
    =\sum_{\lambda_u=-1}\alpha_u\ket{\phi_{u,0}}
    +\sum_{-1<\lambda_u<1}\alpha_u\frac{\sqrt{1-\lambda_u}\ket{\phi_{u,0}}}{\sqrt{2}}.
\end{equation}
Finally, we apply $f(W)$ onto each eigensubspace with eigenvalue $e^{\pm i\arccos(\lambda_u)}$, producing
\begin{equation}
    f(W)G_0G_0^\dagger w
    =\sum_{-1\leq\lambda_u<1}\alpha_u\frac{\sqrt{1-\lambda_u}}{\sqrt{2}}
    \left(\frac{f\left(e^{i\arccos(\lambda_u)}\right)\ket{\phi_{u,Y+}}+
    f\left(e^{-i\arccos(\lambda_u)}\right)\ket{\phi_{u,Y-}}}{\sqrt{2}}\right).
\end{equation}
It is clear that the resulting (unnormalized) vector has length
\begin{equation}
\begin{aligned}
    \norm{f(W)G_0G_0^\dagger w}
    &=\norm{\sum_{-1\leq\lambda_u<1}\alpha_u\frac{\sqrt{1-\lambda_u}}{\sqrt{2}}
    \left(\frac{f\left(e^{i\arccos(\lambda_u)}\right)\ket{\phi_{u,Y+}}+
    f\left(e^{-i\arccos(\lambda_u)}\right)\ket{\phi_{u,Y-}}}{\sqrt{2}}\right)}\\
    &=\sqrt{\sum_{-1\leq\lambda_u<1}\abs{\alpha_u}^2\frac{1-\lambda_u}{2}
    \left(\frac{\abs{f\left(e^{i\arccos(\lambda_u)}\right)}^2}{2}
    +\frac{\abs{f\left(e^{-i\arccos(\lambda_u)}\right)}^2}{2}\right)}\\
    &\leq\sqrt{\sum_{-1\leq\lambda_u<1}\abs{\alpha_u}^2}
    \sqrt{\max_{\theta\in[-\pi,\pi]}\frac{1-\cos(\theta)}{2}\abs{f\left(e^{i\theta}\right)}^2}
    \leq\norm{w}\max_{\theta\in[-\pi,\pi]}\abs{\frac{\theta}{2}f\big(e^{i\theta}\big)}.
\end{aligned}
\end{equation}
The proof is now complete by setting $w=\Pi_1z$.
\end{proof}

\subsection{Eigenstate filtering over unit circle}
\label{sec:gap_filtering}
In light of the constrained orthogonal decomposition, our goal is to choose a filter function to preserve the eigenvalue $+1$ of the underlying unitary operator $W$ and suppress eigenvalues with gap $\delta$ on the unit circle. That is, we want to perform $\operatorname{\mathbf{F}}(W)$ such that
\begin{equation}
    \operatorname{\mathbf{F}}(1)=1,\qquad
    \max_{\delta\leq\abs{\theta}\leq\pi}\abs{\operatorname{\mathbf{F}}(e^{i\theta})}\leq\xi,\qquad
    \max_{\theta\in[-\pi,\pi]}\abs{\operatorname{\mathbf{F}}(e^{i\theta})}=1,
\end{equation}
with circular gap $\delta>0$ and precision $\xi>0$.
All the remaining eigenvalues with $\abs{\theta}<\delta$ are then suppressed by the effective gap lemma.

In this subsection, we present a QSVT-based algorithm to realize this eigenstate filtering over the unit circle, correcting the phase ambiguity in the method of \cite[Page 30]{li2025discrimination}.
The core idea is to shift the input unitary operator, so that the eigenvalue of interest is mapped to the origin. See also~\cite[Section V]{2021MartynGrand}. Specifically, starting from the unitary $W$, we construct the block encoding
\begin{equation}
    \frac{W-I}{2}
    =\left(\frac{\bra{0}+\bra{1}}{\sqrt{2}}\otimes I\right)
    \left(\ketbra{0}{0}\otimes W-\ketbra{1}{1}\otimes I\right)
    \left(\frac{\ket{0}+\ket{1}}{\sqrt{2}}\otimes I\right).
\end{equation}
Their eigenvalues correspond as follows.
\begin{enumerate}
    \item Eigenvalues of $W$ are on the unit circle. The eigenvalue of interest is $+1$, while eigenvalues to be suppressed are $\delta$-gapped away in circular distance.
    \item Eigenvalues of $\frac{W-I}{2}$ are on the circle of radius $\frac{1}{2}$ centered at $-\frac{1}{2}$. The eigenvalue of interest is $0$, while eigenvalues to be suppressed are $\sin\left(\frac{\delta}{2}\right)$-gapped away in Euclidean distance.
\end{enumerate}
See Figure \ref{fig:filtering_circle} for an illustration of this shifting.

\begin{figure}[t]
	\centering
\includegraphics[width=0.4\textwidth]{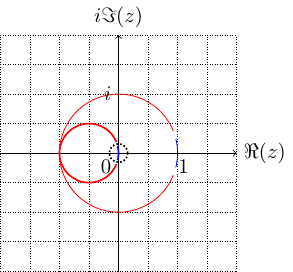}
\caption{Illustration of QSVT-based eigenstate filtering over the unit circle by shifting.}
\label{fig:filtering_circle}
\end{figure}

After the shifting, $\frac{W-I}{2}$ is still unitarily diagonalizable, and its eigenvalues coincide with singular values in magnitude. The singular value of interest is $0$, while the singular values to be suppressed are $\sin\left(\frac{\delta}{2}\right)$-gapped away. We achieve this filtering using QSVT to perform the Dolph-Chebyshev polynomials, whose relevant properties are summarized in the following lemma.

\begin{lemma}[Dolph-Chebyshev polynomials {\cite[Appendix B.2]{Dalzell2024shortcut}}]
\label{lem:dolph_cheby}
Given $0<\eta<1$, define the degree-$2l$ even polynomials
    \begin{equation}
        \operatorname{\mathbf{F}}_{\eta,2l}(x)
        =\frac{\operatorname{\mathbf{T}}_{l}\left(\frac{1+\eta^2-2x^2}{1-\eta^2}\right)}{\operatorname{\mathbf{T}}_{l}\left(\frac{1+\eta^2}{1-\eta^2}\right)},
    \end{equation}
where $\operatorname{\mathbf{T}}_l$ is the degree-$l$ Cheybshev polynomial of the first kind. The polynomial $\operatorname{\mathbf{F}}_{\eta,2l}(x)$ has the following properties.
\begin{enumerate}
    \item $\operatorname{\mathbf{F}}_{\eta,2l}(0)=1$;
    \item $\max_{\eta\leq\abs{x}\leq1}\abs{\operatorname{\mathbf{F}}_{\eta,2l}(x)}
    \leq\frac{1}{\operatorname{\mathbf{T}}_{l}\left(\frac{1+\eta^2}{1-\eta^2}\right)}$; and
    \item $\max_{x\in[-1,1]}\abs{\operatorname{\mathbf{F}}_{\eta,2l}(x)}
    =1$.
\end{enumerate}
Moreover, to achieve $\frac{1}{\operatorname{\mathbf{T}}_{l}\left(\frac{1+\eta^2}{1-\eta^2}\right)}\leq\xi$, it suffices to choose
\begin{equation}
    l=\operatorname{\mathbf{Ceil}}\left(\frac{\operatorname{arccosh}\left(\frac{1}{\xi}\right)}{\operatorname{arccosh}\left(\frac{1+\eta^2}{1-\eta^2}\right)}\right)
    \leq\operatorname{\mathbf{Ceil}}\left(\frac{1}{2\eta}\ln\left(\frac{2}{\xi}\right)\right).
\end{equation}
\end{lemma}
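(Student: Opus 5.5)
The plan is to reduce every claim to standard facts about Chebyshev polynomials, using the substitution $u=\frac{1+\eta^2-2x^2}{1-\eta^2}$. This affine map in $x^2$ is decreasing in $\abs{x}$. It sends $x=0$ to $u_0:=\frac{1+\eta^2}{1-\eta^2}>1$, sends $\abs{x}=\eta$ to $u=1$, and sends $\abs{x}=1$ to $u=\frac{\eta^2-1}{1-\eta^2}=-1$. Consequently, $\eta\leq\abs{x}\leq1$ corresponds exactly to $u\in[-1,1]$, and $0\leq\abs{x}\leq\eta$ corresponds to $u\in[1,u_0]$.

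Property 1 is immediate from the definition, since the numerator equals the denominator at $x=0$. For property 2, I use $\abs{\operatorname{\mathbf{T}}_l(u)}\leq1$ on $[-1,1]$. This bounds the numerator by $1$ on $\eta\leq\abs{x}\leq1$, leaving exactly $1/\operatorname{\mathbf{T}}_l(u_0)$. Note that $\operatorname{\mathbf{T}}_l(u_0)=\cosh(l\operatorname{arccosh}u_0)\geq1>0$. For property 3, on the remaining region $u\in[1,u_0]$ I use that $\operatorname{\mathbf{T}}_l(u)=\cosh(l\operatorname{arccosh}u)$ is positive and increasing for $u\geq1$. Hence $0<\operatorname{\mathbf{F}}_{\eta,2l}(x)\leq1$ there, with equality at $x=0$. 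Combined with property 2, whose bound is at most $1$, the maximum over $[-1,1]$ is exactly $1$.

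For the degree bound, I write $\operatorname{\mathbf{T}}_l(u_0)=\cosh(l\operatorname{arccosh}u_0)$. Then $1/\operatorname{\mathbf{T}}_l(u_0)\leq\xi$ is equivalent to $l\operatorname{arccosh}u_0\geq\operatorname{arccosh}(1/\xi)$, which gives the exact ceiling expression. For the simplified upper bound, I compute $\operatorname{arccosh}u_0=\ln\bigl(u_0+\sqrt{u_0^2-1}\bigr)$. Since $u_0^2-1=\frac{4\eta^2}{(1-\eta^2)^2}$, this equals
\begin{equation}
    \ln\frac{(1+\eta)^2}{1-\eta^2}=\ln\frac{1+\eta}{1-\eta}=2\operatorname{arctanh}(\eta)\geq2\eta.
\end{equation}
On the other side, $\operatorname{arccosh}(1/\xi)=\ln\bigl(1/\xi+\sqrt{1/\xi^2-1}\bigr)\leq\ln(2/\xi)$, using $\xi\leq1$; if $\xi\geq1$, then $l=0$ or any $l$ suffices trivially. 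Dividing these two estimates and using monotonicity of $\operatorname{\mathbf{Ceil}}$ yields the stated bound.

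The main obstacle is essentially just this last calculation: getting the clean closed form $2\operatorname{arctanh}\eta$ for $\operatorname{arccosh}u_0$, and checking the inequality $\operatorname{arctanh}\eta\geq\eta$ (Taylor coefficients are all nonnegative). Everything else is bookkeeping on the substitution.
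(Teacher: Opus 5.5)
Your proof is correct and complete. The substitution $u=\frac{1+\eta^2-2x^2}{1-\eta^2}$ maps $\eta\leq\abs{x}\leq1$ onto $[-1,1]$ and $\abs{x}\leq\eta$ onto $[1,u_0]$. The three properties and the degree count then follow from $\abs{\operatorname{\mathbf{T}}_l}\leq1$ on $[-1,1]$ and $\operatorname{\mathbf{T}}_l(u)=\cosh(l\operatorname{arccosh}u)$ for $u\geq1$, exactly as you argue, including the $\xi\geq1$ edge case.

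The paper gives no proof of this lemma and only cites \cite[Appendix B.2]{Dalzell2024shortcut}. Your argument is therefore the standard self-contained verification rather than a different route.

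One by-product is worth keeping. Your exact identity $\operatorname{arccosh}u_0=2\operatorname{arctanh}\eta$ is sharper than the lemma's stated simplification, and it is what \cor{dolph_cheby} actually needs. With $\eta=\sin(\delta/2)$ you get $2\operatorname{arctanh}(\sin(\delta/2))\geq\delta$, because $\frac{\mathrm{d}}{\mathrm{d}\phi}\operatorname{arctanh}(\sin\phi)=\frac{1}{\cos\phi}\geq1$. This gives the corollary's count $\operatorname{\mathbf{Ceil}}\left(\frac{1}{\delta}\ln\left(\frac{2}{\xi}\right)\right)$. The lemma's cruder estimate only gives $\frac{1}{2\eta}=\frac{1}{2\sin(\delta/2)}$, which is at least $\frac{1}{\delta}$, so it would not yield that count on its own.
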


\begin{corollary}
\label{cor:dolph_cheby}
Let $W$ be a unitary with spectral decomposition $W=\sum_ue^{i\theta_u}\ketbra{\phi_u}{\phi_u}$.
Define the Dolph-Chebyshev polynomials as in~\lem{dolph_cheby}.
For any circular gap $0<\delta<\pi$ and precision $0<\xi<1$, the operator
\begin{equation}
    \left(\operatorname{\mathbf{F}}_{\sin\left(\frac{\delta}{2}\right),2l}\right)_{\mathbf{sv}}\left(\frac{W-I}{2}\right)
    =\sum_u\operatorname{\mathbf{F}}_{\sin\left(\frac{\delta}{2}\right),2l}\left(\sin\left(\frac{\theta_u}{2}\right)\right)\ketbra{\phi_u}{\phi_u}
\end{equation}
can be block encoded using
\begin{equation}
    2l=2\operatorname{\mathbf{Ceil}}\left(\frac{\operatorname{arccosh}\left(\frac{1}{\xi}\right)}{2\operatorname{arccosh}\left(\frac{1}{\cos\left(\frac{\delta}{2}\right)}\right)}\right)
    \leq2\operatorname{\mathbf{Ceil}}\left(\frac{1}{\delta}\ln\left(\frac{2}{\xi}\right)\right).
\end{equation}
queries to controlled $W$ and $W^\dagger$, where the circular function
\begin{equation}
    \operatorname{\mathbf{F}}_{\sin\left(\frac{\delta}{2}\right),2l}\left(\sin\left(\frac{\theta}{2}\right)\right)
    =\frac{\operatorname{\mathbf{T}}_{2l}\left(\frac{\cos\left(\frac{\theta}{2}\right)}{\cos\left(\frac{\delta}{2}\right)}\right)}{\operatorname{\mathbf{T}}_{2l}\left(\frac{1}{\cos\left(\frac{\delta}{2}\right)}\right)}
\end{equation}
satisfies
\begin{enumerate}
    \item $\operatorname{\mathbf{F}}_{\sin\left(\frac{\delta}{2}\right),2l}\left(0\right)=1$;
    \item $\max_{\delta\leq\abs{\theta}\leq\pi}\abs{\operatorname{\mathbf{F}}_{\sin\left(\frac{\delta}{2}\right),2l}\left(\sin\left(\frac{\theta}{2}\right)\right)}\leq\xi$; and
    \item $\max_{\theta\in[-\pi,\pi]}\abs{\operatorname{\mathbf{F}}_{\sin\left(\frac{\delta}{2}\right),2l}\left(\sin\left(\frac{\theta}{2}\right)\right)}=1$.
\end{enumerate}
\end{corollary}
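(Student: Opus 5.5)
The proof combines \lem{dolph_cheby} with the QSVT block encoding of the shifted operator $\frac{W-I}{2}$ constructed above.

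\textbf{Step 1: singular values of the shifted operator.} Since $W=\sum_u e^{i\theta_u}\ketbra{\phi_u}{\phi_u}$ is normal, the operator $\frac{W-I}{2}$ has the same eigenbasis, with eigenvalues
\begin{equation}
\frac{e^{i\theta_u}-1}{2}=i e^{i\theta_u/2}\sin\left(\frac{\theta_u}{2}\right).
\end{equation}
This gives the singular value decomposition $\frac{W-I}{2}=\sum_u \abs{\sin(\theta_u/2)}\ketbra{\psi_u}{\phi_u}$, where each $\ket{\psi_u}$ is $\ket{\phi_u}$ multiplied by a phase. Note that this is a decomposition in which $\ket{\psi_u}\neq\ket{\phi_u}$ in general, so some care with the left singular vectors is needed below.

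\textbf{Step 2: even-degree QSVT.} Because $\operatorname{\mathbf{F}}_{\eta,2l}$ is an even polynomial, QSVT~\cite{Gilyen2018singular} applied to the block encoding of $\frac{W-I}{2}$ acts through the right singular vectors, producing $\sum_u \operatorname{\mathbf{F}}(\abs{\sin(\theta_u/2)})\ketbra{\phi_u}{\phi_u}$. The phases in Step 1 cancel because an even polynomial is a polynomial in $X^\dagger X$. Evenness also lets us drop the absolute value. The cost is $2l$ queries alternating between the block encoding and its adjoint. Each query uses one controlled $W$ or $W^\dagger$, since the block encoding consists of a controlled-$W$ sandwiched between Hadamard-type state preparations.

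\textbf{Step 3: parameter choice and the identity.} Take $\eta=\sin(\delta/2)$. Property~1 is immediate. Property~3 follows from \lem{dolph_cheby}(3), because $\abs{\sin(\theta/2)}\le 1$. For Property~2, note that $\delta\le\abs{\theta}\le\pi$ implies $\abs{\sin(\theta/2)}\ge\sin(\delta/2)=\eta$, so \lem{dolph_cheby}(2) applies.

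For the circular form, write $\frac{1+\eta^2-2\sin^2(\theta/2)}{1-\eta^2}$. Its numerator satisfies $1+\eta^2-2\sin^2(\theta/2)=2\cos^2(\theta/2)-\cos^2(\delta/2)$, and its denominator is $\cos^2(\delta/2)$. The argument therefore equals $2y^2-1$ with $y=\frac{\cos(\theta/2)}{\cos(\delta/2)}$, and $\operatorname{\mathbf{T}}_l(2y^2-1)=\operatorname{\mathbf{T}}_{2l}(y)$. The same computation at $\theta=0$ gives the normalization $\operatorname{\mathbf{T}}_{2l}(1/\cos(\delta/2))$.

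For the degree, $\frac{1+\eta^2}{1-\eta^2}=2z^2-1$ with $z=1/\cos(\delta/2)$. Hence $\operatorname{arccosh}\left(\frac{1+\eta^2}{1-\eta^2}\right)=2\operatorname{arccosh}(z)$, which gives the exact formula for $l$. For the upper bound, use $\operatorname{arccosh}(1/\cos(\delta/2))\ge\delta/2$; this holds because $\cosh(\delta/2)\le 1/\cos(\delta/2)$, i.e., $\cosh(u)\cos(u)\le1$ on $[0,\pi/2)$. Combine this with $\operatorname{arccosh}(1/\xi)\le\ln(2/\xi)$.

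\textbf{Main obstacle.} The main difficulty is the elementary inequality $\cosh u\cos u\le 1$. Its derivative is $\cosh u\cos u\,(\tanh u-\tan u)\le 0$, since $\tan u\ge\tanh u$ on this range, so the function is non-increasing from its value $1$ at $u=0$.
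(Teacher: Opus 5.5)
Your proposal is correct and follows the paper's proof. Both block encode $\frac{W-I}{2}$ with one controlled $W$, apply even-degree QSVT so the eigenphases cancel, and use $\operatorname{\mathbf{T}}_l(\operatorname{\mathbf{T}}_2(y))=\operatorname{\mathbf{T}}_{2l}(y)$ under $x=\sin(\theta/2)$, $\eta=\sin(\delta/2)$ to obtain the circular form and the three properties. The one difference is that you prove the degree bound directly via $\operatorname{arccosh}(2z^2-1)=2\operatorname{arccosh}(z)$ and $\cosh u\cos u\le 1$, where the paper cites Eq.~(113) of \cite{Costa2021linearsystems}. This step is genuinely needed, because the lemma's bound $\frac{1}{2\eta}\ln(2/\xi)$ with $\eta=\sin(\delta/2)\le\delta/2$ alone would only give the weaker $\frac{1}{2\sin(\delta/2)}\ln(2/\xi)$.
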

\begin{proof}
We start with a block encoding of $\frac{W-I}{2}$ as constructed above using one controlled application of $W$. Given the circular gap $\delta$, we perform QSVT on $\frac{W-I}{2}$ with $\operatorname{\mathbf{F}}_{\sin\left(\frac{\delta}{2}\right),2l}$ as the target polynomial. The resulting algorithm makes $2l$ queries to the block encoding oracle for $\frac{W-I}{2}$ and its inverse, which translates to $2l$ controlled $W$ and $W^\dagger$ as claimed.

The output of QSVT is a block encoding of $\left(\operatorname{\mathbf{F}}_{\sin\left(\frac{\delta}{2}\right),2l}\right)_{\mathbf{sv}}\left(\frac{W-I}{2}\right)$.
We now examine this singular value transformation more closely.
First, recall that $\frac{W-I}{2}$ has eigenvalues $\frac{e^{i\theta_u}-1}{2}=\abs{\sin\left(\frac{\theta_u}{2}\right)}e^{ig(\theta_u)}$ for some function $g$. 
Here, $\abs{\sin\left(\frac{\theta_u}{2}\right)}$ is the singular value, whereas $e^{ig(\theta_u)}$ is absorbed into the right singular vector. 
After QSVT, we have $\operatorname{\mathbf{F}}_{\sin\left(\frac{\delta}{2}\right),2l}\left(\abs{\sin\left(\frac{\theta_u}{2}\right)}\right)$ in the singular vector basis.
Note that $\operatorname{\mathbf{F}}_{\sin\left(\frac{\delta}{2}\right),2l}$ is an even function. 
So the original phase $e^{ig(\theta_u)}$ will be canceled out by QSVT, and we have $\operatorname{\mathbf{F}}_{\sin\left(\frac{\delta}{2}\right),2l}\left(\abs{\sin\left(\frac{\theta_u}{2}\right)}\right)$ in the eigenbasis as well.
Using again the fact that $\operatorname{\mathbf{F}}_{\sin\left(\frac{\delta}{2}\right),2l}$ is an even function, we finally obtain $\operatorname{\mathbf{F}}_{\sin\left(\frac{\delta}{2}\right),2l}\left(\abs{\sin\left(\frac{\theta_u}{2}\right)}\right)=\operatorname{\mathbf{F}}_{\sin\left(\frac{\delta}{2}\right),2l}\left(\sin\left(\frac{\theta_u}{2}\right)\right)$ in the eigenbasis.
This establishes the claimed eigendecomposition.

The remaining claims follow from the representation of Dolph-Chebyshev polynomials
\begin{equation}
    \frac{\operatorname{\mathbf{T}}_{2l}\left(\frac{\cos\left(\frac{\theta}{2}\right)}{\cos\left(\frac{\delta}{2}\right)}\right)}{\operatorname{\mathbf{T}}_{2l}\left(\frac{1}{\cos\left(\frac{\delta}{2}\right)}\right)}
    =\frac{\operatorname{\mathbf{T}}_{2l}\left(\sqrt{\frac{1-x^2}{1-\eta^2}}\right)}
    {\operatorname{\mathbf{T}}_{2l}\left(\sqrt{\frac{1}{1-\eta^2}}\right)}
    =\frac{\operatorname{\mathbf{T}}_{l}\left(\operatorname{\mathbf{T}}_2\left(\sqrt{\frac{1-x^2}{1-\eta^2}}\right)\right)}
    {\operatorname{\mathbf{T}}_{l}\left(\operatorname{\mathbf{T}}_2\left(\sqrt{\frac{1}{1-\eta^2}}\right)\right)}
    =\frac{\operatorname{\mathbf{T}}_{l}\left(2\frac{1-x^2}{1-\eta^2}-1\right)}
    {\operatorname{\mathbf{T}}_{l}\left(2\frac{1}{1-\eta^2}-1\right)}
    =\frac{\operatorname{\mathbf{T}}_{l}\left(\frac{1+\eta^2-2x^2}{1-\eta^2}\right)}{\operatorname{\mathbf{T}}_{l}\left(\frac{1+\eta^2}{1-\eta^2}\right)}
\end{equation}
under the change of variables $x=\sin\left(\frac{\theta}{2}\right)$ and $\eta=\sin\left(\frac{\delta}{2}\right)$.
In particular, the claimed query complexity follows from~\cite[Eq.\ (113)]{Costa2021linearsystems}.
\end{proof}

\subsection{Constant-prefactor analysis of beyond-\texorpdfstring{$\kappa$}{k} solver based on filtering}
\label{sec:gap_solver}
We now present our beyond-$\kappa$ solver based on eigenstate filtering over the unit circle with effective gap. We also estimate the constant prefactor of its query complexity to leading order.

Let $A$ be a matrix with $\norm{A}\leq1$ block encoded by $O_A$, and $\ket{b}$ be a normalized quantum state prepared by $O_b$.
Assume that $A$ is invertible and denote $x=A^{-1}\ket{b}$. Applying~\cor{gap_block}, we obtain a block encoding of the augmented matrix as an overlap of isometries
\begin{equation}
    \frac{1}{\sqrt{\beta^2+1}}\begin{bmatrix}
        \beta A & -\ket{b}
    \end{bmatrix}
    =G_1^\dagger G_0
\end{equation}
for any scalar $\beta>0$, using $1$ query to $O_A$ and $O_b$. Moreover, this induces the constrained orthogonal decomposition
\begin{equation}
\begin{aligned}
    G_0\begin{bmatrix}
        0\\
        1
    \end{bmatrix}
    &=\frac{\beta}{\sqrt{\norm{x}^2+\beta^2}}
    \frac{1}{\sqrt{\norm{x}^2+\beta^2}}
    G_0\begin{bmatrix}
        x\\\beta
    \end{bmatrix}
    -\frac{\norm{x}}{\sqrt{\norm{x}^2+\beta^2}}\frac{\beta}{\norm{x}\sqrt{\norm{x}^2+\beta^2}}G_0\begin{bmatrix}
        x\\
        -\frac{\norm{x}^2}{\beta}
    \end{bmatrix}\\
    &=\frac{\beta}{\sqrt{\norm{x}^2+\beta^2}}
    \frac{1}{\sqrt{\norm{x}^2+\beta^2}}G_0\begin{bmatrix}
        x\\
        \beta
    \end{bmatrix}
    -\frac{\norm{x}}{\sqrt{\norm{x}^2+\beta^2}}
    \Pi_0 \Pi_1
    \frac{\sqrt{\beta^2+1}}{\norm{x}\sqrt{\norm{x}^2+\beta^2}}G_1A^{-1\dagger}x,\\
\end{aligned}
\end{equation}
where $\Pi_0=G_0G_0^\dagger$ and $\Pi_1=G_1G_1^\dagger$.

Define the unitary $W=\left(2\Pi_0-I\right)\left(I-2\Pi_1\right)=\sum_ue^{i\theta_u}\ketbra{\phi_u}{\phi_u}$ and perform the filtering $\operatorname{\mathbf{F}}_{\mathbf{sv}}\left(\frac{W-I}{2}\right)$ with circular gap $0<\delta<\pi$ and precision $0<\xi<1$, where we have omitted the subscripts for notational convenience. 
By the effective gap lemma (\lem{eff_gap}),
\begin{equation}
\begin{aligned}
    \operatorname{\mathbf{F}}_{\mathbf{sv}}\left(\frac{W-I}{2}\right)G_0\begin{bmatrix}
        0\\
        1
    \end{bmatrix}
    &=\frac{\beta}{\norm{x}^2+\beta^2}
    G_0\begin{bmatrix}
        x\\\beta
    \end{bmatrix}
    -\frac{\beta}{\norm{x}^2+\beta^2}\operatorname{\mathbf{F}}_{\mathbf{sv}}\left(\frac{W-I}{2}\right)G_0\begin{bmatrix}
        x\\
        -\frac{\norm{x}^2}{\beta}
    \end{bmatrix}\\
    &=\frac{\beta}{\norm{x}^2+\beta^2}G_0\begin{bmatrix}
        x\\
        \beta
    \end{bmatrix}
    -\frac{\sqrt{\beta^2+1}}{\norm{x}^2+\beta^2}\operatorname{\mathbf{F}}_{\mathbf{sv}}\left(\frac{W-I}{2}\right)
    \Pi_0 \Pi_1G_1A^{-1\dagger}x,
\end{aligned}
\end{equation}
where~\cor{dolph_cheby} shows
\begin{equation}
    \operatorname{\mathbf{F}}_{\mathbf{sv}}\left(\frac{W-I}{2}\right)
    =\sum_u\operatorname{\mathbf{F}}\left(\sin\left(\frac{\theta_u}{2}\right)\right)\ketbra{\phi_u}{\phi_u}.
\end{equation}
We now split the second term further into two sub terms. To this end, we introduce the circular indicator function
\begin{equation}
    \operatorname{\mathbf{Ind}}_{\left[0,\sin\left(\frac{\delta}{2}\right)\right]}\left(\sin\left(\frac{\theta}{2}\right)\right)
    =\begin{cases}
        1,\quad&\abs{\theta}<\delta,\\
        0,&\delta\leq\abs{\theta}\leq\pi.
    \end{cases}
\end{equation}
This yields the final decomposition
\begin{equation}
\begin{aligned}
    \operatorname{\mathbf{F}}_{\mathbf{sv}}\left(\frac{W-I}{2}\right)G_0\begin{bmatrix}
        0\\
        1
    \end{bmatrix}
    &=\frac{\beta}{\norm{x}^2+\beta^2}
    G_0\begin{bmatrix}
        x\\\beta
    \end{bmatrix}\\
    &\quad-\frac{\sqrt{\beta^2+1}}{\norm{x}^2+\beta^2}
    \sum_u\left(\operatorname{\mathbf{F}}\operatorname{\mathbf{Ind}}\right)\left(\sin\left(\frac{\theta_u}{2}\right)\right)\ketbra{\phi_u}{\phi_u}
    \Pi_0 \Pi_1G_1A^{-1\dagger}x\\
    &\quad-\frac{\beta}{\norm{x}^2+\beta^2}
    \sum_u\left(\operatorname{\mathbf{F}}(1-\operatorname{\mathbf{Ind}})\right)\left(\sin\left(\frac{\theta_u}{2}\right)\right)\ketbra{\phi_u}{\phi_u}
    G_0\begin{bmatrix}
        x\\
        -\frac{\norm{x}^2}{\beta}
    \end{bmatrix}.\\
\end{aligned}
\end{equation}

In this new decomposition, the third term can be bounded directly using~\cor{dolph_cheby} as
\begin{equation}
\begin{aligned}
    &\norm{\frac{\beta}{\norm{x}^2+\beta^2}
    \sum_u\left(\operatorname{\mathbf{F}}(1-\operatorname{\mathbf{Ind}})\right)\left(\sin\left(\frac{\theta_u}{2}\right)\right)\ketbra{\phi_u}{\phi_u}
    G_0\begin{bmatrix}
        x\\
        -\frac{\norm{x}^2}{\beta}
    \end{bmatrix}}\\
    &\leq\frac{\beta}{\norm{x}^2+\beta^2}
    \sqrt{\norm{x}^2+\frac{\norm{x}^4}{\beta^2}}
    \max_{\theta\in[-\pi,\pi]}\abs{\operatorname{\mathbf{F}}\left(\sin\left(\frac{\theta}{2}\right)\right)
    (1-\operatorname{\mathbf{Ind}})\left(\sin\left(\frac{\theta}{2}\right)\right)}\\
    &=\frac{\norm{x}}{\sqrt{\norm{x}^2+\beta^2}}
    \max_{\delta\leq\abs{\theta}\leq\pi}\abs{\operatorname{\mathbf{F}}\left(\sin\left(\frac{\theta}{2}\right)\right)}
    \leq\frac{\norm{x}}{\sqrt{\norm{x}^2+\beta^2}}\xi,
\end{aligned}
\end{equation}
whereas for the second term we apply the effective gap lemma (\lem{eff_gap})
\begin{equation}
\begin{aligned}
    &\norm{\frac{\sqrt{\beta^2+1}}{\norm{x}^2+\beta^2}
    \sum_u\left(\operatorname{\mathbf{F}}\operatorname{\mathbf{Ind}}\right)\left(\sin\left(\frac{\theta_u}{2}\right)\right)\ketbra{\phi_u}{\phi_u}
    \Pi_0 \Pi_1G_1A^{-1\dagger}x}\\
    &\leq\frac{\sqrt{\beta^2+1}}{\norm{x}^2+\beta^2}
    \max_{\theta\in[-\pi,\pi]}\abs{\operatorname{\mathbf{F}}\left(\sin\left(\frac{\theta}{2}\right)\right)
    \operatorname{\mathbf{Ind}}\left(\sin\left(\frac{\theta}{2}\right)\right)
    \frac{\theta}{2}}
    \norm{G_1A^{-1\dagger}x}\\
    &\leq\frac{\sqrt{\beta^2+1}}{\norm{x}^2+\beta^2}
    \max_{\abs{\theta}<\delta}\abs{\operatorname{\mathbf{F}}\left(\sin\left(\frac{\theta}{2}\right)\right)
    \frac{\theta}{2}}
    \norm{A^{-1\dagger}x}
    \leq\frac{\sqrt{\beta^2+1}}{\norm{x}^2+\beta^2}\norm{A^{-1\dagger}x}\frac{\delta}{2}.
\end{aligned}
\end{equation}
We thus obtain the following error bound for the eigenstate filtering
\begin{equation}
    \norm{\operatorname{\mathbf{F}}_{\mathbf{sv}}\left(\frac{W-I}{2}\right)G_0\begin{bmatrix}
        0\\
        1
    \end{bmatrix}
    -\frac{\beta}{\norm{x}^2+\beta^2}
    G_0\begin{bmatrix}
        x\\\beta
    \end{bmatrix}}
    \leq\frac{\sqrt{\beta^2+1}}{\norm{x}^2+\beta^2}\norm{A^{-1\dagger}x}\frac{\delta}{2}
    +\frac{\norm{x}}{\sqrt{\norm{x}^2+\beta^2}}\xi.
\end{equation}
This filtering uses $2\operatorname{\mathbf{Ceil}}\left(\frac{1}{\delta}\ln\left(\frac{2}{\xi}\right)\right)$
queries to controlled $W$ and $W^\dagger$.
With $W=(2G_0G_0^\dagger-I)(I-2G_1G_1^\dagger)$, each $W$ or $W^\dagger$ can be realized using $2$ (uncontrolled) queries to the block encoding isometries $G_0$ and $G_1$. Each $G_1$ then requires $1$ controlled application of $O_A$ and $O_b$ whereas $G_0$ can be realized with zero query (\eq{def_g}), giving the cost
\begin{equation}
    4\operatorname{\mathbf{Ceil}}\left(\frac{1}{\delta}\ln\left(\frac{2}{\xi}\right)\right)
\end{equation}
in the standard input model.

Let us choose the effective gap $\delta$ and filtering accuracy $\xi$ so that the solution vector $x$ has error at most $\norm{x}\epsilon$. Equivalently, we require that
\begin{equation}
    \frac{\sqrt{\beta^2+1}}{\norm{x}^2+\beta^2}\norm{A^{-1\dagger}x}\frac{\delta}{2}
    +\frac{\norm{x}}{\sqrt{\norm{x}^2+\beta^2}}\xi
    \leq\frac{\beta\norm{x}}{\norm{x}^2+\beta^2}\epsilon.
\end{equation}
A standard strategy is to divide the total error budget into two parts of comparable size as in~\cite[Section 4.1]{li2025discrimination}. Here, we present an alternative approach that matches their asymptotic scaling but improves the constant prefactor.
Specifically, we let
\begin{equation}
    \frac{\sqrt{\beta^2+1}}{\norm{x}^2+\beta^2}\norm{A^{-1\dagger}x}\frac{\delta}{2}
    =\gamma\frac{\beta\norm{x}}{\norm{x}^2+\beta^2}\epsilon,\qquad
    \frac{\norm{x}}{\sqrt{\norm{x}^2+\beta^2}}\xi
    =(1-\gamma)\frac{\beta\norm{x}}{\norm{x}^2+\beta^2}\epsilon,
\end{equation}
for some tunable parameter $0<\gamma<1$ to be chosen momentarily.
Then the query cost becomes
\begin{equation}
\begin{aligned}
    \operatorname{\mathbf{Ceil}}\left(\frac{4}{\delta}\ln\left(\frac{2}{\xi}\right)\right)
    &=\operatorname{\mathbf{Ceil}}\left(\frac{2\sqrt{\beta^2+1}\norm{A^{-1\dagger}x}}{\beta\norm{x}}
    \frac{1}{\gamma\epsilon}
    \ln\left(\frac{2\sqrt{\norm{x}^2+\beta^2}}{\beta}\frac{1}{(1-\gamma)\epsilon}\right)\right)\\
    &=\operatorname{\mathbf{Ceil}}\left(\frac{\sqrt{\beta^2+1}\norm{A^{-1\dagger}x}}{\norm{x}\sqrt{\norm{x}^2+\beta^2}}\frac{1}{\gamma\widetilde{\epsilon}}\ln\left(\frac{1}{(1-\gamma)\widetilde{\epsilon}}\right)\right),
\end{aligned}
\end{equation}
where
\begin{equation}
    \widetilde{\epsilon}
    =\frac{\beta\epsilon}{2\sqrt{\norm{x}^2+\beta^2}}.
\end{equation}

In~\append{lambertw}, we prove that the function
\begin{equation}
    g(\gamma)=\frac{1}{\gamma\widetilde{\epsilon}}\ln\left(\frac{1}{(1-\gamma)\widetilde{\epsilon}}\right),\qquad
    0<\gamma<1.
\end{equation}
is minimized at $\gamma_0=1+\frac{1}{\operatorname{\mathbf{W}_{-1}}\left(-\frac{\widetilde{\epsilon}}{e}\right)}$,
where $\operatorname{\mathbf{W}_{-1}}(\cdot)$ is the ($-1$)-branch of the Lambert-W function.
Correspondingly, 
\begin{equation}
\begin{aligned}
    \min g(\gamma)
    =g(\gamma_0)
    &=\frac{\operatorname{\mathbf{W}_{-1}}\left(-\frac{\widetilde{\epsilon}}{e}\right)}{1+\operatorname{\mathbf{W}_{-1}}\left(-\frac{\widetilde{\epsilon}}{e}\right)}
    \frac{1}{\widetilde{\epsilon}}
    \ln\left(\frac{-\operatorname{\mathbf{W}_{-1}}\left(-\frac{\widetilde{\epsilon}}{e}\right)}{\widetilde{\epsilon}}\right)\\
\end{aligned}
\end{equation}
which reduces to
\begin{equation}
    \frac{1}{\widetilde{\epsilon}}\ln\left(\frac{1}{\widetilde{\epsilon}}\right)
\end{equation}
when higher order terms are dropped.
However, the exact values of $\norm{x}$ and $\norm{A^{-1\dagger}x}$ are typically unavailable in practice. Instead, we assume that $\alpha_{A^{-1\dagger}\ket{x}}\geq\norm{A^{-1\dagger}\ket{x}}$ is a norm upper bound and $\frac{1}{\mu}\leq\frac{\alpha_x}{\norm{x}}\leq\mu$ is a $\mu$-multiplicative approximation of the solution norm $\norm{x}$ for some $\mu\geq1$. Setting $\beta=\alpha_x$, we can say that the query complexity is at most
\newmaketag
\begin{footnotesize}
\begin{equation}
\begin{aligned}
    &\operatorname{\mathbf{Ceil}}\left(\frac{\sqrt{\alpha_x^2+1}\alpha_{A^{-1\dagger}\ket{x}}}{\sqrt{\norm{x}^2+\alpha_x^2}}
    \frac{\operatorname{\mathbf{W}_{-1}}\left(-\frac{\alpha_x\epsilon}{2e\sqrt{\norm{x}^2+\alpha_x^2}}\right)}{1+\operatorname{\mathbf{W}_{-1}}\left(-\frac{\alpha_x\epsilon}{2e\sqrt{\norm{x}^2+\alpha_x^2}}\right)}
    \frac{2\sqrt{\norm{x}^2+\alpha_x^2}}{\alpha_x\epsilon}
    \ln\left(-\operatorname{\mathbf{W}_{-1}}\left(-\frac{\alpha_x\epsilon}{2e\sqrt{\norm{x}^2+\alpha_x^2}}\right)
    \frac{2\sqrt{\norm{x}^2+\alpha_x^2}}{\alpha_x\epsilon}
    \right)\right)\\
    &\leq\operatorname{\mathbf{Ceil}}\left(\frac{\mu\sqrt{\alpha_x^2+1}\alpha_{A^{-1\dagger}\ket{x}}}{\sqrt{\mu^2+1}\alpha_x}
    \frac{\operatorname{\mathbf{W}_{-1}}\left(-\frac{\epsilon}{2e\sqrt{\mu^2+1}}\right)}{1+\operatorname{\mathbf{W}_{-1}}\left(-\frac{\epsilon}{2e\sqrt{\mu^2+1}}\right)}
    \frac{2\sqrt{\mu^2+1}}{\epsilon}
    \ln\left(-\operatorname{\mathbf{W}_{-1}}\left(-\frac{\epsilon}{2e\sqrt{\mu^2+1}}\right)
    \frac{2\sqrt{\mu^2+1}}{\epsilon}
    \right)\right),
\end{aligned}
\end{equation}
\end{footnotesize}%
since shrinking the error budget from $\widetilde{\epsilon}=\frac{\alpha_x\epsilon}{2\sqrt{\norm{x}^2+\alpha_x^2}}$ to $\widetilde{\epsilon}=\frac{\epsilon}{2\sqrt{\mu^2+1}}$ never decreases the cost.
To leading order, this simplifies to
\begin{equation}
    \frac{\mu\sqrt{\alpha_x^2+1}\alpha_{A^{-1\dagger}\ket{x}}}{\sqrt{\mu^2+1}\alpha_x}
    \frac{2\sqrt{\mu^2+1}}{\epsilon}
    \ln\left(\frac{2\sqrt{\mu^2+1}}{\epsilon}\right)
    \simeq\frac{2\alpha_{A^{-1\dagger}\ket{x}}}{\epsilon}
    \ln\left(\frac{1}{\epsilon}\right).
\end{equation}

With the above choice of effective gap $\delta$, filtering accuracy $\xi$ and scalar $\beta$, the success amplitude is lower bounded by
\begin{equation}
    \frac{\beta\norm{x}(1-\epsilon)}{\norm{x}^2+\beta^2}
    =\frac{\alpha_x\norm{x}(1-\epsilon)}{\norm{x}^2+\alpha_x^2}
    =\frac{(1-\epsilon)}{\frac{\norm{x}}{\alpha_x}+\frac{\alpha_x}{\norm{x}}}
    \geq\frac{1-\epsilon}{\mu+\frac{1}{\mu}}.
\end{equation}
Applying the fixed-point amplitude amplification~\cite[Page 29]{Dalzell2024shortcut}, we obtain the solution state $\ket{x}=\frac{x}{\norm{x}}$ with accuracy $\epsilon$ and failure probability $p_{\text{fail}}$ using
\begin{equation}
    2\operatorname{\mathbf{Floor}}\left(\frac{1}{2}\operatorname{\mathbf{Ceil}}\left(\frac{\mu+\frac{1}{\mu}}{1-\epsilon}\ln\left(\frac{2}{\sqrt{p_{\text{fail}}}}\right)\right)\right)+1
\end{equation}
repetitions of filtering. Here, we have used $2\operatorname{\mathbf{Floor}}\left(\frac{1}{2}\operatorname{\mathbf{Ceil}}\left(\cdot\right)\right)+1$ to denote the smallest odd integer greater than or equal to its argument.

Altogether, to succeed with a probability at least $\frac{1}{2}$, the solver makes
\newmaketag
\begin{small}
\begin{equation}
\begin{aligned}
    &\left(2\operatorname{\mathbf{Floor}}\left(\frac{1}{2}\operatorname{\mathbf{Ceil}}\left(\frac{\mu+\frac{1}{\mu}}{1-\epsilon}\ln\left(2\sqrt{2}\right)\right)\right)+1\right)\\
    &\cdot \operatorname{\mathbf{Ceil}}\left(\frac{\mu\sqrt{\alpha_x^2+1}\alpha_{A^{-1\dagger}\ket{x}}}{\sqrt{\mu^2+1}\alpha_x}
    \frac{\operatorname{\mathbf{W}_{-1}}\left(-\frac{\epsilon}{2e\sqrt{\mu^2+1}}\right)}{1+\operatorname{\mathbf{W}_{-1}}\left(-\frac{\epsilon}{2e\sqrt{\mu^2+1}}\right)}
    \frac{2\sqrt{\mu^2+1}}{\epsilon}
    \ln\left(-\operatorname{\mathbf{W}_{-1}}\left(-\frac{\epsilon}{2e\sqrt{\mu^2+1}}\right)
    \frac{2\sqrt{\mu^2+1}}{\epsilon}
    \right)\right)
\end{aligned}
\end{equation}
\end{small}%
queries to the oracles $O_A$ and $O_b$ (recall from~\eq{def_g} that undoing the isometry $G_0^\dagger$ has zero query cost).
In the limit $\mu\rightarrow1$, $\epsilon\rightarrow0$, $\alpha_x\rightarrow\infty$, $\alpha_{A^{-1\dagger}\ket{x}}\rightarrow\infty$, the leading-order query complexity is
\begin{equation}
    \left(2\operatorname{\mathbf{Floor}}\left(\frac{1}{2}\operatorname{\mathbf{Ceil}}\left(3\ln\left(2\right)\right)\right)+1\right)
    \frac{2\alpha_{A^{-1\dagger}\ket{x}}}{\epsilon}
    \ln\left(\frac{1}{\epsilon}\right)
    =\frac{6\alpha_{A^{-1\dagger}\ket{x}}}{\epsilon}
    \ln\left(\frac{1}{\epsilon}\right).
\end{equation}

\subsection{Beyond-\texorpdfstring{$\kappa$}{k} solution norm estimation}
\label{sec:gap_est}
The simple beyond-$\kappa$ solver from the previous subsection relies on the assumption that the solution norm $\norm{x}$ is known to a constant relative accuracy. If we do not have this prior knowledge, then we can turn the solver into a solution norm estimation algorithm with essentially the same query complexity up to logarithmic factors.

To start, we recall that the eigenstate filtering introduces an error at most
\begin{equation}
    \norm{\operatorname{\mathbf{F}}_{\mathbf{sv}}\left(\frac{W-I}{2}\right)G_0\begin{bmatrix}
        0\\
        1
    \end{bmatrix}
    -\frac{\beta}{\norm{x}^2+\beta^2}
    G_0\begin{bmatrix}
        x\\\beta
    \end{bmatrix}}
    \leq\frac{\sqrt{\beta^2+1}}{\norm{x}^2+\beta^2}\norm{A^{-1\dagger}x}\frac{\delta}{2}
    +\frac{\norm{x}}{\sqrt{\norm{x}^2+\beta^2}}\xi.
\end{equation}
Unlike before, we now aim to choose $\delta$ and $\xi$, so that when projected onto the first component, the error in the success amplitude is sufficiently small.
Additionally, we perform an exponential search over the rescaling factor $\beta=1,2,4,\ldots$, so $\beta\geq1$ is guaranteed.
This means we can choose
\begin{equation}
    \delta=\operatorname{\mathbf{\Theta}}\left(\frac{1}{\alpha_{A^{-1\dagger}\ket{x}}}\right),\qquad
    \xi=\operatorname{\mathbf{\Theta}}(1),
\end{equation}
where $\alpha_{A^{-1\dagger}x}\geq\norm{A^{-1\dagger}\ket{x}}$ is a known upper bound,
so that
\begin{equation}
\begin{aligned}
    \frac{\sqrt{\beta^2+1}}{\norm{x}^2+\beta^2}\norm{A^{-1\dagger}x}\frac{\delta}{2}
    &\leq\frac{\sqrt{2}\beta\norm{x}}{\norm{x}^2+\beta^2}
    \norm{A^{-1\dagger}\ket{x}}\frac{\delta}{2}
    \leq\frac{1}{8}\frac{\beta\norm{x}}{\norm{x}^2+\beta^2},\\
    \frac{\norm{x}}{\sqrt{\norm{x}^2+\beta^2}}\xi
    &\leq\xi
    \leq\frac{1}{16}.
\end{aligned}
\end{equation}
Then, the filtering has asymptotic query complexity 
\begin{equation}
    \operatorname{\mathbf{O}}\left(\alpha_{A^{-1\dagger}\ket{x}}\right).
\end{equation}

Note that the success amplitude is bounded within the interval
\begin{equation}
    \left[\frac{7}{8}\frac{\beta\norm{x}}{\norm{x}^2+\beta^2}-\frac{1}{16},\
    \frac{9}{8}\frac{\beta\norm{x}}{\norm{x}^2+\beta^2}+\frac{1}{16}\right]
    \subseteq\left[\frac{7}{8}\frac{\beta\norm{x}}{\norm{x}^2+\beta^2}-\frac{1}{16},\
    \frac{9}{8}\frac{\beta}{\norm{x}}+\frac{1}{16}\right].
\end{equation}
Let us choose the rescaling factor $\beta_j=2^j$ for $j=0,1,\ldots,\operatorname{\mathbf{Ceil}}\left(\log_2\left(\alpha_{A^{-1\dagger}\ket{x}}\right)\right)$ and consider two cases.
If $2^j\leq\frac{\norm{x}}{16}$, then the success amplitude is at most
\begin{equation}
    \frac{9}{8}\frac{2^j}{\norm{x}}+\frac{1}{16}
    \leq\frac{9}{8}\frac{\frac{\norm{x}}{16}}{\norm{x}}+\frac{1}{16}
    =\frac{17}{128}.
\end{equation}
On the other hand, if $\frac{\norm{x}}{2}\leq2^j\leq\norm{x}$, then the success amplitude is at least
\begin{equation}
    \frac{7}{8}\frac{2^j\norm{x}}{\norm{x}^2+4^{j}}-\frac{1}{16}
    \geq\frac{7}{8}\frac{\frac{\norm{x}^2}{2}}{\norm{x}^2+\norm{x}^2}-\frac{1}{16}
    =\frac{5}{32}
    =\frac{20}{128}.
\end{equation}

To distinguish these two cases, we loop through $j=0,1,\ldots,\operatorname{\mathbf{Ceil}}\left(\log_2\left(\alpha_{A^{-1\dagger}\ket{x}}\right)\right)$, and perform amplitude estimation with a sufficiently small constant precision and failure probability $p_j=\frac{1}{6\cdot 2^j}$.
Then with probability at least $\frac{2}{3}$, the algorithm produces estimates of amplitudes at most $\frac{18}{128}$ for all $j$ such that $2^j\leq\frac{\norm{x}}{16}$. On the other hand, if $\frac{\norm{x}}{2}\leq2^j\leq\norm{x}$, the algorithm outputs an estimate at least $\frac{19}{128}$ with probability at least $1-\frac{1}{6\cdot 2^j}$.
Suppose the first time we find the amplitude above $\frac{19}{128}$ is at $j_0$. This means
\begin{equation}
    2^{j_0-1}<\frac{\norm{x}}{2},\qquad
    2^{j_0}>\frac{\norm{x}}{16},
\end{equation}
which gives a constant multiplicative approximation of solution norm:
\begin{equation}
    \frac{1}{4}<\frac{\norm{x}}{2^{j_0}\cdot4}<4.
\end{equation}
The total query complexity is then at most
\begin{equation}
    \operatorname{\mathbf{O}}\left(\sum_{j=0}^{\operatorname{\mathbf{Ceil}}\left(\log_2\left(\alpha_{A^{-1\dagger}\ket{x}}\right)\right)}
    \alpha_{A^{-1\dagger}\ket{x}}
    \log\left(6\cdot 2^j\right)
    \right)
    =\operatorname{\mathbf{O}}\left(
    \alpha_{A^{-1\dagger}\ket{x}}\log^2\left(\alpha_{A^{-1\dagger}\ket{x}}\right)
    \right).
\end{equation}
The approximation ratio can be refined to $\mu$ arbitrarily close to $1$ using amplitude estimation incurring multiplicative complexity overhead $1/(\mu-1)$~\cite{Dalzell2024shortcut}.
We now summarize the filtering-based beyond-$\kappa$ solver and the corresponding solution norm estimator in the following theorem.

\begin{theorem}[Beyond-$\kappa$ solver based on filtering with effective gap]
\label{thm:beyondk_filtering}
Let $A$ be a matrix with $\norm{A}\leq1$ block encoded by $O_A$, and $\ket{b}$ be a normalized quantum state prepared by $O_b$.
Assume that $A$ is invertible and denote $x=A^{-1}\ket{b}$.

Then, the normalized solution state $\ket{x}=\frac{x}{\norm{x}}$ can be prepared to accuracy $\epsilon>0$ and success probability $>\frac{1}{2}$ with query complexity
\begin{equation}
    \operatorname{\mathbf{O}}\left(\frac{\alpha_{A^{-1\dagger}\ket{x}}}{\epsilon}
    \log\left(\frac{1}{\epsilon}\right)
    \operatorname{\mathbf{Cost}}\left(O_A\right)
    +\frac{\alpha_{A^{-1\dagger}\ket{x}}}{\epsilon}
    \log\left(\frac{1}{\epsilon}\right)
    \operatorname{\mathbf{Cost}}\left(O_b\right)
    \right),
\end{equation}
assuming a norm upper bound $\alpha_{A^{-1\dagger}\ket{x}} \geq \norm{A^{-1\dagger}\ket{x}}$ and a constant multiplicative approximation $\alpha_x$ of the solution norm, i.e., $\frac{1}{\mu} \leq \frac{\alpha_x}{\norm{x}} \leq \mu$ for some constant $\mu \geq 1$.

In the limit where $\mu\rightarrow1$, $\epsilon\rightarrow0$, $\alpha_x\rightarrow\infty$, $\alpha_{A^{-1\dagger}\ket{x}}\rightarrow\infty$,
the solver has leading-order query complexity
\begin{equation}
    \frac{6\alpha_{A^{-1\dagger}\ket{x}}}{\epsilon}
    \ln\left(\frac{1}{\epsilon}\right).
\end{equation}

Otherwise, the solution norm $\norm{x}$ can be estimated to a constant multiplicative accuracy with query complexity
\begin{equation}
    \operatorname{\mathbf{O}}\left(\alpha_{A^{-1\dagger}\ket{x}}\log^2\left(\alpha_{A^{-1\dagger}\ket{x}}\right)
    \operatorname{\mathbf{Cost}}\left(O_A\right)
    +\alpha_{A^{-1\dagger}\ket{x}}\log^2\left(\alpha_{A^{-1\dagger}\ket{x}}\right)
    \operatorname{\mathbf{Cost}}\left(O_b\right)\right).
\end{equation}
\end{theorem}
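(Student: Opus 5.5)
The proof assembles the analysis of \sec{gap_solver} and \sec{gap_est} into three claims: the asymptotic solver cost, its leading constant, and the norm-estimation cost.

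\textbf{Solver.} Apply \cor{gap_block} with $\beta=\alpha_x$ to block encode the augmented matrix as $G_1^\dagger G_0$. Each $G_1$ costs one controlled query to $O_A$ and one to $O_b$, and $G_0$ is free. This gives the constrained orthogonal decomposition of $G_0\left[\begin{smallmatrix}0\\1\end{smallmatrix}\right]$: the solution component lies in $\operatorname{\mathbf{Im}}(\Pi_0)\cap\operatorname{\mathbf{Ker}}(\Pi_1)$ and the junk component lies in $\operatorname{\mathbf{Im}}(\Pi_0\Pi_1)$. Apply the Dolph--Chebyshev filter of \cor{dolph_cheby} to $\frac{W-I}{2}$, and split the junk component with the circular indicator $\operatorname{\mathbf{Ind}}$ into two parts:
\begin{enumerate}
\item the part with $\abs{\theta}\geq\delta$, which is bounded by $\xi$ using item 2 of \cor{dolph_cheby};
\item the part with $\abs{\theta}<\delta$, which is bounded by $\frac{\delta}{2}\norm{A^{-1\dagger}x}$ times the prefactor, using item 3 of \lem{eff_gap} together with $\abs{\operatorname{\mathbf{F}}}\leq1$.
\end{enumerate}
Choose $\delta=\operatorname{\mathbf{\Theta}}\bigl(\epsilon/\alpha_{A^{-1\dagger}\ket{x}}\bigr)$ and $\xi=\operatorname{\mathbf{\Theta}}(\epsilon)$ so that the unnormalized error is at most $\epsilon$ times the amplitude $\frac{\beta\norm{x}}{\norm{x}^2+\beta^2}$. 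Since $\mu$ is constant, $\beta/\sqrt{\norm{x}^2+\beta^2}$ is bounded below by a constant, so this choice costs $\operatorname{\mathbf{O}}\bigl(\frac{\alpha_{A^{-1\dagger}\ket{x}}}{\epsilon}\log\frac1\epsilon\bigr)$ queries. The normalized error is $\operatorname{\mathbf{O}}(\epsilon)$ by the estimate $\norm{y/\norm{y}-x/\norm{x}}\leq2\norm{y-x}/\norm{x}$. The success amplitude is at least $(1-\epsilon)/(\mu+\mu^{-1})$, a constant, so fixed-point amplitude amplification with $\operatorname{\mathbf{O}}(1)$ rounds pushes the success probability above $\frac12$.

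\textbf{Leading constant.} Take the $\gamma$-split of the error budget and optimize it via the Lambert-$W$ result of \append{lambertw}. Per filter this gives a cost of $\frac{2\alpha_{A^{-1\dagger}\ket{x}}}{\epsilon}\ln\frac1\epsilon$ to leading order: as $\alpha_x\to\infty$ and $\mu\to1$, the factor $\sqrt{\alpha_x^2+1}/\alpha_x\to1$ and the $\mu$-dependent factors cancel. As $\mu\to1$ the amplitude lower bound tends to $\frac12$. The fixed-point schedule with $p_{\text{fail}}=\frac12$ then needs the smallest odd integer at least $2\ln(2\sqrt2)=3\ln2\approx2.08$, which is $3$. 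Multiplying gives the constant $6$.

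\textbf{Norm estimation.} Run the exponential search over $\beta_j=2^j$ with $\delta=\operatorname{\mathbf{\Theta}}(1/\alpha_{A^{-1\dagger}\ket{x}})$ and $\xi=\frac1{16}$. This keeps the success amplitude within a constant window around $\frac{\beta\norm{x}}{\norm{x}^2+\beta^2}$, and it uses $\beta\geq1$ to bound $\sqrt{\beta^2+1}\leq\sqrt2\beta$. The two cases separate:
\begin{enumerate}
\item if $2^j\leq\norm{x}/16$, the amplitude is at most $\frac{17}{128}$;
\item if $\norm{x}/2\leq2^j\leq\norm{x}$, the amplitude is at least $\frac{20}{128}$.
\end{enumerate}
Constant-precision amplitude estimation with failure probability $p_j=1/(6\cdot2^j)$ distinguishes these cases. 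A union bound gives overall success probability at least $\frac23$, and the first threshold crossing $j_0$ yields a factor-$4$ estimate of $\norm{x}$. Summing the costs $\operatorname{\mathbf{O}}(\alpha_{A^{-1\dagger}\ket{x}}\log(6\cdot2^j))$ over $j\leq\operatorname{\mathbf{Ceil}}(\log_2\alpha_{A^{-1\dagger}\ket{x}})$ gives $\operatorname{\mathbf{O}}(\alpha_{A^{-1\dagger}\ket{x}}\log^2\alpha_{A^{-1\dagger}\ket{x}})$. The search range suffices because $\norm{x}\leq\norm{A^{-1\dagger}x}/\norm{x}\leq\alpha_{A^{-1\dagger}\ket{x}}$, using $\norm{x}^2\leq\norm{A^{-1\dagger}x}$.

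The main obstacle is the effective-gap step, item 3 of \lem{eff_gap}: the junk component must be expressed exactly as $\Pi_0\Pi_1$ applied to a vector whose $\Pi_1$-part has norm $\frac{\sqrt{\beta^2+1}}{\norm{x}^2+\beta^2}\norm{A^{-1\dagger}x}$, which is done via $G_1A^{-1\dagger}x$. Everything else is bookkeeping of constants.
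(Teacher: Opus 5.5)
Your proposal is correct and follows the paper's own argument essentially step for step: the $\beta=\alpha_x$ block encoding with its constrained orthogonal decomposition, the indicator split of the junk term (bounded via the Dolph--Chebyshev $\xi$ for $\abs{\theta}\geq\delta$ and via the effective gap lemma applied to $G_1A^{-1\dagger}x$ for $\abs{\theta}<\delta$), the Lambert-$W$ budget split combined with three rounds of fixed-point amplification giving the constant $6$, and the exponential search over $\beta_j=2^j$ with amplitude estimation at failure probabilities $1/(6\cdot 2^j)$. You also make explicit two justifications the paper leaves implicit, both correct: the search range via $\norm{x}^2\leq\norm{A^{-1\dagger}x}$, and the normalized error via $2\norm{y-x}/\norm{x}$.
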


%% file: discuss.tex
In this work, we have developed algorithms that solve the quantum linear system problem with query complexity independent of the spectral condition number.
We have proposed an algorithmic template based on effective truncation of the linear system, which converts conventional quantum linear system solvers into beyond-$\kappa$ solvers. Our template applies to any solver satisfying the strong truncation property, or the weak truncation property with optimal query complexity for initial state preparation.
Additionally, we have presented a beyond-$\kappa$ solver using eigenstate filtering with effective gap.
We have shown that the filtering-based solver is particularly simple and achieves a favorable runtime prefactor when the solution norm is known.
When it is not, we have designed a similarly straightforward solution norm estimator---making our result fully self-contained.

We have introduced the effective condition number $\kappa_{\mathrm{eff}}$ as a measure of complexity for the truncation-based solvers, and established a family of upper bounds $\keff\leq\left(\frac{\norm{(A)_{\mathbf{sv}}^{-t}\ket{x}}}
{\epsilon}\right)^{\frac{1}{t}}$ in terms of the singular transformation of the coefficient matrix $A$, normalized solution vector $\ket{x}$ and adjustable parameter $0<t<\infty$.
The case where $t$ is a positive integer is especially straightforward, as the transformation reduces to $(A)_{\mathbf{sv}}^{-t}=(A^\dagger A)^{-t/2}$ for even $t$ and $(A)_{\mathbf{sv}}^{-t}=A^{-1\dagger}(A^\dagger A)^{-(t-1)/2}$ for odd $t$ and can be analyzed via standard linear algebraic techniques.
In particular, choosing $t=1$ gives $(A)_{\mathbf{sv}}^{-t}=A^{-1\dagger}$.
This choice introduces a polynomial dependence on $1/\epsilon$, which can be improved to a polylogarithmic scaling by setting $t=\operatorname{\mathbf{\Theta}}\left(\frac{\log\left(\frac{1}{\epsilon}\right)}{\log\log\left(\frac{1}{\epsilon}\right)}\right)$.
Finding efficiently computable bounds on such transformations or alternative tight estimates of $\kappa_{\mathrm{eff}}$ remains an interesting open problem.
Our effective condition number $\kappa_{\mathrm{eff}}$ captures both the strength and the limitation of truncation-based solvers. Its reciprocal is the smallest cutoff for which the truncated linear system can still be solved to accuracy $\epsilon$.
However, it remains plausible that solvers based on different techniques could achieve scaling below $\keff$. An interesting direction is to either construct such solvers or prove a lower bound ruling out their existence.
Along similar lines, the filtering-based solver is much simpler but only achieves a scaling with $\frac{\norm{A^{-1\dagger}\ket{x}}}{\epsilon}\geq\kappa_{\mathrm{eff}}$. Whether this gap can be closed by a simple algorithm is an interesting question for future work.

In analyzing our solvers, we have focused primarily on the standard input model, where the coefficient matrix $A$ and initial state $\ket{b}$ are accessed through separate oracles. However, we have also defined the affine dilation model, which block encodes $A$ and $\ket{b}$ jointly, allowing further improvements in query complexity.
To demonstrate that this improvement can be substantial, we now give a toy example where the separation between the two models diverges as $\kappa \to \infty$. Specifically, consider $A=\left[\begin{smallmatrix}
        \frac{1}{\kappa} & 0\\
        0 & 1
    \end{smallmatrix}\right]$, $\ket{b}=\left[\begin{smallmatrix}
        \frac{1}{\kappa}\\
        \sqrt{1-\frac{1}{\kappa^2}}
    \end{smallmatrix}\right]$, $x=A^{-1}\ket{b}=\left[\begin{smallmatrix}
        1\\
        \sqrt{1-\frac{1}{\kappa^2}}
    \end{smallmatrix}\right]$, $A^{-1\dagger}x=\left[\begin{smallmatrix}
        \kappa\\
        \sqrt{1-\frac{1}{\kappa^2}}
    \end{smallmatrix}\right]$.
In the standard model, we can block encode $A$ with normalization $1$ and prepare the normalized state $\ket{b}$. Then the quantity $\norm{A^{-1\dagger}\ket{x}}$ which determines the runtime of beyond-$\kappa$ solvers scales like
\begin{equation}
    \norm{A^{-1\dagger}\ket{x}}
    =\operatorname{\mathbf{\Theta}}\left(\frac{\sqrt{\kappa^2+1-\frac{1}{\kappa^2}}}{\sqrt{1+1-\frac{1}{\kappa^2}}}\right)
    =\operatorname{\mathbf{\Theta}}(\kappa).
\end{equation}
In contrast, we can define $\widetilde{A}=\frac{1}{\widetilde{\alpha}}
    \left[\begin{smallmatrix}
        MA & -M\ket{b}\\
        0 & 1
    \end{smallmatrix}\right]$, $\ket{\widetilde{b}}=\left[\begin{smallmatrix}
        0\\
        1
    \end{smallmatrix}\right]$, $\widetilde{x}=\widetilde{A}^{-1}\ket{\widetilde{b}}
    =\widetilde{\alpha}\left[\begin{smallmatrix}
        x\\
        1
    \end{smallmatrix}\right]$, and 
    $\widetilde{A}^{-1\dagger}\widetilde{x}
    =\widetilde{\alpha}^2\left[\begin{smallmatrix}
        M^{-1\dagger}A^{-1\dagger}x\\
        \norm{x}^2+1
    \end{smallmatrix}\right]$
in the affine dilation model, where $M=\left[\begin{smallmatrix}
    \kappa & 0\\
    0 & 1
\end{smallmatrix}\right]$ is a diagonal matrix and $\widetilde{\alpha}=\operatorname{\mathbf{\Theta}}(1)$ is the normalization factor for block encoding, giving 
\begin{equation}
    \norm{\widetilde{A}^{-1\dagger}\ket{\widetilde{x}}}
    =\widetilde{\alpha}\frac{\sqrt{\norm{M^{-1\dagger}A^{-1\dagger}x}^2+\left(\norm{x}^2+1\right)^2}}{\sqrt{\norm{x}^2+1}}
    =\operatorname{\mathbf{\Theta}}(1).
\end{equation}
This shows that the affine dilation model can offer an arbitrarily large advantage as the condition number $\kappa$ grows.

The preceding example is in fact a special case of a broader strategy known as preconditioning, which solves the modified linear system $MAx=M\ket{b}$ within the affine dilation model. More generally, our framework opens a new regime for designing preconditioners for quantum linear system solvers: rather than targeting the condition number $\kappa(A)$ of matrix $A$, the purpose of this new preconditioning strategy is to redistribute how $\ket{b}$ aligns with the left singular vectors of $A$, thereby reducing the effective condition number. This can yield substantial complexity improvements even when $\kappa(MA)$ itself remains large, since the gains are captured entirely by the effective condition number.

For sparse linear systems,~\cite{li2025new} constructs a preconditioner using diagonal matrices to balance the row norms.  
Interestingly, the instance-aware parameter $\operatorname{\mathbf{ET}} = \sum_i p_i^2\, d_i$ introduced in~\cite{li2025new} can be recovered within our framework through diagonal preconditioning. Here $p_i = \abs{\bra{i}(AA^\dagger)^{-1}|b\rangle}$ is the $i$-th component of vector $(AA^\dagger)^{-1}|b\rangle$ in absolute value and $d_i = \|a_i\|^2 + |b_i|^2$ is the squared norm of the $i$-th row of the augmented matrix $\begin{bmatrix} A & -|b\rangle \end{bmatrix}$. The sparse access model normalizes each row by its norm, implicitly block encoding the reweighted system $\begin{bmatrix} MA & -M|b\rangle \end{bmatrix}$ with diagonal matrix $M = \operatorname{\mathbf{Diag}}([\ldots,1/\sqrt{d_i},\ldots])$. A direct calculation then gives $(MA)^{-1\dagger}x = M^{-1}(AA^\dagger)^{-1}|b\rangle$, so that $\left\|(MA)^{-1\dagger}x\right\|^2 = \sum_i p_i^2\, d_i = \operatorname{\mathbf{ET}}$ recovers the result of~\cite{li2025new}.  
Moreover, by designing new input models, one can unlock preconditioning techniques beyond those available in the existing framework. To illustrate, consider a setting where $\norm{\ketbra{b}{b}A}$ and $\norm{(I - \ketbra{b}{b})A}$ are substantially different. In this case, one can introduce an input model that block encodes $\ketbra{b}{b}A$ and $(I - \ketbra{b}{b})A$ separately, each with its own normalization factor. This effectively preconditions the linear system with respect to the orthogonal decomposition $\operatorname{\mathbf{Im}}(\ketbra{b}{b})\obot\operatorname{\mathbf{Ker}}(\ketbra{b}{b})$. We expect that a systematic exploration of this preconditioning regime, identifying which classical preconditioners can be lifted into suitable input models and what complexity gains they unlock, is a fruitful direction for future work.
As another example,~\cite{Orsucci2021solvingclassesof} showed that for positive-definite matrices, the worst-case runtime lower bound of a quantum linear system solver remains linear in $\kappa$, matching the indefinite case. Nevertheless, they identified broad classes of positive-definite systems where this lower bound can be circumvented, achieving a quadratic improvement to $\sqrt{\kappa}$ via more efficient block-encodings or Cholesky-type preconditioners. It would be interesting to investigate whether our beyond-$\kappa$ techniques can be combined with such structural assumptions to yield further speedups.

Setting aside these technical refinements, we believe the most impactful direction for future work is to identify practical applications where the condition number is prohibitively large for conventional solvers, yet beyond-$\kappa$ solvers remain efficient. Our filtering-based solver is particularly appealing in this regard---its simplicity makes it a prime candidate for compilation into explicit quantum circuits. A concrete resource estimate for running this algorithm on a fault-tolerant quantum computer, even for a small-scale instance, would be an important step toward demonstrating practical quantum advantage for solving linear systems.

%% file: li.tex
The beyond-$\kappa$ solver of Li~\cite{li2025new} was originally proposed for solving sparse linear systems. In this appendix, we show that it can be extended to work in the standard input model, where the coefficient matrix is accessed through a block encoding oracle and the initial vector is prepared by a unitary. 
To simplify the discussion, we use the precise values of vector norms such as $\norm{x}$ and $\norm{A^{-1\dagger}x}$, rather than their known upper and lower bounds.

Specifically, suppose that $A$ is a matrix with $\norm{A}\leq1$ block encoded by $O_A$, and $\ket{b}$ be a normalized quantum state prepared by $O_b$. Assume that $A$ is invertible and define the solution vector by $x=A^{-1}\ket{b}$. Then we know from~\cor{gap_block} that, for any scalar $\beta>0$, the augmented matrix
\begin{equation}
    \frac{1}{\sqrt{\beta^2+1}}\begin{bmatrix}
        \beta A & -\ket{b}
    \end{bmatrix}
    =G_1^\dagger G_0
\end{equation}
can be block encoded in terms of an overlap of isometries using $1$ query to $O_A$ and $O_b$. Moreover, this induces a constrained orthogonal decomposition
\begin{equation}
\begin{aligned}
    G_0\begin{bmatrix}
        0\\
        1
    \end{bmatrix}&=\frac{\beta}{\norm{x}^2+\beta^2}
    G_0\begin{bmatrix}
        x\\
        \beta
    \end{bmatrix}
    -\frac{\sqrt{\beta^2+1}}{\norm{x}^2+\beta^2}
    \Pi_0 \Pi_1
    G_1A^{-1\dagger}x\\
    &\in\left(\operatorname{\mathbf{Im}}(\Pi_0)\cap\operatorname{\mathbf{Ker}}(\Pi_1)\right)
    \obot\operatorname{\mathbf{Im}}\left(\Pi_0 \Pi_1\right),
\end{aligned}
\end{equation}
where $\Pi_0=G_0G_0^\dagger$ and $\Pi_1=G_1G_1^\dagger$.

Here, the goal is to preserve the first term that encodes the solution vector $x$ while suppressing the second term using the effective gap lemma. In the main text, we show that this can be achieved with QSVT-based eigenstate filtering over the unit circle. In~\cite{li2025new}, this was instead realized using quantum phase estimation. More concretely,~\cite[Lemma 3.4]{li2025new} shows that applying quantum phase estimation produces the normalized state $\frac{1}{\sqrt{\norm{x}^2+\beta^2}}G_0\begin{bmatrix}
    x\\
    \beta
\end{bmatrix}$ with accuracy $\xi$ in trace distance, using
\begin{equation}
    \operatorname{\mathbf{O}}\left(\frac{\norm{x}^2+\beta^2}{\beta^2\xi^2}
    \frac{\sqrt{\beta^2+1}}{\norm{x}^2+\beta^2}\norm{A^{-1\dagger}x}\right)
    =\operatorname{\mathbf{O}}\left(\frac{1}{\beta\xi^2}
    \norm{A^{-1\dagger}x}\right)
\end{equation}
queries to the oracles $O_A$ and $O_b$ and succeeding with probability 
\begin{equation}
    \operatorname{\mathbf{\Theta}}\left(\frac{\beta^2}{\norm{x}^2+\beta^2}\right),
\end{equation}
assuming $\beta=\operatorname{\mathbf{\Omega}}(1)$.

To ensure that the normalized solution $\ket{x}=\frac{x}{\norm{x}}$ is produced with accuracy $\epsilon$, Li chose $\xi=\operatorname{\mathbf{\Theta}}\left(\frac{\norm{x}^2\epsilon}{\norm{x}^2+\beta^2}\right)$.
Then, he boosted the success probability close to unity by repeating the quantum phase estimation
\begin{equation}
    \operatorname{\mathbf{\Theta}}\left(\frac{\norm{x}^2+\beta^2}{\beta^2}\frac{\norm{x}^2+\beta^2}{\norm{x}^2}\right)
    =\operatorname{\mathbf{\Theta}}\left(\frac{\norm{x}^2}{\beta^2}+\frac{\beta^2}{\norm{x}^2}\right)
\end{equation}
times. This leads to a total query complexity of
\begin{equation}
    \operatorname{\mathbf{O}}\left(\left(\frac{\norm{x}^2}{\beta^2}+\frac{\beta^2}{\norm{x}^2}\right)
    \frac{\left(\norm{x}^2+\beta^2\right)^2}{\norm{x}^4\epsilon^2}
    \frac{\norm{A^{-1\dagger}x}}{\beta}
    \right)
    =\operatorname{\mathbf{O}}\left(\left(\frac{\norm{x}^2}{\beta^2}+\frac{\beta^6}{\norm{x}^6}\right)
    \frac{\norm{A^{-1\dagger}x}}{\beta\epsilon^2}
    \right).
\end{equation}

When the solution norm $\norm{x}$ is known, Li set $\beta=\operatorname{\mathbf{\Theta}}(\norm{x})$, simplifying the query cost to
\begin{equation}
    \operatorname{\mathbf{O}}\left(
    \frac{\norm{A^{-1\dagger}x}}{\norm{x}\epsilon^2}
    \right).
\end{equation}
Otherwise, he chose $\beta=1$. Under the convention that $\norm{x}\geq1$, the query complexity then becomes
\begin{equation}
    \operatorname{\mathbf{O}}\left(
    \frac{\norm{A^{-1\dagger}x}\norm{x}^2}{\epsilon^2}
    \right).
\end{equation}

%% file: affine.tex
To facilitate a direct comparison with prior quantum linear system solvers, we have focused on the standard input model when presenting our beyond-$\kappa$ solvers in the main text.
In this appendix, we analyze beyond-$\kappa$ solvers in the affine dilation model, which allows further refinement of the query complexity.

\subsection{Analysis of truncation-based solvers in the affine dilation model}
\label{append:affine_trunc}
In the affine dilation model, the input to a quantum linear system is accessed through the block encoding
\begin{equation}
    \widetilde{A}=\frac{1}{\widetilde{\alpha}}\begin{bmatrix}
        A & -\ket{b}\\
        0 & c
    \end{bmatrix}.
\end{equation}
whereas the dilated initial vector is $\ket{\widetilde{b}}=\begin{bmatrix}
    0\\
    1
\end{bmatrix}$.
Here, we assume that $\ket{b}$ is normalized after a potential rescaling of the block encoding. For truncation-based solvers, we take $c>0$ so that the dilated matrix is invertible. The normalization factor $\widetilde{\alpha}$ then satisfies $\widetilde{\alpha}\geq\max\{\norm{A},c,1\}$.

After matrix inversion, we have 
\begin{equation}
    \widetilde{A}^{-1}=\widetilde{\alpha}\begin{bmatrix}
        A^{-1} & \frac{1}{c}x\\
        0 & \frac{1}{c}
    \end{bmatrix},
\end{equation}
which encodes the dilated solution vector
\begin{equation}
    \widetilde{x}=\widetilde{A}^{-1}\ket{\widetilde{b}}
    =\frac{\widetilde{\alpha}}{c}\begin{bmatrix}
        x\\
        1
    \end{bmatrix}.
\end{equation}
The performance of the truncation-based solvers is still determined by the effective condition number from \thm{keff}. The only difference is that $\kappa_{\mathrm{eff}}$ is now defined in terms of the dilated $\widetilde{A}$ and $\ket{\widetilde{b}}$, and we denote it $\widetilde{\kappa}_{\mathrm{eff}}$.

It is generally challenging to give a closed-form formula for $\widetilde{\kappa}_{\mathrm{eff}}$ analogous to the one in \thm{keff}, since the singular value decomposition of the dilated matrix $\widetilde{A}$ does not relate simply to that of $A$. However, the effective condition number admits the succinct upper bound $\widetilde{\kappa}_{\mathrm{eff}}\leq\frac{\norm{\widetilde{A}^{-1\dagger}\widetilde{x}}}{\norm{\widetilde{x}}\epsilon}$
same as in the standard input model, where $\norm{\widetilde{A}^{-1\dagger}\widetilde{x}}$ can be evaluated as
\begin{equation}
    \norm{\widetilde{A}^{-1\dagger}\widetilde{x}}
    =\widetilde{\alpha}^2\norm{\begin{bmatrix}
        \frac{A^{-1\dagger}x}{c}\\[.25em]
        \frac{\norm{x}^2+1}{c^2}
    \end{bmatrix}}
    =\widetilde{\alpha}^2\sqrt{\frac{\norm{A^{-1\dagger}x}^2}{c^2}+\frac{\big(\norm{x}^2+1\big)^2}{c^4}}.
\end{equation}
This implies that the effective condition number for the dilated system is at most
\begin{equation}
    \widetilde{\kappa}_{\mathrm{eff}}\leq\frac{\widetilde{\alpha}}{\epsilon}\frac{1}{\sqrt{\norm{x}^2+1}}
    \sqrt{\norm{A^{-1\dagger}x}^2+\frac{\big(\norm{x}^2+1\big)^2}{c^2}}.
\end{equation}

\subsection{Analysis of filtering-based solver in the affine dilation model}
\label{append:affine_filtering}
For the filtering-based solver, we instead consider the augmented matrix
\begin{equation}
    \widetilde{A}=\frac{1}{\alpha}
    \begin{bmatrix}
        A & -\ket{b}
    \end{bmatrix}.
\end{equation}
This corresponds to setting $c = 0$ in the generic affine dilation model and removing the redundant zero blocks at the bottom.

Suppose that this is block encoded by an overlap of isometries as $\widetilde{A}=\frac{1}{\alpha}
    \begin{bmatrix}
        A & -\ket{b}
    \end{bmatrix}=G_1^\dagger G_0$.
Then by an analysis similar to that in the main text, we have the constrained orthogonal decomposition
\begin{equation}
\begin{aligned}
    G_0\begin{bmatrix}
        0\\
        1
    \end{bmatrix}&=\frac{1}{\norm{x}^2+1}
    G_0\begin{bmatrix}
        x\\
        1
    \end{bmatrix}
    -\frac{\alpha}{\norm{x}^2+1}
    \Pi_0 \Pi_1
    G_1A^{-1\dagger}x\\
    &\in\left(\operatorname{\mathbf{Im}}(\Pi_0)\cap\operatorname{\mathbf{Ker}}(\Pi_1)\right)
    \obot\operatorname{\mathbf{Im}}\left(\Pi_0 \Pi_1\right),
\end{aligned}
\end{equation}
where $\Pi_0=G_0G_0^\dagger$ and $\Pi_1=G_1G_1^\dagger$.
Performing filtering $\operatorname{\mathbf{F}}_{\mathbf{sv}}\left(\frac{W-I}{2}\right)$ with $W=\left(2\Pi_0-I\right)\left(I-2\Pi_1\right)$ then gives
\begin{equation}
    \operatorname{\mathbf{F}}_{\mathbf{sv}}\left(\frac{W-I}{2}\right)G_0\begin{bmatrix}
        0\\
        1
    \end{bmatrix}=\frac{1}{\norm{x}^2+1}
    G_0\begin{bmatrix}
        x\\
        1
    \end{bmatrix}
    -\frac{\alpha}{\norm{x}^2+1}
    \operatorname{\mathbf{F}}_{\mathbf{sv}}\left(\frac{W-I}{2}\right)
    \Pi_0 \Pi_1
    G_1A^{-1\dagger}x,
\end{equation}
where
\begin{equation}
    \norm{\operatorname{\mathbf{F}}_{\mathbf{sv}}\left(\frac{W-I}{2}\right)
    \Pi_0 \Pi_1G_1A^{-1\dagger}x}
    \leq\frac{\delta}{2}
    \norm{A^{-1\dagger}x}.
\end{equation}

Assuming that the solution norm $\norm{x}=\norm{A^{-1}\ket{b}}$ is estimated to a constant multiplicative accuracy, we set
\begin{equation}
    \delta=\operatorname{\mathbf{\Theta}}\left(\frac{\norm{x}\epsilon}{\alpha\norm{A^{-1\dagger}x}}\right).
\end{equation}
Note that $\delta=\operatorname{\mathbf{O}}(1)$,
since $\epsilon=\operatorname{\mathbf{O}}(1)$ and
\begin{equation}
    \frac{\alpha\norm{A^{-1\dagger}x}}{\norm{x}}
    =\alpha\norm{A^{-1\dagger}\ket{x}}
    \geq\norm{A}\norm{A^{-1\dagger}\ket{x}}
    =\norm{A^\dagger}\norm{A^{-1\dagger}\ket{x}}
    \geq\norm{A^\dagger A^{-1\dagger}\ket{x}}
    =1.
\end{equation}
Finally, we perform $\operatorname{\mathbf{O}}\left(\frac{\norm{x}^2+1}{\norm{x}}\right)$
steps of amplitude amplification, so the total query complexity is determined by
\begin{equation}
    \frac{\alpha}{\epsilon}\frac{\norm{x}^2+1}{\norm{x}^2}\norm{A^{-1\dagger}x}
\end{equation}
up to logarithmic factors.

\subsection{Complexity comparison}
\label{append:affine_compare}
We now compare the runtimes of the truncation-based and filtering-based solvers in the affine dilation model. In particular, we show that the query complexity of the former is no worse than that of the latter up to logarithmic factors, as in the standard input model.

To this end, suppose we start with a block encoding of the augmented matrix $\frac{1}{\alpha}\begin{bmatrix}
        A & -\ket{b}
    \end{bmatrix}=G_1^\dagger G_0$. Then, we can construct a block encoding of the affine dilation
\begin{equation}
    \widetilde{A}=\frac{1}{\sqrt{\alpha^2+c^2}}\begin{bmatrix}
        A & -\ket{b}\\
        0 & c
    \end{bmatrix}
\end{equation}
with normalization factor $\sqrt{\alpha^2+c^2}$. Indeed, this can be achieved through the following factorization of the dilated matrix as an overlap of isometries
\begin{equation}
    \frac{1}{\sqrt{\alpha^2+c^2}}\begin{bmatrix}
        A & -\ket{b}\\
        0 & c
    \end{bmatrix}
    =\begin{bmatrix}
        G_1^\dagger & 0\\
        0 & \begin{bmatrix}
            1 & 0 & 0
        \end{bmatrix}
    \end{bmatrix}
    \frac{1}{\sqrt{\alpha^2+c^2}}
    \begin{bmatrix}
        \alpha G_0\\
        c\begin{bmatrix}
            0 & 1\\
            I & 0\\
            0 & 0
        \end{bmatrix}
    \end{bmatrix}.
\end{equation}
Then the effective condition number is bounded by
\begin{equation}
    \widetilde{\kappa}_{\mathrm{eff}}\leq\frac{\sqrt{\alpha^2+c^2}}{\epsilon\sqrt{\norm{x}^2+1}}
    \sqrt{\norm{A^{-1\dagger}x}^2+\frac{\big(\norm{x}^2+1\big)^2}{c^2}}
    \leq\frac{\sqrt{\alpha^2+c^2}}{\epsilon}\frac{\norm{A^{-1\dagger}x}}{\sqrt{\norm{x}^2+1}}
    +\frac{\sqrt{\alpha^2+c^2}}{\epsilon}\frac{\sqrt{\norm{x}^2+1}}{c}.
\end{equation}

Let us choose
\begin{equation}
    c=\alpha\frac{\norm{x}^2+1}{\norm{x}}
    \geq2\alpha,
\end{equation}
so that
\begin{equation}
    \sqrt{\alpha^2+c^2}
    =\operatorname{\mathbf{O}}\left(\alpha\frac{\norm{x}^2+1}{\norm{x}}\right).
\end{equation}
Meanwhile, by our block encoding convention,
\begin{equation}
    1=\norm{\ket{b}}=\norm{Ax}
    \leq\alpha\norm{x}
    \leq\alpha^2\norm{A^{-1\dagger}x},
\end{equation}
which gives
\begin{equation}
    c=\frac{\norm{x}^2+1}{\frac{\norm{x}}{\alpha}}
    \geq\frac{\norm{x}^2+1}{\norm{A^{-1\dagger}x}}.
\end{equation}

Altogether,
\begin{equation}
    \widetilde{\kappa}_{\mathrm{eff}}
    =\operatorname{\mathbf{O}}\left(\frac{\alpha}{\epsilon}\frac{\norm{x}^2+1}{\norm{x}}\frac{\norm{A^{-1\dagger}x}}{\sqrt{\norm{x}^2+1}}\right)
    =\operatorname{\mathbf{O}}\left(\frac{\alpha}{\epsilon}\frac{\norm{x}^2+1}{\norm{x}^2}\norm{A^{-1\dagger}x}\right).
\end{equation}
Hence, with this specific choice of $c$ in the affine dilation model, the runtime of the truncation-based algorithm matches that of the filtering-based method up to logarithmic factors.

%% file: trunc_general.tex
In the main text, we have defined the strong and weak truncation property (\defn{strong_trunc} and \defn{weak_trunc}) for a quantum linear system solver. Informally, these capture the requirement that if one supplies $\alpha$ as the condition number parameter, the solver correctly inverts singular value components above $1/\alpha$ and discards those below. To obtain a beyond-$\kappa$ solver, we have required that the truncation properties hold when $\alpha$ is at least the effective condition number from~\thm{keff}.
More generally, we can define the strong truncation property for an arbitrary $\alpha$ as follows.

\begin{definition}[Generalized strong truncation property]
\label{defn:strong_trunc_gen}
Let $A$ be a matrix with $\norm{A}\leq1$ block encoded by $O_A$, and $\ket{b}$ be a normalized quantum state prepared by $O_b$.
Assume that $A$ is invertible and denote $x=A^{-1}\ket{b}$ and $\ket{x}=\frac{x}{\norm{x}}$.
Suppose that $U\left(O_A,O_b,\kappa,\epsilon\right)$ is a quantum linear system solver such that with success probability $>\frac{1}{2}$,
\begin{equation}
\begin{aligned}
    \norm{\frac{\left(\bra{0}\otimes I\right)U\left(O_A,O_b,\kappa,\epsilon\right)\ket{00}}{\norm{\left(\bra{0}\otimes I\right)U\left(O_A,O_b,\kappa,\epsilon\right)\ket{00}}}-\ket{x}}
    &=\operatorname{\mathbf{O}}(\epsilon),
\end{aligned}
\end{equation}
provided that $\kappa\geq\norm{A^{-1}}$ and $\epsilon>0$.

We say that the solver satisfies the \emph{generalized strong truncation property} if with success probability $>\frac{1}{2}$,
\begin{equation}
\begin{aligned}
    \norm{\frac{\left(\bra{0}\otimes I\right)U\left(O_A,O_b,\alpha,\epsilon\right)\ket{00}}
    {\norm{\left(\bra{0}\otimes I\right)U\left(O_A,O_b,\alpha,\epsilon\right)\ket{00}}}
    -\frac{\Pi_{\text{right},\left[\alpha^{-1},1\right]}x}{\norm{\Pi_{\text{right},\left[\alpha^{-1},1\right]}x}}}
    &=\operatorname{\mathbf{O}}(\epsilon),
\end{aligned}
\end{equation}
provided that $1\leq\alpha\leq\kappa$, where $\Pi_{\text{right},\left[\alpha^{-1},1\right]}$ is the right singular vector projection of $A$ associated with singular values from $\left[\alpha^{-1},1\right]$.
\end{definition}

When configured with $\alpha$, VTAA correctly inverts all singular values above $1/\alpha$. Combining this with a standard amplitude amplification yields the truncated solution state supported on singular values above $1/\alpha$. We thus conclude that VTAA satisfies the generalized strong truncation property with a query overhead from amplification. Similarly, one can show that VTAA satisfies the following generalized version of weak truncation property.

\begin{definition}[Generalized weak truncation property]
\label{defn:weak_trunc_gen}
Let $A$ be a matrix with $\norm{A}\leq1$ block encoded by $O_A$, and $\ket{b}$ be a normalized quantum state prepared by $O_b$.
Assume that $A$ is invertible and denote $x=A^{-1}\ket{b}$ and $\ket{x}=\frac{x}{\norm{x}}$.
Suppose that $U\left(O_A,O_b,\kappa,\epsilon\right)$ is a quantum linear system solver such that with success probability $>\frac{1}{2}$,
\begin{equation}
\begin{aligned}
    \norm{\frac{\left(\bra{0}\otimes I\right)U\left(O_A,O_b,\kappa,\epsilon\right)\ket{00}}{\norm{\left(\bra{0}\otimes I\right)U\left(O_A,O_b,\kappa,\epsilon\right)\ket{00}}}-\ket{x}}
    &=\operatorname{\mathbf{O}}(\epsilon),
\end{aligned}
\end{equation}
provided that $\kappa\geq\norm{A^{-1}}$ and $\epsilon>0$.

We say that the solver satisfies the \emph{generalized weak truncation property} if with success probability $>\frac{1}{2}$,
\begin{equation}
\begin{aligned}
    \norm{\frac{\left(\bra{0}\otimes I\right)U\left(O_A,O_b,\alpha,\epsilon\right)\ket{00}}
    {\norm{\left(\bra{0}\otimes I\right)U\left(O_A,O_b,\alpha,\epsilon\right)\ket{00}}}
    -\ket{x}}
    &=\operatorname{\mathbf{O}}(\epsilon),
\end{aligned}
\end{equation}
provided that $1\leq\alpha\leq\kappa$, and $\ket{b}\in\operatorname{\mathbf{Im}}\left(\Pi_{\text{left},\left[\alpha^{-1},1\right]}\right)$ where $\Pi_{\text{left},\left[\alpha^{-1},1\right]}$ is the left singular vector projection of $A$ associated with singular values from $\left[\alpha^{-1},1\right]$.
\end{definition}

%% file: lambertw.tex
In this appendix, we optimize the constant prefactor of the query complexity for eigenstate filtering with effective gap.

Recall from the main text (\sec{gap_solver}) that, in order to prepare the normalized solution state with accuracy $\epsilon$, we choose the effective gap $\delta$ and filtering accuracy $\xi$ so that
\begin{equation}
    \frac{\sqrt{\beta^2+1}}{\norm{x}^2+\beta^2}\norm{A^{-1\dagger}x}\frac{\delta}{2}
    +\frac{\norm{x}}{\sqrt{\norm{x}^2+\beta^2}}\xi
    \leq\frac{\beta\norm{x}}{\norm{x}^2+\beta^2}\epsilon.
\end{equation}
Correspondingly, the query complexity of filtering is 
\begin{equation}
    \operatorname{\mathbf{Ceil}}\left(\frac{4}{\delta}\ln\left(\frac{2}{\xi}\right)\right).
\end{equation}
A naive strategy is to divide the error budget into equal halves. Here, we propose an alternative approach whose leading-order query complexity has a smaller constant prefactor.

To this end, we consider the function
\begin{equation}
    g(\gamma)=\frac{1}{\gamma\epsilon}\ln\left(\frac{1}{(1-\gamma)\epsilon}\right),
\end{equation}
where $\epsilon$ is a fixed positive number and $0<\gamma<1$ is a tunable parameter. See~\fig{lambertw} for an illustration of its qualitative behavior. This function corresponds to the query cost when we allocate a $\gamma$-fraction of the error budget to the effective gap and the remaining fraction to the filtering accuracy parameter. Intuitively, since $\ln(\cdot)$ grows much slower than linearly, we expect the minimum of $g(\gamma)$ to occur at some $\gamma=\gamma_0$ close to $1$.

\begin{figure}[t]
	\centering
\includegraphics[width=0.45\textwidth]{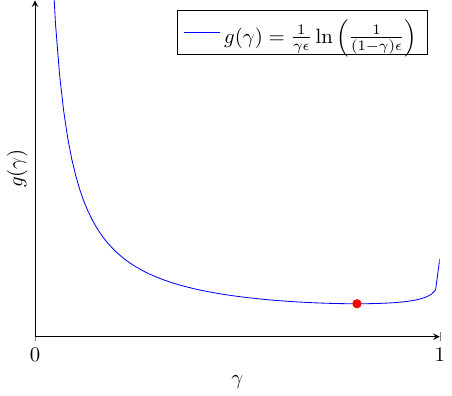}
\caption{Qualitative behavior of $g(\gamma)$ for constant-prefactor optimization for the filtering-based solver.}
\label{fig:lambertw}
\end{figure}

To rigorize this intuition, we differentiate $g(\gamma)$ to get
\begin{equation}
    \frac{\operatorname{d}}{\operatorname{d}\gamma}g(\gamma)
    =-\frac{(\gamma-1)\ln\left(\frac{1}{\epsilon(1-\gamma)}\right)+\gamma}{\epsilon(\gamma-1)\gamma^2}.
\end{equation}
Thus $g(\gamma)$ is minimized at $\gamma=\gamma_0$ where $\frac{\operatorname{d}}{\operatorname{d}\gamma}g(\gamma_0)=0
    \Leftrightarrow
    (\gamma_0-1)\ln\left(\frac{1}{\epsilon(1-\gamma_0)}\right)+\gamma_0=0$.
Defining $\upsilon_0=\gamma_0-1$ with $-1<\upsilon_0<0$, we have $\upsilon_0\ln\left(\frac{1}{-\epsilon\upsilon_0}\right)+\upsilon_0+1=0
    \Leftrightarrow
    \frac{1}{\upsilon_0}e^{\frac{1}{\upsilon_0}}
    =-\frac{\epsilon}{e}$.
Finally, letting $\zeta_0=\frac{1}{\upsilon_0}$ with $-\infty<\zeta_0<-1$, we get $\zeta_0 e^{\zeta_0}=-\frac{\epsilon}{e}
    \Leftrightarrow
    \zeta_0=\operatorname{\mathbf{W}_{-1}}\left(-\frac{\epsilon}{e}\right)$,
for $\operatorname{\mathbf{W}_{-1}}(\cdot)$ the ($-1$)-branch of the Lambert-W function. 
This implies
\begin{equation}
    \gamma_0=1+\frac{1}{\operatorname{\mathbf{W}_{-1}}\left(-\frac{\epsilon}{e}\right)}.
\end{equation}

Now let us bound $g(\gamma)$ at $\gamma=\gamma_0$. To this end, we rewrite $\operatorname{\mathbf{W}_{-1}}\left(-\frac{\epsilon}{e}\right)
    =\operatorname{\mathbf{W}_{-1}}\left(-e^{\ln\left(\frac{\epsilon}{e}\right)}\right)
    =\operatorname{\mathbf{W}_{-1}}\left(-e^{-\ln\left(\frac{1}{\epsilon}\right)-1}\right)$.
Applying bounds on the ($-1$)-branch of the Lambert-W function~\cite{LambertW}, we have
\begin{equation}
    -1-\sqrt{2\ln\left(\frac{1}{\epsilon}\right)}-\ln\left(\frac{1}{\epsilon}\right)
    <\operatorname{\mathbf{W}_{-1}}\left(-e^{-\ln\left(\frac{1}{\epsilon}\right)-1}\right)
    <-1-\sqrt{2\ln\left(\frac{1}{\epsilon}\right)}-\frac{2}{3}\ln\left(\frac{1}{\epsilon}\right).
\end{equation}
This implies
\begin{equation}
    g(\gamma_0)
    <\left(\frac{1+\sqrt{2\ln\left(\frac{1}{\epsilon}\right)}+\frac{2}{3}\ln\left(\frac{1}{\epsilon}\right)}{\sqrt{2\ln\left(\frac{1}{\epsilon}\right)}+\frac{2}{3}\ln\left(\frac{1}{\epsilon}\right)}\right)
    \frac{1}{\epsilon}
    \ln\left(\frac{1+\sqrt{2\ln\left(\frac{1}{\epsilon}\right)}+\ln\left(\frac{1}{\epsilon}\right)}{\epsilon}\right),
\end{equation}
which reduces to
\begin{equation}
    \frac{1}{\epsilon}\ln\left(\frac{1}{\epsilon}\right)
\end{equation}
when higher order terms are dropped.

\begin{proposition}
For any $\epsilon>0$, the function
\begin{equation}
    g(\gamma)=\frac{1}{\gamma\epsilon}\ln\left(\frac{1}{(1-\gamma)\epsilon}\right),\qquad
    0<\gamma<1
\end{equation}
is minimized at
\begin{equation}
    \operatorname{argmin}g(\gamma)=\gamma_0=1+\frac{1}{\operatorname{\mathbf{W}_{-1}}\left(-\frac{\epsilon}{e}\right)},
\end{equation}
where $\operatorname{\mathbf{W}_{-1}}(\cdot)$ is the ($-1$)-branch of the Lambert-W function, with minimum
\begin{equation}
\begin{aligned}
    \min g(\gamma)
    =g(\gamma_0)
    &=\frac{\operatorname{\mathbf{W}_{-1}}\left(-\frac{\epsilon}{e}\right)}{1+\operatorname{\mathbf{W}_{-1}}\left(-\frac{\epsilon}{e}\right)}
    \frac{1}{\epsilon}
    \ln\left(\frac{-\operatorname{\mathbf{W}_{-1}}\left(-\frac{\epsilon}{e}\right)}{\epsilon}\right)\\
    &<\left(\frac{1+\sqrt{2\ln\left(\frac{1}{\epsilon}\right)}+\frac{2}{3}\ln\left(\frac{1}{\epsilon}\right)}{\sqrt{2\ln\left(\frac{1}{\epsilon}\right)}+\frac{2}{3}\ln\left(\frac{1}{\epsilon}\right)}\right)
    \frac{1}{\epsilon}
    \ln\left(\frac{1+\sqrt{2\ln\left(\frac{1}{\epsilon}\right)}+\ln\left(\frac{1}{\epsilon}\right)}{\epsilon}\right).
\end{aligned}
\end{equation}
Moreover, the minimum value reduces to
\begin{equation}
    \frac{1}{\epsilon}\ln\left(\frac{1}{\epsilon}\right)
\end{equation}
when higher order terms are omitted.
\end{proposition}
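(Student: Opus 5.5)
Note first that the proposition only makes sense when $g$ has a minimizer of this form. For $\epsilon\geq1$ the term $\ln(1/((1-\gamma)\epsilon))$ can be negative near $\gamma=0$, and then $g\to-\infty$ as $\gamma\to0^+$. I will therefore take $0<\epsilon<1$, which is the regime of interest and makes $-\epsilon/e\in(-1/e,0)$, so $\mathbf{W}_{-1}$ is defined there. The plan is to establish the first-order condition, show it characterizes a unique global minimizer, and then plug in the Lambert-W bounds already quoted from~\cite{LambertW}.

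\textbf{Step 1: limits and unique critical point.} Set $L(\gamma)=\ln\frac{1}{(1-\gamma)\epsilon}$, which is positive on $(0,1)$ when $\epsilon<1$. As $\gamma\to0^+$ we have $L\to\ln(1/\epsilon)>0$, so $g\to+\infty$. As $\gamma\to1^-$ we have $L\to+\infty$, so again $g\to+\infty$. Hence $g$ attains its minimum at an interior critical point.

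From the derivative already computed, the sign of $g'$ is the sign of $h(\gamma):=(\gamma-1)L(\gamma)+\gamma$, because the denominator $\epsilon(\gamma-1)\gamma^2$ is negative and the numerator carries a minus sign. We compute
\[
h'(\gamma)=L(\gamma)+(\gamma-1)\cdot\frac{1}{1-\gamma}+1=L(\gamma)>0,
\]
so $h$ is strictly increasing. At the endpoints, $h(0^+)=-\ln(1/\epsilon)<0$, and $h(1^-)=1$ since $(1-\gamma)\ln\frac{1}{1-\gamma}\to0$. So $h$ has exactly one zero $\gamma_0\in(0,1)$, with $g$ decreasing before $\gamma_0$ and increasing after it. This makes $\gamma_0$ the unique global minimizer.

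\textbf{Step 2: identify the zero.} The substitution $\upsilon=\gamma-1$, $\zeta=1/\upsilon$ carried out before the proposition turns $h(\gamma_0)=0$ into $\zeta_0e^{\zeta_0}=-\epsilon/e$. The constraint $\gamma_0\in(0,1)$ forces $\zeta_0<-1$, which selects the $-1$ branch and gives $\gamma_0=1+1/\mathbf{W}_{-1}(-\epsilon/e)$.

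\textbf{Step 3: the minimum value.} Write $W=\mathbf{W}_{-1}(-\epsilon/e)$. Then $1/\gamma_0=W/(1+W)$ and $1/(1-\gamma_0)=-W$, so
\[
g(\gamma_0)=\frac{W}{1+W}\cdot\frac{1}{\epsilon}\ln\!\left(\frac{-W}{\epsilon}\right),
\]
which is the claimed exact formula. Let $u=\ln(1/\epsilon)$, so that $-\epsilon/e=-e^{-u-1}$. The cited inequality is
\[
-1-\sqrt{2u}-u<W<-1-\sqrt{2u}-\tfrac{2}{3}u .
\]

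\textbf{Step 4: monotonicity in $W$ (main obstacle).} The bound on $g(\gamma_0)$ requires applying the two sides of this inequality to different factors, so I need each factor to be monotone in $W$. On $W<-1$:
\begin{itemize}
\item The prefactor $W/(1+W)=1-1/(1+W)$ has $1+W<0$ and is increasing in $1+W$ there, so the prefactor is decreasing in $W$. The \emph{upper} bound on $W$ therefore gives an upper bound on the prefactor, namely $\bigl(1+\sqrt{2u}+\tfrac23u\bigr)/\bigl(\sqrt{2u}+\tfrac23u\bigr)$.
\item The logarithm $\ln(-W/\epsilon)$ is decreasing in $W$, so the \emph{lower} bound on $W$ gives $\ln\bigl((1+\sqrt{2u}+u)/\epsilon\bigr)$.
\end{itemize}
Both factors are positive, so multiplying the two upper bounds yields the stated inequality.

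\textbf{Step 5: leading order.} As $\epsilon\to0$ the prefactor tends to $1$. The argument of the logarithm is $(1+\sqrt{2u}+u)/\epsilon$, and $\ln\bigl((1+\sqrt{2u}+u)/\epsilon\bigr)=\ln(1/\epsilon)+O(\log\log(1/\epsilon))$. Dropping these lower-order terms gives $\frac1\epsilon\ln\frac1\epsilon$.
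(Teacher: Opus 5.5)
Your argument is correct and follows the paper's route: you get the critical point from $g'=0$, apply the substitution $\zeta=1/(\gamma-1)$ to reach $\zeta e^{\zeta}=-\epsilon/e$, and then insert the cited $\mathbf{W}_{-1}$ bounds. You also add two things the paper leaves implicit: global minimality via $h'=L>0$, and the restriction $0<\epsilon<1$, which is needed for $\mathbf{W}_{-1}(-\epsilon/e)$ to be real with $\gamma_0\in(0,1)$.

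There is one slip in Step 4. On $W<-1$ the prefactor $W/(1+W)=1-1/(1+W)$ is \emph{increasing} in $W$ (as your own phrase ``increasing in $1+W$'' says), not decreasing. That is exactly why the upper bound on $W$ bounds the prefactor from above.

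A minor point on the restriction: at $\epsilon=1$ exactly, $g$ does not tend to $-\infty$ as $\gamma\to0^+$. Instead it decreases to the unattained infimum $1$. Excluding that case is still correct.
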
